 \def\ps@pprintTitle{%
      \let\@oddhead\@empty
      \let\@evenhead\@empty
      \def\@oddfoot{\footnotesize\itshape
       \hfill\today}%
      \let\@evenfoot\@oddfoot}
\theoremstyle{definition}
\newtheorem{theorem}{Theorem}
\tikzset{
  basic/.style  = {draw, text width=2cm, drop shadow, font=\sffamily, rectangle},
  root/.style   = {basic, rounded corners=2pt, thin, align=center,
                   fill=green!30},
  level 2/.style = {basic, rounded corners=6pt, thin,align=center, fill=green!60,
                   text width=4em},
  level 3/.style = {basic, thin, align=left, fill=pink!60, text width=1.5em}
}
\newcommand{\relation}[3]
{
	\draw (#3.south) -- +(0,-#1) -| ($ (#2.north) $)
}
\newcommand{\relationDashed}[3]
{
	\draw[dashed, line width=0.3mm] (#3.south) -- +(0,-#1) -| ($ (#2.north) $)
}
\begin{document}

\begin{frontmatter}

\title{Enhancements in cross-temporal forecast reconciliation, with an application to solar irradiance forecasts}


\author[mymainaddress]{Tommaso Di Fonzo\corref{mycorrespondingauthor}}
\cortext[mycorrespondingauthor]{Corresponding author}
\ead{difonzo@stat.unipd.it}

\author[mymainaddress]{Daniele Girolimetto}

\address[mymainaddress]{Department of Statistical Sciences, University of Padua, \\ Via C. Battisti 241, 35121 Padova (Italy)}

\begin{abstract}
In recent works by \cite{Yang2017cs, Yang2017te}, 
and \cite{Yagli2019}, 
geographical, 
temporal, and sequential deterministic 
reconciliation of hierarchical photovoltaic (PV) power generation have been considered for a simulated PV dataset in California. 
In the first two cases, the reconciliations are carried out in spatial and temporal domains separately.  To further improve forecasting accuracy, in the third case these two reconciliation approaches are sequentially applied.
During the replication of the forecasting experiment,
some issues emerged about non-negativity and coherence (in space and/or in time) of the sequentially reconciled forecasts.
Furthermore, while the accuracy improvement of the considered approaches over the benchmark persistence forecasts is clearly visible at any data granularity, we argue that an even better performance may be obtained by a thorough exploitation of cross-temporal hierarchies.
To this end, in this paper the cross-temporal point forecast reconciliation approach is applied to generate non-negative, fully coherent (both in space and time) 
forecasts. In particular, (i) some useful relationships between two-step, iterative and simultaneous cross-temporal reconciliation procedures are for the first time established, (ii) non-negativity issues of the final reconciled forecasts are discussed and correctly dealt with in a simple and effective way, and (iii) the most recent cross-temporal reconciliation approaches proposed in literature are adopted.
The normalised Root Mean Square Error is used to measure forecasting accuracy, and a statistical multiple comparison procedure is performed to rank the approaches. Besides assuring 
full coherence, and non-negativity of the reconciled forecasts, the results show that for the considered dataset, cross-temporal forecast reconciliation significantly improves on 
the sequential procedures proposed by \cite{Yagli2019}, at any cross-sectional level of the hierarchy and for any temporal granularity.

\end{abstract}

\begin{keyword}
Forecasting\sep Cross-temporal forecast reconciliation\sep Sequential and iterative approaches\sep Non-negative forecasts\sep Photovoltaic power generation
\end{keyword}

\end{frontmatter}

\newpage

\section{Introduction}
\label{sec:1}

Traditional electricity relies heavily on fossil fuels such as coal and natural gas. Not only are they bad for the environment, but they are also limited resources.
Net-zero emissions by 2050 are crucial to achieve the core Paris Agreement\footnote{The Paris Agreement is a legally binding international treaty on climate change. 
Its goal is to limit global warming to well below 2, preferably to 1.5 degrees Celsius, compared to pre-industrial levels.	To achieve this long-term temperature goal, countries aim to reach global peaking of greenhouse gas emissions as soon as possible to achieve a climate neutral world (i.e., with net-zero greenhouse gas emissions) by mid-century.
}
goals of a global average temperature rise of 1.5 degrees Celsius (\citealp{UN2015}), and this in turn can only be achieved if global greenhouse gas emissions are halved by the end of this decade (\citealp{EC2019}, \citealp{UN2022}).
Solar power is one of the crucial production methods in the move to clean energy, and as economies of scale drive prices down, its importance will undoubtedly increase.
The deployment of solar-power generation is causing total installed capacity to increase at a very high pace. 
Eurostat reports that the European Union added 18,224.8 MW of net capacity in 2020, compared to its 16,146.9 MW increase in 2019, registering a growth of 12.9\%. At the end of 2020, the EU's photovoltaic base stood at 136,136.6 MW, which is a 15\% year-on-year increase (\citealp{Eurobserver2022}, p.~14).

Using solar resource as a stable source of energy is not an easy task. Estimating the solar energy potential is a key target to ensure its management in a reliable and efficient way for its integration into an electrical power grid.
 The prediction of solar irradiation despite its variability is particularly important as it is a precondition for (i) the management of the solar photovoltaic production through storage systems to reduce the impact of the intermittent nature of the solar resource, and (ii) the integration of the solar resource into a power grid in order to meet the local energy needs and to cope with the load fluctuations.
Understanding of the need for short, mid or long term prediction (e.g., 1h, 6h or a day ahead forecasting) 
is growing as utilities and grid operators gain experience in dealing with solar-power sources.
Increasing spatial and temporal resolution of the available forecasting models
would enable grid operators to better forecast how much solar energy will be added to the grid. These efforts will improve the management of solar power’s variability and uncertainty, enabling its more reliable and cost-effective integration onto the grid.


Solar forecasting is a fast-growing sub-domain of energy forecasting (\citealp{Yang2020verification}, p.~20).
We agree with the claim of (\citealp{Yang2022}, p.~7) that a ``common misconception is that the novelty in solar forecasting should be solely revolved around forecasting methodology. Indeed, forecasting methodology is an important aspect, but it is never the only one''.
Nevertheless, we think that a clear assessment of the available forecasting procedures may help in moving forward the fronteer of knowledge of this fundamental topic, generating beneficial effects on the activities of practitioners.

A major goal for solar forecasting is to provide information on future photovoltaic (PV) power generation at different locations, time scales, and horizons to power system operators (\citealp{Yang2022}).
In recent works by \cite{Yang2017cs, Yang2017te}, 
and \cite{Yagli2019}, 
geographical, 
temporal, and sequential deterministic 
reconciliation of hierarchical PV power generation have been considered for a simulated PV dataset in California. 
In the first two cases, the reconciliations are carried out in spatial and temporal domains separately.  To further improve prediction accuracy, in the third case these two reconciliation approaches are sequentially applied.
During the replication of the forecasting experiment\footnote{The forecasting experiment grounds on the documentation files and data made available by \cite{Yang2017te}.}, some issues emerged about non-negativity and coherency (in space and/or in time) of the sequentially reconciled forecasts.
Furthermore, 
while the accuracy improvement of the considered approaches over the benchmark persistence forecasts is clearly visible at any data granularity, we think that an even better performance may be obtained by a thorough exploitation of cross-temporal hierarchies.

Cross-temporal forecast reconciliation is a rather recent sub-domain of the general theme `forecast reconciliation' (\citealp{Fliedner2001}, \citealp{Athanasopoulos2009}, \citealp{Hyndman2011}). The idea of exploiting the aggregation relationships valid both in space (cross-sectional coherency), and for different time granularities (temporal coherency) to improve the forecast accuracy of base forecasts for a hierarchical time series, was discussed in the groundbreaking papers by \cite{Kourentzes2019}, \cite{Yagli2019}, \cite{Spiliotis2020b}, and \cite{Punia2020}. The relevance of the topic has been further confirmed by (i) the ISF 2020 keynote speech by prof. \cite{HyndmanISF2020}, (ii) the entry `cross-temporal hierarchies' in the encyclopedic review on the theory and practice of forecasting by \cite{Petropoulos2022}\footnote{The entry was written by N. Kourentzes.}, as well as (iii) the section entitled `cross-temporal hierarchies' in a recent review paper on aggregation and hierarchical approaches for demand forecasting in supply chains (\citealp{Babai2022}).
These important contributions paved the way for the findings of \cite{DiFonzoGiro2021}, 
who showed the potentiality of a number of new cross-temporal reconciliation approaches, and discussed fundamental feasibility issues, offering 
insights to the practitioner wishing to evaluate the effort-to-benefit ratio of using this forecasting device\footnote{All the forecast reconciliation procedures considered in \cite{DiFonzoGiro2021}, and in this paper, are available in the \texttt{R} package \texttt{FoReco} (\citealp{FoReco2022}).}.

In this paper, cross-temporal point forecast reconciliation is applied to generate non-negative, fully coherent (both in space and time) 
forecasts of PV generated power. In particular, (i) some useful relationships between two-step, iterative and simultaneous cross-temporal reconciliation procedures are for the first time established, (ii) non-negativity issues of the final reconciled forecasts are discussed and correctly dealt with in a simple and effective way, and (iii) the most recent cross-temporal reconconciliation approaches proposed in literature are adopted. The iterative and simultaneous approaches by \cite{DiFonzoGiro2021}, and the heuristic cross-temporal procedure proposed by \cite{Kourentzes2019} are applied to the base forecasts with forecast horizon of 1 day, of PV generated power at different time granularities (1 hour to 1 day), of a hierarchy consisting of 324 series along 3 levels. The normalised Root Mean Square Error is used to measure forecasting accuracy, and a statistical multiple comparison procedure is performed to rank the approaches.

The paper is organized as follows. The deterministic (point) cross-temporal forecast reconciliation framework is described in section \ref{sec:ctrecap}, and in section \ref{sec:techdigr} some useful connections between apparently different approaches are shown.
The forecasting experiment of \cite{Yagli2019} is replicated and discussed in section \ref{sec:repass}, and the performance of the newly proposed forecasting approaches is presented in section \ref{extanalysis}. Conclusions follow in section \ref{sec:conclusion}.

\newpage

\section{Cross-temporal point forecast reconciliation}
\label{sec:ctrecap}

\subsection{Problem definition}
\label{subsec:probdef}
To begin with, consider the very simple example of a two-level cross-sectional hierarchy, where the top variable $X$ is equal to the sum of two bottom series\footnote{In this paper, we consider only genuine hierarchical/grouped time series, that share the same top- and bottom-level variables. The treatment of a general linearly constrained multiple time series is discussed in \cite{DiFonzoGiro2021}.}, $W$ and $Z$. Further, assume that the highest time frequency the variables are observed at is quarterly, which means that by simple non-overlapping temporal aggregation of quarterly time series, semi-annual and annual time series may be obtained as well.
Figure \ref{toyquarterlyhierarchy} gives a visual representation of such cross-temporal hierarchy for a time cycle of 1 year. 

\begin{figure}[ht]
	\centering
	\resizebox{0.49\linewidth}{!}{
			\begin{tikzpicture}[baseline=(current  bounding  box.center),
			every node/.append style={shape=ellipse,
				draw=black},
			minimum width=1.2cm,
			minimum height=1.2cm]

			\node at (0, 0) (5Q1){$w^{[1]}_1$};
			\node at (1.5, 0) (5Q2){$w^{[1]}_2$};
			\node at (3, 0) (5Q3){$w^{[1]}_3$};
			\node at (4.5, 0) (5Q4){$w^{[1]}_4$};
			\node at (0.75, 1.8) (5SA1){$w^{[2]}_1$};
			\node at (3.75, 1.8) (5SA2){$w^{[2]}_2$};
			\node at (2.25, 3.6) (5A){$w^{[4]}_1$};
			\relation{0.2}{5Q1}{5SA1};
			\relation{0.2}{5Q2}{5SA1};
			\relation{0.2}{5Q3}{5SA2};
			\relation{0.2}{5Q4}{5SA2};
			\relation{0.2}{5SA1}{5A};
			\relation{0.2}{5SA2}{5A};
			\node[draw=none, align=center] at (0.25,3.6) {\Large $W$};
			
			\node at (8, 0) (6Q1){$z^{[1]}_1$};
			\node at (9.5, 0) (6Q2){$z^{[1]}_2$};
			\node at (11, 0) (6Q3){$z^{[1]}_3$};
			\node at (12.5, 0) (6Q4){$z^{[1]}_4$};
			\node at (8.75, 1.8) (6SA1){$z^{[2]}_1$};
			\node at (11.75, 1.8) (6SA2){$z^{[2]}_2$};
			\node at (10.25, 3.6) (6A){$z^{[4]}_1$};
			\relation{0.2}{6Q1}{6SA1};
			\relation{0.2}{6Q2}{6SA1};
			\relation{0.2}{6Q3}{6SA2};
			\relation{0.2}{6Q4}{6SA2};
			\relation{0.2}{6SA1}{6A};
			\relation{0.2}{6SA2}{6A};
			\node[draw=none, align=center] at (8.25,3.6) {\Large $Z$};
			
			\node at (4, 6.5) (7Q1){$x^{[1]}_1$};
			\node at (5.5, 6.5) (7Q2){$x^{[1]}_2$};
			\node at (7, 6.5) (7Q3){$x^{[1]}_3$};
			\node at (8.5, 6.5) (7Q4){$x^{[1]}_4$};
			\node at (4.75, 8.3) (7SA1){$x^{[2]}_1$};
			\node at (7.75, 8.3) (7SA2){$x^{[2]}_2$};
			\node at (6.25, 10.1) (7A){$x^{[4]}_1$};
			\relation{0.2}{7Q1}{7SA1};
			\relation{0.2}{7Q2}{7SA1};
			\relation{0.2}{7Q3}{7SA2};
			\relation{0.2}{7Q4}{7SA2};
			\relation{0.2}{7SA1}{7A};
			\relation{0.2}{7SA2}{7A};
			\node[draw=none, align=center] at (4.25,10.1) {\Large $X$};
			
			\node[draw=black, shape=rectangle, dashed, line width=0.3mm, fit={(5Q1) (5Q2) (5Q3) (5Q4) (5SA1) (5SA2) (5A)},
			minimum width=7cm,
			minimum height=5.5cm](U1){};
			\node[draw=black, shape=rectangle, dashed, line width=0.3mm, fit={(6Q1) (6Q2) (6Q3) (6Q4) (6SA1) (6SA2) (6A)},
			minimum width=7cm,
			minimum height=5.5cm](U2){};
			\node[draw=black, shape=rectangle, dashed, line width=0.3mm, fit={(7Q1) (7Q2) (7Q3) (7Q4) (7SA1) (7SA2) (7A)},
			minimum width=7cm,
			minimum height=5.5cm](T){};
			\relationDashed{0.3}{U1}{T};
			\relationDashed{0.3}{U2}{T};
		\end{tikzpicture}}\hfill\vline\hfill\resizebox{0.49\linewidth}{!}{\begin{tikzpicture}[baseline=(current  bounding  box.center),
	every node/.append style={shape=ellipse,
		draw=black},
	minimum width=1.2cm,
	minimum height=1.2cm]

	\node at (0, 0) (5Q1){$w^{[1]}_1$};
	\node at (1.5, 0) (5Q2){$w^{[1]}_2$};
	\node at (3, 0) (5Q3){$w^{[1]}_3$};
	\node at (4.5, 0) (5Q4){$w^{[1]}_4$};
	\node at (2.25, 1.8) (5A){$w^{[4]}_1$};
	\relation{0.2}{5Q1}{5A};
	\relation{0.2}{5Q2}{5A};
	\relation{0.2}{5Q3}{5A};
	\relation{0.2}{5Q4}{5A};
	\node[draw=none, align=center] at (0.25,1.8) {\Large $W$};
	
	\node at (8, 0) (6Q1){$z^{[1]}_1$};
	\node at (9.5, 0) (6Q2){$z^{[1]}_2$};
	\node at (11, 0) (6Q3){$z^{[1]}_3$};
	\node at (12.5, 0) (6Q4){$z^{[1]}_4$};
	\node at (10.25, 1.8) (6A){$z^{[4]}_1$};
	\relation{0.2}{6Q1}{6A};
	\relation{0.2}{6Q2}{6A};
	\relation{0.2}{6Q3}{6A};
	\relation{0.2}{6Q4}{6A};
	\node[draw=none, align=center] at (8.25,1.8) {\Large $Z$};
	
	\node at (4, 4.5) (7Q1){$x^{[1]}_1$};
			\node at (5.5, 4.5) (7Q2){$x^{[1]}_2$};
			\node at (7, 4.5) (7Q3){$x^{[1]}_3$};
			\node at (8.5, 4.5) (7Q4){$x^{[1]}_4$};
	\node at (6.25, 6.3) (7A){$x^{[4]}_1$};
	\relation{0.2}{7Q1}{7A};
	\relation{0.2}{7Q2}{7A};
	\relation{0.2}{7Q3}{7A};
	\relation{0.2}{7Q4}{7A};
	\node[draw=none, align=center] at (4.25,6.3) {\Large $X$};
	
	\node[draw=black, shape=rectangle, dashed, line width=0.3mm, fit={(5Q1) (5Q2) (5Q3) (5Q4) (5A)},
	minimum width=7cm,
	minimum height=3.5cm](U1){};
	\node[draw=black, shape=rectangle, dashed, line width=0.3mm, fit={(6Q1) (6Q2) (6Q3) (6Q4) (6A)},
	minimum width=7cm,
	minimum height=3.5cm](U2){};
	\node[draw=black, shape=rectangle, dashed, line width=0.3mm, fit={(7Q1) (7Q2) (7Q3) (7Q4) (7A)},
	minimum width=7cm,
	minimum height=3.5cm](T){};
	\relationDashed{0.3}{U1}{T};
	\relationDashed{0.3}{U2}{T};
	\end{tikzpicture}
	}
	\caption{Complete (left) and reduced (right) cross-temporal hierarchies for a quarterly two-level hierarchical time series.}
        \label{toyquarterlyhierarchy}
\end{figure}
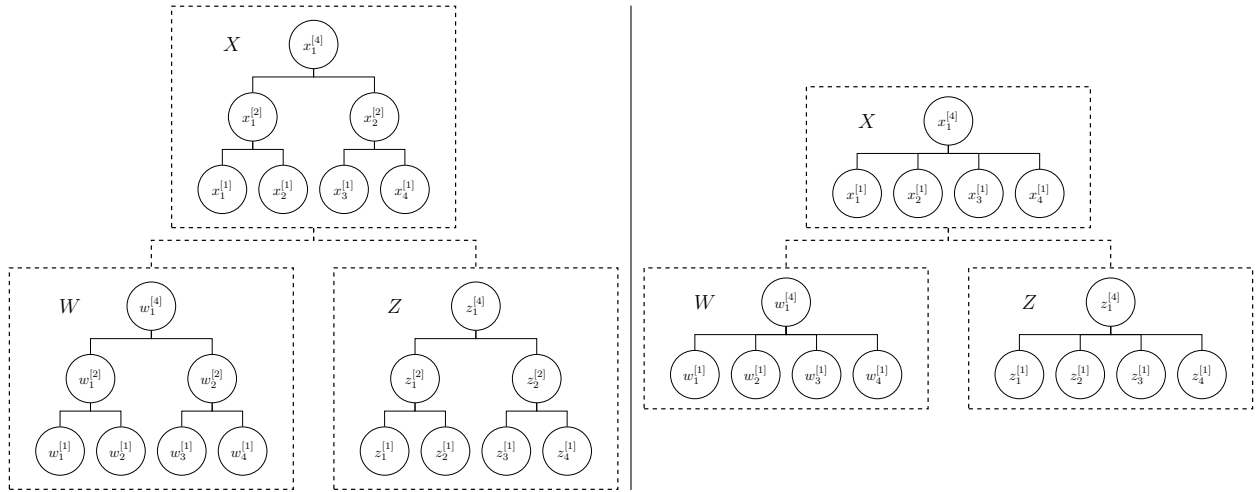

\noindent While the left panel of figure \ref{toyquarterlyhierarchy} shows the complete cross-temporal hierarchy, consisting of all aggregated temporal granularities which may be defined starting from quarterly data (i.e, semi-annual, and annual), in the right panel is represented its reduced version, where only the highest (quarterly) and the lowest (annual) time granularities are respectively considered.
The square boxes in the figure denote the nodes of a two-level cross-sectional (contemporaneous) hierarchy, while the circles denote the nodes of the temporal hierarchies.
In agreement with a standard notation in the temporal forecast reconciliation literature (\citealp{Athanasopoulos2017}, \citealp{Yang2017te}), the superscript $[k]$ denotes the temporal aggregation order for each time granularity, i.e. annual ($k=4$), semi-annual ($k=2$), and quarterly ($k=1$).
The cross-sectional hierarchy is described by the aggregation relationship $X = W + Z$, which is valid for any temporal aggregation order $k \in \mathcal{K}=\{4,2,1\}$ (i.e., $x_{\tau}^{[k]} = w_{\tau}^{[k]} + z_{\tau}^{[k]}$, $\tau = 1,\ldots, 4/k$).
Assuming $v$ alternatively equal to $x, w, z$, the temporal hierarchies
describing the relationships between different time granularities of a single time series, may be expressed as
$$
\begin{array}{rcll}
v^{[4]}_1 & = & v^{[2]}_1 + v^{[2]}_2 & \text{{\small the annual value is the sum of the two semi-annual values},} \\
v^{[2]}_1 & = & v^{[1]}_1 + v^{[1]}_2 & \text{{\small the first half-year value is the sum of the first two quarters' values},} \\
v^{[2]}_2 & = & v^{[1]}_3 + v^{[1]}_4 & \text{{\small the second half-year value is the sum of the last two quarters' values},} \\
\multicolumn{3}{l}{\text{and thus}}\\
v^{[4]}_1 & = & v^{[1]}_1 + v^{[1]}_2 + v^{[1]}_3 + v^{[1]}_4 & \text{{\small the annual value is the sum of the four quarterly values}.}
\end{array}
$$


All the relationships so far can be expressed in more compact form, using matrix notation.
Denote by
$\yvet_{\tau}^{[k]} = \begin{bmatrix} x_{\tau}^{[k]} \; w_{\tau}^{[k]} \; z_{\tau}^{[k]} \end{bmatrix}'$
the $(3 \times 1)$ vector of the observations for temporal granularity $k \in \mathcal{K}$ of the variables forming the cross-sectional hierarchy at time $\tau=1,\ldots,4/k$. The cross-sectional aggregation relationships can be described as follows:
\[
x_{\tau}^{[k]} = \Cvet \begin{bmatrix} w_{\tau}^{[k]} \\ z_{\tau}^{[k]} \end{bmatrix}, \quad
\yvet_{\tau}^{[k]} = \Svet \begin{bmatrix} w_{\tau}^{[k]} \\ z_{\tau}^{[k]} \end{bmatrix}, \quad
\Uvet' \yvet_{\tau}^{[k]} = 0, \qquad
\tau=1,\ldots,\displaystyle\frac{4}{k}, \quad
k \in \mathcal{K},
\]
where $\Cvet$ is the cross-sectional aggregation matrix, $\Svet$ is the cross-sectional summing matrix, and $\Uvet'$ is the zero-constraints matrix expressing the cross-sectional constraints in homogeneous form, respectively given by: 
\begin{small}
\[
\Cvet = \begin{bmatrix}
	1 & 1 \\
\end{bmatrix} ,\quad
\Svet = \begin{bmatrix}
	\Cvet \\ \Ivet_2
\end{bmatrix} =
\begin{bmatrix}
1 & 1 \\
1 & 0 \\
0 & 1
\end{bmatrix} , \quad
\Uvet' = \begin{bmatrix}
	\Ivet_1 \; \; -\Cvet
\end{bmatrix} = 
\begin{bmatrix}
1 & -1 & -1 \\
\end{bmatrix} ,
\]
\end{small}

\noindent with $\Ivet_{l}$ denoting the identity matrix of order $l$.
The complete temporal aggregation relationships linking the values of a single variable (say $V$) at different time granularities, may in turn be expressed through matrices
\begin{small}
\[
\Kvet = \begin{bmatrix}
	1 & 1 & 1 & 1 \\
	1 & 1 & 0 & 0 \\
	0 & 0 & 1 & 1
\end{bmatrix} ,\quad
\Rvet = \begin{bmatrix}
	\Kvet \\ \Ivet_4
\end{bmatrix} =
\begin{bmatrix}
	1 & 1 & 1 & 1 \\
	1 & 1 & 0 & 0 \\
	0 & 0 & 1 & 1 \\
	1 & 0 & 0 & 0 \\
	0 & 1 & 0 & 0 \\
	0 & 0 & 1 & 0 \\
	0 & 0 & 0 & 1
\end{bmatrix} , \quad
\Zvet' = \begin{bmatrix}
	\Ivet_3 \; \; -\Kvet
\end{bmatrix} = 
\begin{bmatrix}
	1 & 0 & 0 & -1 & -1 & -1 & -1 \\
	0 & 1 & 0 & -1 & -1 &  0 &  0 \\
	0 & 0 & 1 &  0 &  0 & -1 & -1
\end{bmatrix} ,
\]
\end{small}

\noindent where $\Kvet$ is the temporal aggregation matrix, $\Rvet$ is the temporal summing matrix, and $\Zvet'$ is the temporal zero-constraints matrix expressing the temporal constraints in homogeneous form. It follows that:
\begin{small}
$$
\begin{bmatrix}
	v^{[4]}_1  \\ v^{[2]}_1 \\ v^{[2]}_2
\end{bmatrix} =
\Kvet\begin{bmatrix}
	v^{[1]}_1 \\ v^{[1]}_2 \\ v^{[1]}_3 \\ v^{[1]}_4
\end{bmatrix} =
\Kvet \vvet^{[1]}, \quad
\vvet = \begin{bmatrix}
	v^{[4]}_1  \\ v^{[2]}_1 \\ v^{[2]}_2 \\
	v^{[1]}_1 \\ v^{[1]}_2 \\ v^{[1]}_3 \\ v^{[1]}_4
\end{bmatrix} =
\Rvet \vvet^{[1]}, \quad
\Zvet'\vvet=\Zerovet_{(3 \times 1)} .
$$
\end{small}

\noindent Reduced temporal hierarchies can be obtained by simply eliminating the appropriate rows\footnote{This option is available in the \texttt{R} package \texttt{FoReco} (\citealp{FoReco2022}).} from matrix $\Kvet$.

To simultaneously consider cross-sectional and temporal aggregation relationships, all the nodes of the complete cross-temporal hierarchy in figure \ref{toyquarterlyhierarchy} can be expressed in terms of the quarterly time series $w_{\tau}^{[1]}$ and $z_{\tau}^{[1]}$, $\tau=1,\ldots,4$, according to the \textit{structural} representation:
\begin{equation}
	\label{ctstructuralform}
	\yvet = \Fvet\bvet^{[1]},
\end{equation}
that is
\begin{small}
	\[
\underbrace{\begin{bmatrix}
		x_1^{[4]} \\ x_1^{[2]} \\ x_2^{[2]} \\
		x_1^{[1]} \\ x_2^{[1]} \\ x_3^{[1]} \\ x_4^{[1]} \\
		w_1^{[4]} \\ w_1^{[2]} \\ w_2^{[2]} \\
		w_1^{[1]} \\ w_2^{[1]} \\ w_3^{[1]} \\ w_4^{[1]} \\
		z_1^{[4]} \\ z_1^{[2]} \\ z_2^{[2]} \\
		z_1^{[1]} \\ z_2^{[1]} \\ z_3^{[1]} \\ z_4^{[1]} \\
	\end{bmatrix}}_{\yvet}
	=
\underbrace{\begin{bmatrix}
		1  &  1  &  1  &  1  &  1  &  1  &  1  &  1\\
		1  &  1  &  0  &  0  &  1  &  1  &  0  &  0\\
		0  &  0  &  1  &  1  &  0  &  0  &  1  &  1\\
		1  &  0  &  0  &  0  &  1  &  0  &  0  &  0\\
		0  &  1  &  0  &  0  &  0  &  1  &  0  &  0\\
		0  &  0  &  1  &  0  &  0  &  0  &  1  &  0\\
		0  &  0  &  0  &  1  &  0  &  0  &  0  &  1\\
		1  &  1  &  1  &  1  &  0  &  0  &  0  &  0\\
		1  &  1  &  0  &  0  &  0  &  0  &  0  &  0\\
		0  &  0  &  1  &  1  &  0  &  0  &  0  &  0\\
		1  &  0  &  0  &  0  &  0  &  0  &  0  &  0\\
		0  &  1  &  0  &  0  &  0  &  0  &  0  &  0\\
		0  &  0  &  1  &  0  &  0  &  0  &  0  &  0\\
		0  &  0  &  0  &  1  &  0  &  0  &  0  &  0\\
		0  &  0  &  0  &  0  &  1  &  1  &  1  &  1\\
		0  &  0  &  0  &  0  &  1  &  1  &  0  &  0\\
		0  &  0  &  0  &  0  &  0  &  0  &  1  &  1\\
		0  &  0  &  0  &  0  &  1  &  0  &  0  &  0\\
		0  &  0  &  0  &  0  &  0  &  1  &  0  &  0\\
		0  &  0  &  0  &  0  &  0  &  0  &  1  &  0\\
		0  &  0  &  0  &  0  &  0  &  0  &  0  &  1
	\end{bmatrix}}_{\Fvet}
	\underbrace{\begin{bmatrix}
		w_1^{[1]} \\ w_2^{[1]} \\ w_3^{[1]} \\ w_4^{[1]} \\
		z_1^{[1]} \\ z_2^{[1]} \\ z_3^{[1]} \\ z_4^{[1]} \\
	\end{bmatrix}}_{\bvet^{[1]}} ,
	\]
\end{small}

\noindent where $\yvet = \begin{bmatrix}	\xvet' \; \wvet' \; \zvet' \end{bmatrix}'$ is the vector containing the data for all variables at any temporal granularity,
$\bvet^{[1]} = \begin{bmatrix} \wvet^{[1]\prime} \; \zvet^{[1]\prime} \end{bmatrix}'$
is the vector of the high-frequency bottom time series, and $\Fvet$ is the cross-temporal summing matrix mapping $\bvet^{[1]}$ into $\yvet$.
Expression (\ref{ctstructuralform}) is the natural extension of the cross-sectional structural representation firstly shown by \cite{Athanasopoulos2009}. It relates the observations at the upper levels of both cross-sectional and temporal hierarchies, to the high-frequency bottom time series of the cross-sectional hierarchy, which are the `very' bottom time series in a cross-temporal hierarchy (\citealp{DiFonzoGiro2021}).
 
Besides the number of variables forming the cross sectional hierarchy ($n=3$ in the above example), two crucial aspects affecting the dimension of matrix $\Fvet$ are (i) the temporal frequency of the highest-frequency granularity ($k=1$), and (ii) the amount of temporal granularities taken into account in the temporal hierarchy. For example, if one is interested in coherently forecasting hourly time series within a day-cycle, the complete cross-temporal summing matrix $\Rvet$ defining all infra-day temporal granularities ($\mathcal{K}=\{24,12,8,6,4,3,2,1\}$) has dimension $(60 \times 24)$, and is equal to
$$
\Rvet = \begin{bmatrix}
	\textbf{1}_{24} & \Ivet_2 \otimes \textbf{1}_{12} & \Ivet_3 \otimes \textbf{1}_{8} & \Ivet_4 \otimes \textbf{1}_{6} & \Ivet_6 \otimes \textbf{1}_{4} & \Ivet_8 \otimes \textbf{1}_{3} & \Ivet_{12} \otimes \textbf{1}_{2} & \Ivet_{24}
\end{bmatrix}' .
$$

Matrix $\Fvet$ is thus a large and sparse matrix\footnote{Sparse matrices require less memory than dense matrices, and allow some computations to be more efficient
	(\citealp{Paige1982}, \citealp{Davis2006}, \citealp{Matrix2022}).}
 of dimension $(60n \times 24n_b)$, where $n$ is the total number of series, and $n_b$ is the number of bottom time series in the cross-sectional hierarchy, respectively (in the above example, $n=3$ and $n_b=2$). Just to give an idea, the total number of variables in the dataset analyzed in this paper (see section \ref{sec:repass}) is $n=324$, with $n_b=318$ bottom time series, thus matrix $\Fvet$ has dimension ($19{,}440 \times 7{,}632$). However, if the interest in forecasting at certain time granularities is low, this dimensonality issue may be mitigated by considering only part of the temporal granularities between the highest and lowest temporal frequencies\footnote{Possible losses in the forecasting accuracy of the reconciled forecasts according to reduced temporal hierarchies should however be evaluated. This issue is currently under study.}.
For example, if one considers only hourly and daily forecasts,  the reduced $\Fvet$ is a $(8{,}100 \times 7{,}632)$ matrix, with a decrease of about 58\% in the amount of matrix entries wrt its complete counterpart.

In this framework, by extending the seminal idea by \cite{Hyndman2011}, a forecast reconciliation problem arises when, for the nodes of a cross-temporal hierarchy, a set of base forecasts - however obtained, and usually not aggregate consistent either in space and/or in time - 
are wished to be revised to fulfill the coherency relationships in space and time valid for the target data.
The purpose is to improve the accuracy of the initial forecasts by combining forecasts at different aggregation levels in space and time, and by incorporating in the final forecasts the information given by cross-sectional and temporal constraints.

\subsection{Notation}
\label{subsec:notation}
Suppose we want to forecast a $n$-variate high-frequency hierarchical time series $\left\{\yvet_t^{[1]}\right\}_{t=1}^T$, with forecast horizon equal to the seasonal cycle $m$, (e.g., month per year, $m=12$, quarter per year, $m=4$, hour per day, $m=24$), or a multiple thereof. Given a factor $k$ of $m$, we may consider a number of temporally aggregated versions of each component  of $\yvet_t^{[1]}$, given by the non-overlapping sums of $k$ successive values, each having seasonal period equal to $M_k=m/k$. To avoid ragged-edge data, we assume that the total number of observations involved in the non-overlaping aggregation is a multiple of $m$, and define $N$ the number of the lowest-frequency series observations, i.e. $N = T/m$.
Let $\mathcal{K}$ be the set of $p$ factors of $m$, in descending order, $\mathcal{K}=\{k_{p}, k_{p-1}, \ldots, k_2, k_1\}$, where $k_p=m$ and $k_1=1$, and define $k^* = \displaystyle\sum_{j=2}^{p} k_j$.

Following \cite{DiFonzoGiro2021}, denote $\Yvet_{N+h} \equiv \Yvet$ the $\left[n \times (k^*+m)\right]$ matrix of the target forecasts for any temporal granularity, with low-frequency temporal horizon $h$, given by:
\[
\Yvet = \left[ 	\Yvet^{[m]} \; \Yvet^{[k_{p-1}]} \ldots  \Yvet^{[k_2]} \; \Yvet^{[1]} \right] =
\left[\begin{array}{c} \Avet \\ \Bvet \end{array}\right] =
\left[\begin{array}{ccccc}
	\Avet^{[m]} & \Avet^{[k_{p-1}]} & \ldots  & \Avet^{[k_2]} & \Avet^{[1]} \\
	\Bvet^{[m]} & \Bvet^{[k_{p-1}]} & \ldots  & \Bvet^{[k_2]} & \Bvet^{[1]}
\end{array}
\right] ,
\]
where $m$ is the highest available sampling frequency per seasonal cycle (i.e., max. order of temporal aggregation). 
Each matrix
$\Yvet^{[k]} = \left[\begin{array}{c} \Avet^{[k]} \\ \Bvet^{[k]} \end{array}\right]$, $k \in \mathcal{K}$, contains the order-$k$ temporal aggregates of the $n_a$ cross-sectional upper time series $(\Avet^{[k]})$, and of the $n_b$ cross-sectional bottom time series $(\Bvet^{[k]})$, respectively, with $n=n_a + n_b$.
Accordingly, we define the matrix of base forecasts $\widehat{\Yvet}$ as:
\[
\widehat{\Yvet} = \left[ \widehat{\Yvet}^{[m]} \; \widehat{\Yvet}^{[k_{p-1}]} \ldots  \widehat{\Yvet}^{[k_2]} \; \widehat{\Yvet}^{[1]} \right] =
\left[\begin{array}{ccccc}
	\widehat{\Avet}^{[m]} & \widehat{\Avet}^{[k_{p-1}]} & \ldots  & \widehat{\Avet}^{[k_2]} & \widehat{\Avet}^{[1]} \\
	\widehat{\Bvet}^{[m]} & \widehat{\Bvet}^{[k_{p-1}]} & \ldots  & \widehat{\Bvet}^{[k_2]} & \widehat{\Bvet}^{[1]}
\end{array}
\right] .
\]
While the target forecasts are expected to be aggregate-consistent both in time and space, the base forecasts are in general cross-sectionally and/or temporally incoherent, that is:
\[
\Uvet' \Yvet = \Zerovet_{[n_a \times (k^*+m)]} \quad \text{and} \quad
\Zvet' \Yvet' = \Zerovet_{[k^* \times n]} , \quad \text{while} \quad
\Uvet' \widehat{\Yvet} \ne \Zerovet_{[n_a \times (k^*+m)]} \quad \text{and/or} \quad
\Zvet' \widehat{\Yvet}' \ne \Zerovet_{[k^* \times n]} ,
\]
where $\Uvet' = \begin{bmatrix}	\Ivet_{n_a} \; -\Cvet \end{bmatrix}$ and
$\Zvet' = \begin{bmatrix}	\Ivet_{k^*} \; -\Kvet \end{bmatrix}$
are zero-constraints matrices associated to the cross-sectional and temporal constraints, respectively.
Working with a single dimension (either sectional, or temporal), we may consider the multivariate generalization of the structural representations for cross-sectional (\citealp{Athanasopoulos2009}), and temporal (\citealp{Athanasopoulos2017}), hierarchies, respectively:

\noindent \textit{Cross-sectional structural representation of $k^*+m$ hierarchical time series}
\[
\yvet_{\tau}^{[k]} = \Svet\bvet_{\tau}^{[k]}, \quad \tau=1,\ldots, \frac{m}{k}, \quad \rightarrow \quad \Yvet^{[k]} = \Svet\Bvet^{[k]} \quad \quad k \in \mathcal{K},
\]
that is, in compact form
\[
\Yvet = \Svet\Bvet \quad \rightarrow \quad
\yvet = \left(\Svet \otimes \Ivet_{k^* + m}\right)\bvet,
\]
where 
$\yvet = \text{vec}\left(\Yvet'\right)$, and $\bvet = \text{vec}\left(\Bvet'\right)$.

\noindent \textit{Temporal hierarchies structural representation for $n$ individual time series}

\[
\begin{array}{c}
	\avet_{i} = \Rvet\avet_i^{[1]}, \quad i=1, \ldots, n_a \\
	\bvet_{j} = \Rvet\bvet_j^{[1]}, \quad j=1, \ldots, n_b \\
\end{array} \quad 
\rightarrow \quad 
\begin{array}{c}
	\Avet = \Rvet\Avet^{[1]\prime}\\
	\Bvet = \Rvet\Bvet^{[1]\prime} \\
\end{array},
\]
that is
\[
\Yvet' = \Rvet \Yvet^{[1]\prime}
\quad \rightarrow\quad 
\yvet = \left(\Ivet_n \otimes \Rvet\right)\yvet^{[1]} ,
\]
where $\yvet^{[1]} = \text{vec}\left({\Yvet^{[1]}}'\right)$,
and $\Rvet$ is the temporal summing matrix (\citealp{DiFonzoGiro2021}).

\subsection{Cross-temporal bottom-up reconciliation}
\label{sec:ctbu}
Bottom-up is an old and classic approach in the forecast reconciliation literature (\citealp{Dunn1976}, \citealp{Dangerfield1992}).
This approach simply consists in obtaining the upper-level series' forecasts by summing-up the base forecasts of the bottom level series in the hierarchy.
While its representation is rather straightforward when only a single dimension (either space or time) is involved, it is instructive considering how `bottom-up' works with cross-temporal hierarchies.
The cross-temporal bottom-up (ct$(bu)$) reconciliation of $k^*+m$ hierarchical time series' base forecasts at different time granularities, 
may be represented as:
\begin{equation}
	\label{ct-bu_formula_pap}
	\text{vec}\left(\widetilde{\Yvet}_{\text{ct}(bu)}'\right) = \Fvet \text{vec}\left(\widehat{\Bvet}^{[1]\prime}\right)
	\quad \leftrightarrow  \quad
	\widetilde{\yvet}_{\text{ct}(bu)} = 
	\Fvet \widehat{\bvet}^{[1]} ,
\end{equation}
where $\Fvet = \Svet \otimes \Rvet$ is the cross-temporal summing matrix, with $\otimes$ denoting the Kronecker product,
and
$\widehat{\bvet}^{[1]} = \text{vec}\left(\widehat{\Bvet}^{[1]\prime}\right)$ is the vector containing the base forecasts of the high-frequency bottom time series. 
In \ref{sec:ctbu-appendix} it is shown that 
the cross-temporal bottom-up reconciliation can be tought of as a two-step sequential reconciliation approach, where either cross-sectional reconciliation of the high-frequency bottom time series base forecasts is followed by temporal reconciliation, or \textit{vice-versa}.
This observation opens the way to `partly bottom-up' cross-temporal reconciliation approaches, where forecasts of the $n$ time series for different time granularities, and aggregation coherent only along a single dimension, 
are subsequently cross-temporally reconciled \textit{via} simple bottom-up according to the other dimension.
We call these cross-temporal forecast reconciliation approaches either ct$(rec_{te},bu_{cs}$), or ct$(rec_{cs},bu_{te}$),
where `$rec_{te}$' and `$rec_{cs}$' denote a generic forecast reconciliation approach in time and in space, respectively.

\subsection{Regression-based cross-temporal reconciliation}
\label{sec:regbased}

Let us consider the multivariate regression model 
$$
\widehat{\Yvet} = \Yvet + \Evet ,
$$
where the involved matrices have each dimension $[n \times (k^\ast + m)]$ and contain, respectively, the base ($\widehat{\Yvet}$) and the target forecasts ($\Yvet$), and the coherency errors ($\Evet$) for the $n$ component variables of the hierarchical time series of interest. 
Consider now the vectorized version of the model, that is
\begin{equation}
\label{eq:vec1}
	\mathrm{vec}\left(\widehat{\Yvet}'\right) = \mathrm{vec}\left(\Yvet'\right) + \mathrm{vec}\left(\Evet'\right) \quad \Leftrightarrow \quad \widehat{\yvet} = \yvet + \etavet ,
\end{equation}
where $\etavet = \mathrm{vec}\left(\Evet'\right)$
is the cross-temporal reconciliation error with zero mean and p.d. covariance matrix $\Omegavet_{ct}$.
Assuming $\Omegavet_{ct}$ known,
the optimal combination reconciled forecasts $\widetilde{\yvet}_{\text{oct}} = \mathrm{vec}\left(\widetilde{\Yvet}_{\text{oct}}'\right)$ are found by linearly constrained minimization of the generalized least squares (GLS) objective function
\begin{equation}
	\label{glsobj}
	\left(\yvet-\widehat{\yvet}\right)'\Omegavet_{ct}^{-1}\left(\yvet-\widehat{\yvet}\right) \qquad \mathrm{s.t.} \; \Hvet'\yvet = \Zerovet ,
\end{equation}
where 
$$
\Hvet'\yvet = \left[\begin{array}{c}
	\Uvet^\ast \\
	\Ivet_n \otimes \Zvet
\end{array}\right]\yvet = \Zerovet_{[n(k^\ast + m)\times 1]}
$$
is a full row-rank cross-temporal zero-constraint matrix, with $\Uvet^\ast = \left[\Zerovet_{(n_a m \times nk^\ast)} \; \Ivet_m \otimes\Uvet'\right]\Pvet$, and $\Pvet$ is the $[n(k^\ast + m) \times n(k^\ast + m)]$ commutation matrix, such that $\Pvet \mathrm{vec}(\Yvet) = \yvet$ (\citealp{DiFonzoGiro2021}).

The cross-temporally reconciled forecasts according to the projection approach solution (\citealp{Byron1978}; see also \citealp{VanErven2015}, \citealp{Wickramasuriya2019}, \citealp{Panagiotelis2021}, \citealp{DiFonzoGiro2021}) are given by 
\begin{equation}
\label{octctproj}
\widetilde{\yvet}_{\text{oct}} = \left[\Ivet_{n(k^\ast + m)} - \Omegavet_{ct}\Hvet\left(\Hvet'\Omegavet_{ct}\Hvet\right)^{-1}\Hvet'\right]\widehat{\yvet} = \Mvet_{ct} \widehat{\yvet}.
\end{equation}
Alternatively, they may be obtained according to the structural approach developed by \cite{Hyndman2011} for the cross-sectional framework:
$$
\widehat{\yvet} = \Fvet \betavet + \etavet ,
$$
where 
$\betavet = E[\bvet^{[1]}|\mathcal{I}_{T}]$ is the mean of the high-frequency bottom-level values conditional to $\mathcal{I}_{T}$, the available information up to time $T$, $\bvet^{[1]}=\mathrm{vec}\left(\Bvet^{[1]\prime}\right)$. 
The structural approach forecast reconciliation formula is
\begin{equation}
	\label{octctstruc}
\widetilde{\yvet}_{\text{oct}} = \Fvet \left(\Fvet'\Omegavet^{-1}_{ct}\Fvet\right)^{-1} \Fvet'\Omegavet^{-1}_{ct}\widehat{\yvet} = \Fvet \Gvet_{ct} \widehat{\yvet},
\end{equation}
with $\Gvet_{ct} = \left(\Fvet'\Omegavet^{-1}_{ct}\Fvet\right)^{-1} \Fvet'\Omegavet^{-1}_{ct}$.
\cite{DiFonzoGiro2021} considered
the following approximations for the cross-temporal covariance matrix (`$oct$' stands for `optimal cross-temporal'):
\begin{itemize}[nosep, leftmargin=!, labelwidth=\widthof{ oct$(bdsam)$ -}, align=right]
	\item[oct$(ols)$ -] identity: $\Omegavet_{ct} = \Ivet_{n(k^*+m)}$
	\item[oct$(struc)$ -] structural: $\Omegavet_{ct} = \mathrm{diag}(\Fvet \mathbf{1}_{mn_b})$
	\item[oct$(wlsv)$ -] series variance scaling: $\Omegavet_{ct} = \reallywidehat{\Omegavet}_{ct,wlsv}$, that is a straightforward extension of the series variance scaling matrix presented by \cite{Athanasopoulos2017} in the temporal framework
	\item[oct$(bdshr)$ -] block-diagonal shrunk cross-covariance scaling: $\Omegavet_{ct} = \Pvet\reallywidehat{\Wvet}^{BD}_{ct,shr}\Pvet'$
	\item[oct$(bdsam)$ -] block-diagonal cross-covariance scaling: $\Omegavet_{ct} = \Pvet\reallywidehat{\Wvet}^{BD}_{ct,sam}\Pvet'$
	\item[oct$(shr)$ -] MinT-shr:   $\Omegavet_{ct} = \hat{\lambda}\widehat{\Omegavet}_{ct,D} + (1-\hat{\lambda})\widehat{\Omegavet}_{ct}$
	\item[oct$(sam)$ -] MinT-sam:  $\Omegavet_{ct} = \reallywidehat{\Omegavet}_{ct}$
\end{itemize}
where the symbol $\odot$ denotes the Hadamard product, $\hat{\lambda}$ is an estimated shrinkage coefficient (\citealp{Ledoit2004}), $\reallywidehat{\Omegavet}_{ct,D} = \Ivet_{n(k^\ast + m)} \odot \widehat{\Omegavet}_{ct}$, and $\reallywidehat{\Omegavet}_{ct}$ is the covariance matrix of the cross-temporal one-step ahead in-sample forecast errors.
The cross-sectional point forecast reconciliation formula
is obtained by assuming $m = 1$ (which implies $k^*=0$, and $\Omegavet_{ct} = \Wvet$ is an ($n \times n$) p.d. matrix):
$$
\widetilde{\yvet} = \left[\Ivet_{n} - \Wvet\Uvet\left(\Uvet'\Wvet\Uvet\right)^{-1}\Uvet'\right]\widehat{\yvet} = \Mvet_{cs} \widehat{\yvet},
$$
or, equivalently (\citealp{Hyndman2011}, \citealp{Hyndman2016}, \citealp{Wickramasuriya2019}),
$$
\widetilde{\yvet} = \Svet \left(\Svet'\Wvet^{-1}\Svet\right)^{-1} \Svet'\Wvet^{-1}\widehat{\yvet} = \Svet \Gvet_{cs} \widehat{\yvet},
$$
with $\Gvet_{cs} = \left(\Svet'\Wvet^{-1}\Svet\right)^{-1} \Svet'\Wvet^{-1}$.
The reconciled forecasts through temporal hierarchies for a single time series (\citealp{Athanasopoulos2017}) are in turn obtained by setting $n=1$ (i.e., $n_a=0$ and $n_b=1$, and $\Omegavet_{ct} = \Omegavet$ is a ($k^*+m \times k^*+m$) p.d. matrix):
$$
\widetilde{\yvet} = \left[\Ivet_{(k^\ast + m)} - \Omegavet\Zvet\left(\Zvet'\Omegavet\Zvet\right)^{-1}\Zvet'\right]\widehat{\yvet} = \Mvet_{te} \widehat{\yvet},
$$
or, equivalently,
$$
\widetilde{\yvet} = \Rvet \left(\Rvet'\Omegavet^{-1}\Rvet\right)^{-1} \Rvet'\Omegavet^{-1}\widehat{\yvet} = \Rvet \Gvet_{te} \widehat{\yvet},
$$
with $\Gvet_{te} = \left(\Rvet'\Omegavet^{-1}\Rvet\right)^{-1} \Rvet'\Omegavet^{-1}$.
Table \ref{tab:cs_te_app_cv} presents some approximations for the cross-sectional and the temporal covariance matrices. Other alternatives for temporal reconciliation, exploiting possible information in the residuals’ autocorrelation, can be found in \cite{Nystrup2020} and \cite{DiFonzoGiro2021}.
\begin{table}[htb]
  \caption{Approximations for the cross-sectional (\citealp{Hyndman2011}, \citealp{Hyndman2016}, \citealp{Wickramasuriya2019}, \citealp{DiFonzoGiro2021}) and temporal (\citealp{Athanasopoulos2017}, \citealp{DiFonzoGiro2021}) covariance matrix to be used in a reconciliation approach$^*$.}
  \label{tab:cs_te_app_cv}
\centering
  \begin{tabular}{>{\raggedleft\arraybackslash}m{0.15\linewidth}|>{\centering\arraybackslash}m{0.35\linewidth}|>{\centering\arraybackslash}m{0.35\linewidth}}
  \toprule
    & \textbf{Cross-sectional framework} & \textbf{Temporal framework} \\
    \midrule
    identity & cs$(ols)$: $\Wvet = \Ivet_n$ & te$(ols)$: $\Omegavet = \Ivet_{k^\ast + m}$\\[0.1cm]
    structural & cs$(struc)$: $\Wvet = \mathrm{diag}(\Svet \mathbf{1}_{nb})$ & te$(struc)$: $\Omegavet = \mathrm{diag}(\Rvet \mathbf{1}_{m})$\\[0.1cm]
    series variance & cs$(wls)$: $\Wvet = \widehat{\Wvet}_D = \Ivet_n \odot \widehat{\Wvet}$ & te$(wlsv)$: $\Omegavet = \widehat{\Omegavet}_{wlsv}$\\[0.1cm]
    MinT-shr & cs$(shr)$: $\Wvet = \hat{\lambda}\widehat{\Wvet}_D + (1-\hat{\lambda})\widehat{\Wvet}$ & te$(shr)$: $\Omegavet = \hat{\lambda}\widehat{\Omegavet}_D + (1-\hat{\lambda})\widehat{\Omegavet}$\\[0.1cm]
    MinT-sam & cs$(sam)$: $\Wvet = \widehat{\Wvet}$ & te$(sam)$: $\Omegavet = \widehat{\Omegavet}$ \\
    \bottomrule \addlinespace[0.1cm]
    \multicolumn{3}{p{0.9\linewidth}}{\footnotesize $^*$ $\widehat{\Wvet}$ ($\widehat{\Omegavet}$) is the covariance matrix of the cross-sectional (temporal) one-step ahead in-sample forecast errors, $\widehat{\Omegavet}_{wlsv}$ is a diagonal matrix ‘‘which contains estimates of the in-sample one-step-ahead error variances across each level’’ (\citealp{Athanasopoulos2017}, p. 64), and $\widehat{\Omegavet}_D = \Ivet_{k^\ast + m} \odot \widehat{\Omegavet}$.}
  \end{tabular}
\end{table}


\subsection{Heuristic and iterative cross-temporal reconciliation}
\label{sec:heuite}
\cite{Kourentzes2019} proposed an ensemble forecasting procedure (denoted KA), that exploits the simple averaging of different forecasts. It consists in the following steps (for further details, see \citealp{DiFonzoGiro2021}):
\begin{description}[nosep]
	\item[KA-Step 1] compute the temporally reconciled forecasts for each variable $i \in \{1, \ldots, n\}$, and arrange them in the $[n \times (k + m)]$ matrix $\widetilde{\Yvet}_{\text{te}}$;
	\item[KA-Step 2] starting from $\widetilde{\Yvet}_{\text{te}}$, compute the time-by-time cross-sectional reconciled forecasts for all the temporal aggregation levels ($\widetilde{\Yvet}_{\text{cs}}$), and collect all the $(n\times n)$ projection matrices used to reconcile forecasts of $k$-level temporally aggregated time series, $\Mvet_{cs}^{[k]}$, $k \in \mathcal{K}$;
	\item[KA-Step 3] transform the step 1 forecasts once more, by computing time-by-time cross-sectional reconciled forecasts for all temporal aggregation levels using the $(n \times n)$ matrix $\overline{\Mvet}$, given by the average of the matrices $\Mvet_{\text{cs}}^{[k]}$:
	$$
	\widetilde{\Yvet}_{\text{KA}} = \left(\frac{1}{p}\sum_{k \in \mathcal{K}} \Mvet_{\text{cs}}^{[k]}\right)\widetilde{\Yvet}_{\text{te}}= \overline{\Mvet} \widetilde{\Yvet}_{\text{te}} .
	$$ 
\end{description}

\cite{DiFonzoGiro2021} presented an iterative approach, that produces cross-temporally reconciled forecasts by alternating forecast reconciliation along one dimension (cross-sectional or temporal), based on the first two steps of the KA approach. The iteration $j\geq 1$ can be described as follows:
\begin{description}[nosep]
	\item[Step 1] compute the temporally reconciled forecasts ($\widetilde{\Yvet}_{\text{te}}^{(j)}$) for each variable $i \in \{1, \ldots, n\}$ of $\widetilde{\Yvet}_{\text{cs}}^{(j-1)}$;
	\item[Step 2] compute the time-by-time cross-sectional reconciled forecasts ($\widetilde{\Yvet}^{(j)}_{\text{cs}}$) for all the temporal aggregation levels of $\widetilde{\Yvet}_{\text{te}}^{(j)}$.
\end{description}
At $j=0$, the starting values are given by $\widetilde{\Yvet}_{\text{cs}}^{(0)}=\widehat{\Yvet}$, and the iterates end when the entries of matrix $\Dvet_{\text{te}}=\Zvet'\widetilde{\Yvet}^{(j)\prime}_{\text{cs}}$, containing all the temporal discrepancies, are small enough according to a suitable convergence criterion\footnote{Denoting $d_{\text{te}}$ a scalar measure of the gross temporal discrepancies contained in the matrix $\Dvet_{\text{te}}$, the convergence is achieved when $d_{\text{te}} < \delta$, where $\delta$ is a positive tolerance value. \cite{DiFonzoGiro2021} propose to use
$d_{\text{te}} = \left\Vert  \Dvet_{\text{te}}\right\Vert_{1}$, where $||\Xvet||_{1} = \sum_{i,l}|x_{i,l}|$. When the number of series and the time granularities increase, this criterion may be too demanding, 
resulting in more iterations than necessary to converge. An alternative choice, avalable in \texttt{FoReco} (\citealp{FoReco2022}), and used in this paper, is setting $d_{\text{te}} = \left\Vert  \Dvet_{\text{te}}\right\Vert_{\infty}$, where $||\Xvet||_{\infty} = \max |x_{i,l}|$.}.
The order of the dimensions to be reconciled in the two steps within each iteration is arbitrary (on this point, see \citealp{DiFonzoGiro2021}): in the description above, \textit{temporal-then-cross-sectional} reconciliation is iteratively performed (ite$(rec_{\text{te}},rec_{\text{cs}})$), otherwise the order may be reversed, getting \textit{cross-sectional-then-temporal} reconciliation  (ite$(rec_{\text{cs}},rec_{\text{te}})$).

\section{Cross-temporal coherency of sequential approaches and some remarkable equivalences}
\label{sec:techdigr}

In order to exploit both cross-sectional and temporal hierarchies, \cite{Yagli2019} consider two sequential reconciliation approaches: Spatial-then-Temporal-Reconciliation (STR), and Temporal-then-Spatial-Reconciliation (TSR).
In the former case, cross-sectional reconciliation of the base forecasts 
is performed first for any temporal granularity, followed by the temporal reconciliation of the individual series' forecasts. In the latter case, the order of application of the two reconciliation approaches is reversed, temporal reconciliation being performed first, followed by cross-sectional reconciliation.
In general, one would expect that the forecasts obtained this way are different (i.e.,
$\widetilde{\Yvet}_{TSR} \ne \widetilde{\Yvet}_{TSR}$),
and either cross-sectionally (TSR) or temporally (STR), but not cross-temporally, reconciled:
$$
\Uvet'\widetilde{\Yvet}_{TSR} = \Zerovet_{[n_a \times (k^*+m)]}, \quad
\Zvet'\widetilde{\Yvet}_{TSR} \ne \Zerovet_{(k^* \times n)}, \qquad
\Uvet'\widetilde{\Yvet}_{STR} \ne \Zerovet_{[n_a \times (k^*+m)]}, \quad
\Zvet'\widetilde{\Yvet}_{STR} = \Zerovet_{(k^* \times n)}.
$$

At this regard, we have found an interesting result, here shown as Theorem \ref{thm:ite}, according to which, if both covariance matrices used in either steps are constant across levels and time granularities, the final result (i) does not depend on the order of application of the uni-dimensional reconciliation phases, and (ii) is equivalent to that obtained through an optimal combination approach using a separable covariance matrix, i.e. with a Kronecker product structure \citep{Genton2007, Werner2008, Velu2017}.

\begin{theorem}\label{thm:ite}
	Let $\Wvet^{[k]}$, $k \in \mathcal{K}$, be the cross-sectional hierarchy error covariance matrix, and $\Omegavet_i$, $i = 1,\ldots,n$, the $i$-th series temporal hierarchy error covariance matrix. If $\Wvet^{[k]} = \Wvet$, $\forall k \in \mathcal{K}$, and $\Omegavet_i = \Omegavet$, $\forall i = 1,\ldots,n$, then:
	\begin{enumerate}[nosep]
		\item the iterative procedure reduces to a single (two-step) iteration to obtain cross-temporal reconciled forecasts. Furthermore,
		$$\widetilde{\Yvet}_{\text{tcs}} = \widetilde{\Yvet}_{\text{cst}} = \widetilde{\Yvet}_{\text{seq}},$$
		where $\widetilde{\Yvet}_{\text{tcs}}$ ($\widetilde{\Yvet}_{\text{cst}}$) is the [$n\times (k^\ast+m)$] matrix of the \textit{temporal-then-cross-sectional} (\textit{cross-sectional-then-temporal}) reconciled forecasts;
		\item Denoting $\widetilde{\Yvet}_{\text{oct}}$  the optimal (in least squares sense) combination cross-temporal reconciliation approach with cross-temporal covariance matrix  
		$\Omegavet_{\text{ct}} = \Wvet \otimes \Omegavet$, it is:
		$$
		\widetilde{\Yvet}_{\text{seq}} = \widetilde{\Yvet}_{\text{oct}}.
		$$
	\end{enumerate}
\end{theorem}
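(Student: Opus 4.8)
The plan is to reduce both constructions to plain matrix products. Under the stated hypotheses the time-by-time cross-sectional reconciliation uses the \emph{same} projector $\Mvet_{cs}=\Ivet_n-\Wvet\Uvet(\Uvet'\Wvet\Uvet)^{-1}\Uvet'$ in every column, so it is just left multiplication $\widehat{\Yvet}\mapsto\Mvet_{cs}\widehat{\Yvet}$; likewise the series-by-series temporal reconciliation uses the same $\Mvet_{te}=\Ivet_{k^\ast+m}-\Omegavet\Zvet(\Zvet'\Omegavet\Zvet)^{-1}\Zvet'$ in every row, i.e. right multiplication $\widehat{\Yvet}\mapsto\widehat{\Yvet}\Mvet_{te}'$. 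This is the only place where the constancy assumptions $\Wvet^{[k]}\equiv\Wvet$ and $\Omegavet_i\equiv\Omegavet$ are used.

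For item~1, I would first run one two-step iteration in both orders: temporal-then-cross-sectional gives $\Mvet_{cs}(\widehat{\Yvet}\Mvet_{te}')$ and cross-sectional-then-temporal gives $(\Mvet_{cs}\widehat{\Yvet})\Mvet_{te}'$, and these coincide simply because left and right multiplication commute; call the common value $\widetilde{\Yvet}_{\text{seq}}=\Mvet_{cs}\widehat{\Yvet}\Mvet_{te}'$. Next I would observe that $\Uvet'\Mvet_{cs}=\Zerovet$ and $\Zvet'\Mvet_{te}=\Zerovet$ (so also $\Mvet_{te}'\Zvet=\Zerovet$) directly from the definitions, whence $\Uvet'\widetilde{\Yvet}_{\text{seq}}=\Zerovet$ and $\widetilde{\Yvet}_{\text{seq}}\Zvet=\Zerovet$: the output of a single two-step iteration is already cross-temporally coherent. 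Consequently the discrepancy matrix $\Dvet_{\text{te}}=\Zvet'\widetilde{\Yvet}_{\text{seq}}'$ is zero, the convergence test is passed at $j=1$ for any tolerance, and since a projector acts as the identity on its range and $\widetilde{\Yvet}_{\text{seq}}$ lies in the range of both $\Mvet_{cs}$ (acting on columns) and $\Mvet_{te}$ (acting on rows), no later iteration changes anything. This yields $\widetilde{\Yvet}_{\text{tcs}}=\widetilde{\Yvet}_{\text{cst}}=\widetilde{\Yvet}_{\text{seq}}$.

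For item~2, I would work with the structural form \eqref{octctstruc}, writing $\Fvet=\Svet\otimes\Rvet$ and $\Omegavet_{ct}^{-1}=\Wvet^{-1}\otimes\Omegavet^{-1}$. The Kronecker mixed-product rule gives $\Fvet'\Omegavet_{ct}^{-1}\Fvet=(\Svet'\Wvet^{-1}\Svet)\otimes(\Rvet'\Omegavet^{-1}\Rvet)$, which is nonsingular since $\Svet$ and $\Rvet$ have full column rank and $\Wvet,\Omegavet$ are positive definite; inverting the Kronecker factors and multiplying out, $\Fvet(\Fvet'\Omegavet_{ct}^{-1}\Fvet)^{-1}\Fvet'\Omegavet_{ct}^{-1}=(\Svet\Gvet_{cs})\otimes(\Rvet\Gvet_{te})$. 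By the equivalence of the structural and projection forms of one-dimensional reconciliation recalled in Section~\ref{sec:regbased}, $\Svet\Gvet_{cs}=\Mvet_{cs}$ and $\Rvet\Gvet_{te}=\Mvet_{te}$, so $\widetilde{\yvet}_{\text{oct}}=(\Mvet_{cs}\otimes\Mvet_{te})\,\mathrm{vec}(\widehat{\Yvet}')$. Applying $(\Bvet'\otimes\Avet)\mathrm{vec}(\Xvet)=\mathrm{vec}(\Avet\Xvet\Bvet)$ with $\Avet=\Mvet_{te}$, $\Bvet=\Mvet_{cs}'$, $\Xvet=\widehat{\Yvet}'$ and transposing gives $\widetilde{\Yvet}_{\text{oct}}=\Mvet_{cs}\widehat{\Yvet}\Mvet_{te}'=\widetilde{\Yvet}_{\text{seq}}$.

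The calculations themselves are routine; the parts that need care are bookkeeping. The main obstacle is keeping the Kronecker orderings consistent throughout --- matching $\Omegavet_{ct}=\Wvet\otimes\Omegavet$ with the convention $\widehat{\yvet}=\mathrm{vec}(\widehat{\Yvet}')$ and with $\Fvet=\Svet\otimes\Rvet$ --- so that the mixed-product rule really does factor the cross-temporal projector as $\Mvet_{cs}\otimes\Mvet_{te}$; a wrong ordering would produce the mismatch $\Mvet_{te}\otimes\Mvet_{cs}$ and break the identification with $\widetilde{\Yvet}_{\text{seq}}$. The second, more conceptual point is to argue that one two-step pass produces \emph{exact} coherence, so the iterative procedure genuinely terminates after a single iteration rather than converging only in the limit; this is immediate here from $\Uvet'\Mvet_{cs}=\Zerovet$ and $\Zvet'\Mvet_{te}=\Zerovet$ together with idempotency of the projectors.
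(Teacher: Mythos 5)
Your proposal is correct and follows essentially the same route as the paper: part 1 reduces both orderings to the single product $\Mvet_{cs}\widehat{\Yvet}\Mvet_{te}'$ and verifies coherence via $\Uvet'\Mvet_{cs}=\Zerovet$ and $\Zvet'\Mvet_{te}=\Zerovet$, and part 2 uses the Kronecker mixed-product rule on the structural form with $\Fvet=\Svet\otimes\Rvet$ and $\Omegavet_{ct}=\Wvet\otimes\Omegavet$ to factor the cross-temporal projector as $\Mvet_{cs}\otimes\Mvet_{te}$, exactly as in the paper's appendix. Your added remarks on termination (idempotency guaranteeing that no later iteration changes the result) make the ``single iteration'' claim slightly more explicit than the paper's, but the substance is identical.
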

\begin{proof}
	\ref{sec:proof}.
\end{proof}

It is worth noting that, when only constant error covariance matrices are involved in both unidimensional steps, the iterative cross-temporal reconciliation approach proposed by \cite{DiFonzoGiro2021} reduces to the sequential procedure proposed by \cite{Yagli2019}, and gives a unique result, $\widetilde{\Yvet}_{\text{seq}}$, which does not depend on the order of application of the reconciliation phases\footnote{It is therefore surprising that in \cite{Yagli2019}, where only constant matrices are used in the reconciliation approaches, the results for the STR procedures are different from those for the TSR counterparts. We guess this depends on the way the unidimensional reconciled forecasts are computed, as we deduced from the \texttt{R} scripts made available by \cite{Yang2017te}.}.
In addition,
it does exist a simple simultaneous, optimal (in least-squares sense) reconciliation approach equivalent to the 
above sequential forecast reconciliation procedure. 
Finally, it should be noted that \textit{under the constant covariance matrices assumption} of the theorem, the heuristic cross-temporal reconciliation approach by \cite{Kourentzes2019} is equivalent to a sequential approach, since in this case the final averaging phase is not needed (more details can be found in \ref{sec:proof}, after the proof of the theorem).
All these results can be summarized as follows:
\[
\begin{array}{l}
\Wvet^{[k]} = \Wvet, \forall k \in \mathcal{K} \\
\Omegavet_i = \Omegavet, i=1,\ldots,n\\
\Omegavet_{ct} = \Wvet \otimes \Omegavet
\end{array} \quad \rightarrow \quad
\underbrace{\widetilde{\Yvet}_{\text{KA}(tcs)} = \widetilde{\Yvet}_{\text{KA}(cst)}}_{\parbox[t]{3cm}{\scriptsize\centering\cite{Kourentzes2019}}} =
\underbrace{\widetilde{\Yvet}_{\text{seq}}}_{\parbox[t]{1.75cm}{\scriptsize\centering\cite{Yagli2019}}} =
\underbrace{\widetilde{\Yvet}_{\text{ite}(tcs)} = \widetilde{\Yvet}_{\text{ite}(cst)} = \widetilde{\Yvet}_{\text{oct}}}_{\parbox[t]{3cm}{\scriptsize\centering\cite{DiFonzoGiro2021}}} 
\]

\noindent Using the covariance matrices in Table \ref{tab:cs_te_app_cv}, we may thus establish some useful equivalences between the cross-temporal reconciliation approaches considered in \cite{DiFonzoGiro2021}, and identify new ones:
\begin{enumerate}[nosep]
	\item oct$(ols)$: $\Omegavet_{\text{ct}} = \Ivet_{n(k^\ast+m)}$
	
	 equivalent to seq/KA/ite$(ols_{te},ols_{cs})$, and seq/KA/ite$(ols_{cs},ols_{te})$,
	 with $\Wvet = \Ivet_n$, $\Omegavet = \Ivet_{k^\ast+m}$.

	\item oct$(struc)$: $\Omegavet_{\text{ct}} = \mathrm{diag}\left(\Fvet\mathbf{1}_{mn_b}\right)$
	
	equivalent to
	seq/KA/ite$(struc_{te},struc_{cs})$, and seq/KA/ite$(struc_{cs},struc_{te})$,
	 with $\Wvet = \mathrm{diag}\left(\Svet\mathbf{1}_{n_b}\right)$, $\Omegavet = \mathrm{diag}\left(\Rvet\mathbf{1}_{m}\right)$.
	 
	\item oct$(ols_{cs},struc_{te})$: $\Omegavet_{ct} = \mathrm{diag}\left[(\Ivet_n \otimes \Rvet)\mathbf{1}_{mn_b}\right]$
	
	equivalent to
	seq/KA/ite$(ols_{cs},struc_{te})$,
	with $\Wvet = \Ivet_n$, $\Omegavet = \mathrm{diag}\left(\Rvet\mathbf{1}_{m}\right)$.
	
	\item oct$(struc_{cs},ols_{te})$: $\Omegavet_{ct} = \mathrm{diag}\left[(\Svet \otimes \Ivet_{k^\ast+m})\mathbf{1}_{mn_b}\right]$
	
	equivalent to
	seq/KA/ite$(struc_{cs},ols_{te})$,
	with $\Wvet = \mathrm{diag}\left(\Svet\mathbf{1}_{n_b}\right)$, $\Omegavet = \Ivet_{k^\ast+m}$.
\end{enumerate}

\noindent All the procedures considered so far make use of very simple covariance matrices, not using the information coming from the in-sample forecast errors. At this regard, another interesting result is that the
iterative reconciliation procedure where both cross-sectional and temporal reconciliation are performed using diagonal covariance matrices, computed using the one-step-ahead in-sample forecast errors (i.e., cs$(wls)$ and te$(wlsv)$), `converges' to the optimal combination reconciliation approach oct$(wlsv)$ (\citealp{DiFonzoGiro2021}), irrespective of the order of application of the unidimensional reconciliation steps\footnote{For the time being, we only have found strong empirical evidence of the correctness of this result, as shown in figure \ref{fig:conv_plot}, and are currently involved in finding a general proof.}. In other terms:
\[
\widetilde{\Yvet}_{\mathrm{ite}(wlsv_{te},wls_{cs})} \simeq \widetilde{\Yvet}_{\mathrm{ite}(wls_{cs},wlsv_{te})} \simeq \widetilde{\Yvet}_{\mathrm{oct}(wlsv)} ,
\]
where the quality of the approximation solely depends on the convergence criterion: the lower the tolerance value $\delta$ (see section \ref{sec:heuite}), the better the approximation will be.
An empirical support to this result is given by figure \ref{fig:conv_plot}, showing the Frobenius norm\footnote{The Frobenius norm of a real-valued matrix $\Xvet$ is defined as the square root of the sum of the squares of its elements (\citealp{Golub2013}, p. 71): $||\Xvet||_{F}=\sqrt{\displaystyle\sum_{i,j}x_{i,j}^2}$.}
 of the difference between the matrices of the reconciled forecasts using ite$(wlsv_{te},wls_{cs})$ with decreasing tolerance values $\delta$, and oct$(wlsv)$, in the 350 replications of the forecasting experiment described in section \ref{sec:repass}.
 
 In summary, when the considered diagonal covariance matrices are used in cross-sectional and temporal reconciliation phases, the iterative cross-temporal reconciliation approach is equivalent to a specific optimal combination cross-temporal approach. We argue that in practical application it is convenient to adopt the optimal combination version of a cross-temporal procedure, rather than its iterative counterpart, because the numerical results have better accuracy, and the computing time is often lower, mostly if the programming codes exploit sparse matrices tools (\citealp{Davis2006}).
 

\begin{figure}[t]
	\centering
	\includegraphics[width=\linewidth]{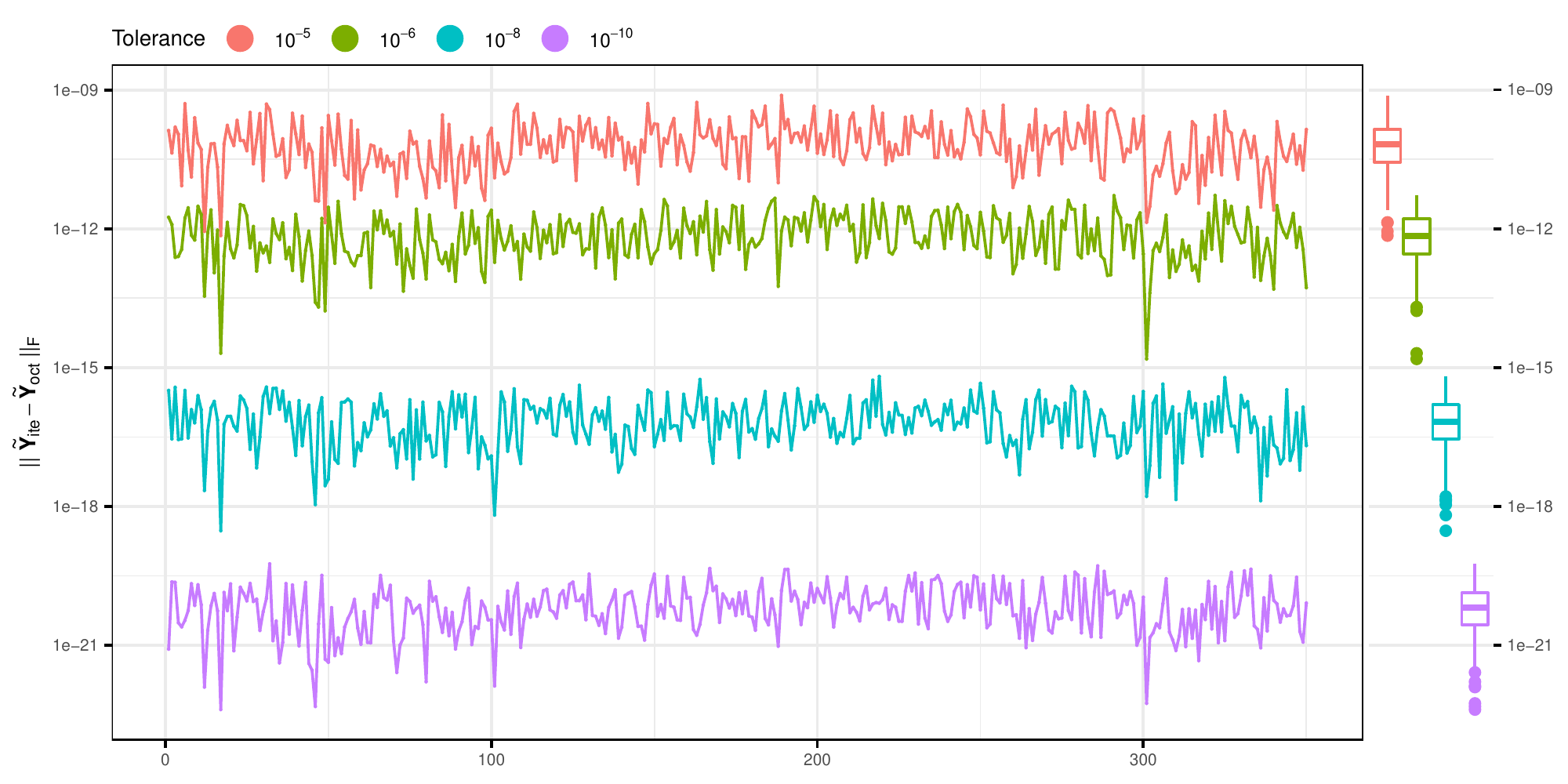}
	\caption{Frobenius norm of the difference between the matrices of the reconciled forecasts using ite$(wlsv_{te},wls_{cs})$ and oct$(wlsv)$, 
	with different tolerance value $\delta$. 350 replications of the forecasting experiment described in section \ref{sec:repass}.}
	\label{fig:conv_plot}
\end{figure}


\section{Replication and assessment of the forecasting experiment of Yagli et al. (2019)}
\label{sec:repass}

The dataset used in this study, called PV324, is the same used 
by \cite{Yang2017cs,Yang2017te}, and \cite{Yagli2019}. It refers to 318 simulated PV plants in California, whose hourly irradiation data are organized in three levels (figure \ref{Fig:PV324hierarchy}):
\begin{itemize}[nosep]
	\item $\mathcal{L}_0$: 1 time series for the Independent System Operator (ISO), given by the sum of the 318 plant series;
	\item $\mathcal{L}_1$: 5 time series for the Transmission Zones (TZ), each given by the sum of 27, 73, 101, 86, and 31 plants, respectively;
	\item $\mathcal{L}_2$: 318 bottom time series at plant level (P).
\end{itemize}

\begin{figure}[H]
	\centering
	\resizebox{\linewidth}{!}{
		\begin{tikzpicture} [baseline=(current  bounding  box.center),
			every node/.append style={shape=circle,
				draw=black,
				minimum size=1cm}
			]
			\node at (0, 0) (A1){P$_{1}$};
			\node at (1.2, 0) (A2){$...$};
			\node at (2.4, 0) (A3){P$_{27}$};
			\node at (1.2, 1.5) (A){TZ$_1$};

			\relation{0.2}{A1}{A};
			\relation{0.2}{A2}{A};
			\relation{0.2}{A3}{A};
			
			\node at (3.6, 0) (B1){P$_{28}$};
			\node at (4.8, 0) (B2){$...$};
			\node at (6, 0) (B3){P$_{100}$};
			\node at (4.8, 1.5) (B){TZ$_2$};

			\relation{0.2}{B1}{B};
			\relation{0.2}{B2}{B};
			\relation{0.2}{B3}{B};
			
			\node at (7.2, 0) (C1){P$_{101}$};
			\node at (8.4, 0) (C2){$...$};
			\node at (9.6, 0) (C3){P$_{201}$};
			\node at (8.4, 1.5) (C){TZ$_3$};

			\relation{0.2}{C1}{C};
			\relation{0.2}{C2}{C};
			\relation{0.2}{C3}{C};
			
			\node at (10.8, 0) (D1){P$_{202}$};
			\node at (12, 0) (D2){$...$};
			\node at (13.2, 0) (D3){P$_{287}$};
			\node at (12, 1.5) (D){TZ$_4$};

			\relation{0.2}{D1}{D};
			\relation{0.2}{D2}{D};
			\relation{0.2}{D3}{D};
			
			\node at (14.4, 0) (E1){P$_{288}$};
			\node at (15.6, 0) (E2){$...$};
			\node at (16.8, 0) (E3){P$_{318}$};
			\node at (15.6, 1.5) (E){TZ$_5$};

			\relation{0.2}{E1}{E};
			\relation{0.2}{E2}{E};
			\relation{0.2}{E3}{E};
			
			\node at (8.4, 3) (T){ISO};
			\relation{0.2}{A}{T};
			\relation{0.2}{B}{T};
			\relation{0.2}{C}{T};
			\relation{0.2}{D}{T};
			\relation{0.2}{E}{T};
			
			\node[draw = none] at (18, 3) (L0){$\mathcal{L}_0$};
			\node[draw = none, color =red] at (18, 1.5) (L1){$\mathcal{L}_1$};
			\node[draw = none, color = blue] at (18, 0) (L2){$\mathcal{L}_2$};
			
			\node[shape=rectangle, opacity=0.1, rounded corners, inner sep=6pt, fill = blue, fit=(A1.south west)(L2.north east)](Bcs){};
			
			
			
			\node[shape=rectangle, opacity=0.1, rounded corners, inner sep=6pt, fill = red, fit=(A.south west)(L1.north east)](Bcs){};
			\node[shape=rectangle, opacity=0.1, rounded corners, inner sep=6pt, fill = gray, fit=(T.south west)(L0.north east)](Bcs){};
		
			\draw [decorate,decoration={brace,amplitude=10pt,mirror, raise = 0.5em}] (A1.south west) -- (A3.south east) node[draw=none, midway,yshift=-2.5em]{27 ts};
			\draw [decorate,decoration={brace,amplitude=10pt,mirror, raise = 0.5em}] (B1.south west) -- (B3.south east) node[draw=none, midway,yshift=-2.5em]{73 ts};
			\draw [decorate,decoration={brace,amplitude=10pt,mirror, raise = 0.5em}] (C1.south west) -- (C3.south east) node[draw=none, midway,yshift=-2.5em]{101 ts};
			\draw [decorate,decoration={brace,amplitude=10pt,mirror, raise = 0.5em}] (D1.south west) -- (D3.south east) node[draw=none, midway,yshift=-2.5em]{86 ts};
			\draw [decorate,decoration={brace,amplitude=10pt,mirror, raise = 0.5em}] (E1.south west) -- (E3.south east) node[draw=none, midway,yshift=-2.5em]{31 ts};
		\end{tikzpicture}}
		\vspace{-0.6cm}
	\caption{PV324 hierarchy}
	\label{Fig:PV324hierarchy}
\end{figure}
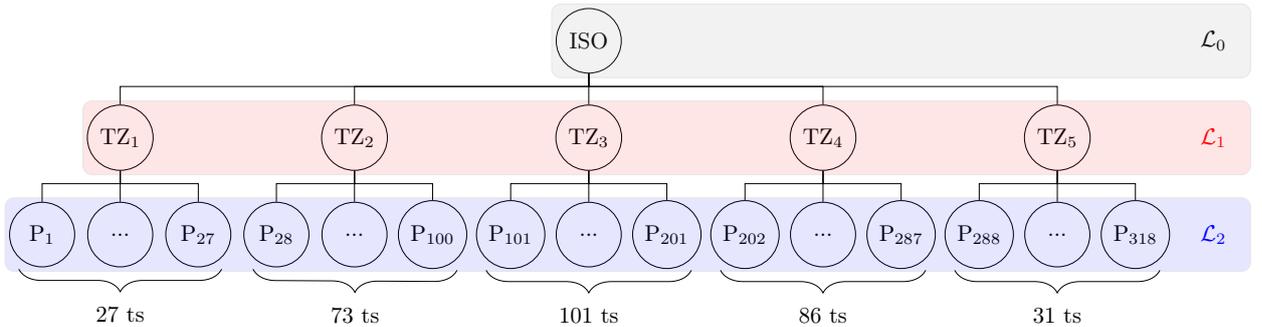

\noindent Following \cite{Yang2017te} and \cite{Yagli2019}, we perform a forecasting experiment with fixed length window of 14 days (i.e., 336 hours), forecast horizon of two days, and forecasting evaluation taking into account only the day-2 forecasts. These settings are coherent with the forecast operational submission requirements of CAISO, the public corporation managing power grid operations in California (\citealp{Makarov2011}, \citealp{Kleissl2013}).
For the 318 hourly time series, numerical weather prediction (NWP) forecasts generated by 3TIER (\citealp{3TIER2010}) are used as base forecasts. All the remaining base forecasts, for the six $\mathcal{L}_0$ and $\mathcal{L}_1$ time series at any time granularity $k \in \mathcal{K}$, and for the $\mathcal{L}_2$ 2-3-4-6-8-12-24 hours time series, are computed using the automatic ETS forecasting procedure of the \texttt{R}-package \texttt{forecast} (\citealp{forecast2021}), not controlling for possible negative forecasts.
Furthermore, day-ahead persistence is used as the reference model (PERS):
$$
\widehat{y}_{T+h|T,\; PERS}^{[1]} = y_{T+h-48}^{[1]}.
$$
Given the lead time of 48h, the day-ahead persistence takes the measurements made at day -2 as the forecasts for the operating day (\citealp{Yagli2019}, p. 394).
Benchmark forecasts at any level of the spatial hierarchy and for any temporal granularity are obtained through cross-temporal bottom-up of the 318 hourly bottom time series, that is:
\[
\widetilde{\textbf{y}}_{PERS_{bu}} = {\textbf{F}} \widehat{\textbf{b}}_{PERS}^{[1]} = \text{vec}\left(\widetilde{\textbf{Y}}_{PERS_{bu}}'\right).
\]
It is worth noting that the benchmark forecasts are always non-negative, and both spatially and temporally coherent.
Furthermore, these important properties are valid also for the forecasts obtained by cross-temporal bottom-up reconciliation of the 318 hourly bottom time series' NWP forecasts 3TIER:
\[
\widetilde{\textbf{y}}_{3TIER_{bu}} = {\textbf{F }} \widehat{\textbf{b}}_{3TIER}^{[1]} = \text{vec}\left(\widetilde{\textbf{Y}}_{3TIER_{bu}}'\right).
\]

In light of the results provided in section \ref{sec:techdigr}, the eight STR and TSR sequential reconciliations proposed by \cite{Yagli2019} reduce to the following four approaches: oct($ols$), oct($ols_{cs}, struc_{te}$), oct($struc_{cs}, ols_{te}$), oct($struc$).
In addition, \cite{Yagli2019} consider other four Temporal-then-Spatial-Reconciliation approaches, called TSR$_{\mathcal{L}_2}$, where the temporal reconciliation is applied only to the 318 plant level series' base forecasts. In this case, although constant matrices are used in either reconciliation steps, theorem \ref{thm:ite} no longer holds, and the obtained forecasts are temporally incoherent, as we show in the following.
In order to distinguish these approaches from the conventional sequential techniques, we call them
te$_{ols_2}$+cs$_{ols}$, 
te$_{struc_2}$+cs$_{ols}$,
te$_{ols_2}$+cs$_{struc}$, 
te$_{struc_2}$+cs$_{ols}$, respectively.

\subsection{Non-negativity and aggregation consistency issues}
\label{nn_issues}
Unlike PERS$_{bu}$ and 3TIER$_{bu}$, the other approaches considered by \cite{Yagli2019} produce some negative reconciled forecasts. Looking at table \ref{table_nn_Yagli_original}, we note that negative hourly forecasts have been produced in all 350 replications of the forecasting experiment, and there are some cases (in a range within 4 and 11 replications out of 350) where negative daily forecasts are produced as well\footnote{The relatively low number of negative hourly base forecasts, is due to the fact that the hourly  base forecasts of the 318 bottom time series are NWP 3TIER forecasts, that are always non-negative. Negative values are instead present in the ETS base forecasts for the aggregated series. Details can be found in the on-line appendix.}. 
Furthermore, the base forecasts are generally cross-temporally incoherent (i.e., in space and/or in time), and the sequential TSR$_{\mathcal{L}_2}$ approaches proposed by \cite{Yagli2019} 
are temporally incoherent.
This can be visually appreciated from figure \ref{fig:discrepancy_Yagli_original_new}, showing the boxplots from the distribution of the gross cross-sectional 
and temporal discrepancies registered in the 350 replications of the forecasting experiment, computed as (\citealp{DiFonzoGiro2021}):
\[
\begin{array}{rl}
	\text{Cross-sectional gross discrepancy:} & d_{\text{cs}} = ||\Uvet'\widehat{\Yvet}||_{1}\\
	\text{Temporal gross discrepancy:} & d_{\text{te}} = ||\Zvet'\widehat{\Yvet}'||_{1}
\end{array} ,
\]
where $||\Xvet||_{1} = \sum_{i,j}|x_{i,j}|$. For truly cross-temporally reconciled forecasts, neither cross-sectional nor temporal discrepancies are present, that is $d_{cs}=d_{te}=0$.

\begin{table}[H]
\centering
	\caption{Summary informations on the negative forecasts produced by the procedures considered by Yagli et al. (2019) in the forecasting experiment. Replications with at least a negative forecast (\# rep), number of series (\# series) with at least a negative forecast in a single replication (min and max), and min and max negative values found in all replications (values). Hourly and daily forecasts, forecast horizon: operating day.}
	\label{table_nn_Yagli_original}
		\begin{footnotesize}
		\setlength{\tabcolsep}{4pt}

\begin{tabular}[t]{l|ccccc|ccccc}
\toprule
\multirow{2}{*}{Approach}  & \multirow{2}{*}{\# rep (350)} & \multicolumn{2}{c}{\# series (324)} & \multicolumn{2}{c|}{values} & \multirow{2}{*}{\# rep (350)} & \multicolumn{2}{c}{\# series (324)} & \multicolumn{2}{c}{values}\\
         &              & min & max                           & min     & max &              & min & max                           & min     & max \\
\midrule
& \multicolumn{5}{c|}{Hourly forecasts} & \multicolumn{5}{c}{Daily forecasts}\\
base$^+$         & 350          & 4   & 6                             & -15.617 & -0.000& 11 & 0 & 35 & -51.205 & -0.006\\
oct$(ols$)       & 350          & 109 & 324                           & -69.165 & -0.000& 11 & 0 & 60 & -29.615 & -0.001\\
oct$(ols_{cs},struc_{te})$ & 350          & 137 & 324                           & -20.739 & -0.000& 11 & 0 & 33 & -10.363 & -0.000\\
oct$(struc_{cs},ols_{te})$ & 350          & 91  & 324                           & -17.961 & -0.000& 6 & 0 & 28 & -8.915 & -0.008\\
oct$(struc)$     & 350          & 89  & 324                           & -14.292 & -0.000& 4 & 0 & 11 & -1.020 & -0.026\\
te$(ols_{2})$+cs$(ols)$$^*$     & 350  & 42  & 324                           & -12.993 & -0.000& 10 & 0 & 72 & -15.037 & -0.018\\
te$(struc_{2})$+cs$(ols)$$^*$   & 350  & 36  & 324                           & -12.994 & -0.000& 10 & 0 & 71 & -14.791 & -0.005\\
te$(ols_{2})$+cs$(struc)$$^*$   & 350  & 144 & 324                           & -7.526 & -0.000 & 10 & 0 & 43 & -12.969 & -0.005\\
te$(struc_{2})$+cs$(struc)$$^*$ & 350  & 124 & 324                           & -7.642 & -0.000 & 5 & 0 & 31 & -7.042 & -0.002\\
\bottomrule
\multicolumn{11}{l}{\scriptsize $^+$ The approach produces spatial and temporal incoherent forecasts.} \\
\multicolumn{11}{l}{\scriptsize $^*$ The approach produces temporal incoherent forecasts.}
\end{tabular}

		\end{footnotesize}
\end{table}

\vspace{-.6cm}

\begin{figure}[h]
	\centering
	\includegraphics[width=0.9\linewidth]{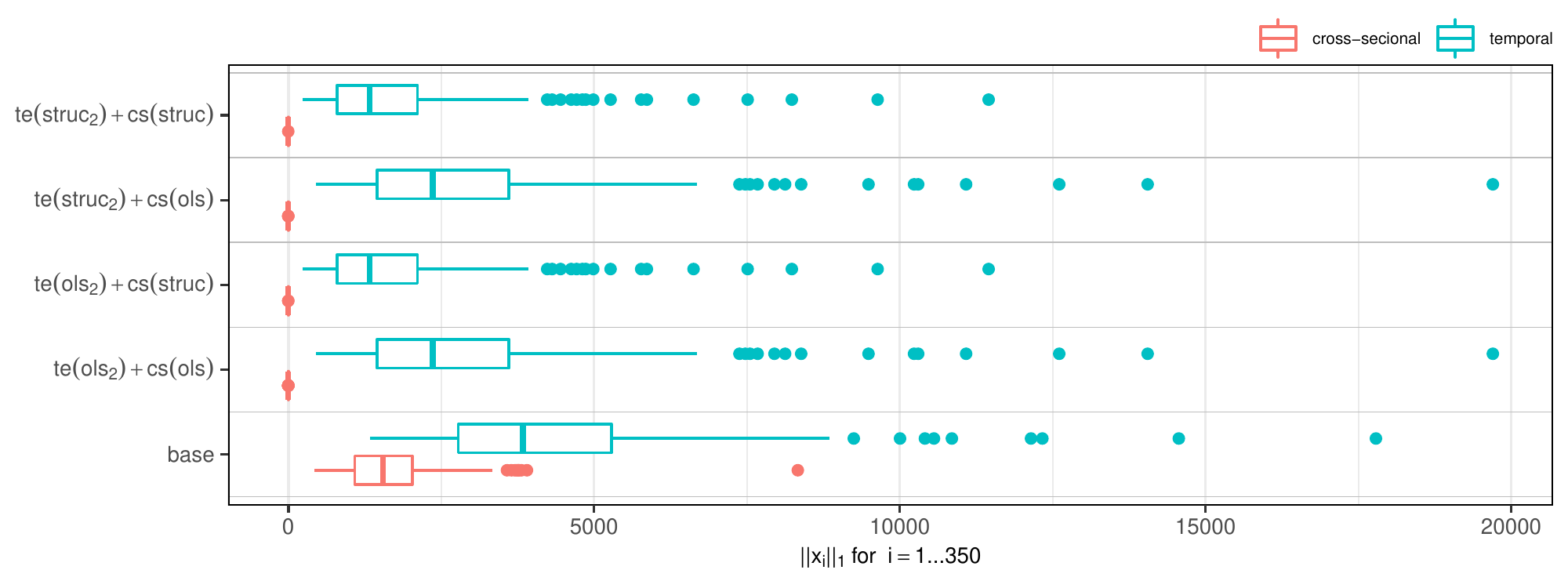}
	\caption{Base forecasts and TSR$_{\mathcal{L}_2}$ reconciled forecasts: cross-sectional (cs, in red) and temporal (te, in blue) discrepancies.}
	\label{fig:discrepancy_Yagli_original_new}
\end{figure}

In this case, forecast reconciliation may thus generate physically unreasonable values. Furthermore, if coherency is wished, the apparently innocuous practice of setting possible negative forecasts to zero is not advisable, since incoherence in sectional and/or time dimensions would be produced.
To overcome the above limitations, in section \ref{extanalysis} we propose a simple operational strategy, able to generate fully reconciled non-negative forecasts.


\subsection{Forecast evaluation}
Following \citealp{Yagli2019}, the accuracy of the considered approaches is measured in terms of normalised Root Mean Square Error (nRMSE), and Forecast Skill score:
\begin{equation}
	\label{nRMSE}
	\text{nRMSE}_{i,j}^{[k]} = \displaystyle
	\frac{\sqrt{\displaystyle\frac{1}{L}\displaystyle\sum_{l=1}^{L}\left(\widetilde{y}_{i,j,l}^{[k]} - y_{i,l}^{[k]}\right)^2}}{\displaystyle\frac{1}{L}\displaystyle\sum_{l=1}^{L}y_{i,l}^{[k]}}  , \qquad
	\text{Forecast Skill}_{i, j}^{[k]} = 1 - \frac{\text{nRMSE}_{i, j}^{[k]}}{\text{nRMSE}_{i, 0}^{[k]}},
\end{equation}
where $i=1,\ldots,n$, denotes the series, $k \in \mathcal{K}$, $j=0,\ldots,J$, denotes the forecasting approach ($j=0$ for the reference model PERS), and $L = nrep \cdot \displaystyle\frac{m}{k}$, where $nrep = 350$ is the number of the forecasting experiment replications.
Forecast Skill can be either negative (approach is worse than the reference model) or positive (approach is better than the reference model).
From table \ref{table_Yagli_original} it appears that on average all the considered forecasting approaches improve on the benchmark PERS$_{bu}$, in a range between 9.2\% (oct$(ols)$ for the 5 $\mathcal{L}_1$ level series' hourly forecasts) and 38.3\% (3TIER$_{bu}$ for the $\mathcal{L}_0$ total series' daily forecasts)\footnote{The results for all temporal aggregation orders, $k \in \{24,12,8,6,4,3,2,1\}$, are available in the on-line appendix.
}.
The forecasting accuracy indices for each Transmission Zone forecasts
are reported in table \ref{table_Yagli_original_comp_L2}.

\begin{table}[H]
\centering
	\caption{Forecast accuracies (nRMSE \%) and skills over the PERS$_{bu}$ benchmark of base forecasts and sequential reconciliation approaches. Free reconciliation procedures considered by \cite{Yagli2019}, Tables 2, 3, p. 395.
		Hourly (H) and Daily (D) forecasts, forecast horizon: operating day.
	Bold entries identify the best performing approach.}
	\label{table_Yagli_original}
		\begin{footnotesize}
		\setlength{\tabcolsep}{4pt}
		\begin{tabular}[t]{l|cc|cc|cc|cc|cc|cc}
\toprule
 &  \multicolumn{2}{c|}{$\mathcal{L}_0$} & \multicolumn{2}{c|}{$\mathcal{L}_1$} & \multicolumn{2}{c|}{$\mathcal{L}_2$}  &  \multicolumn{2}{c|}{$\mathcal{L}_0$} & \multicolumn{2}{c|}{$\mathcal{L}_1$} & \multicolumn{2}{c}{$\mathcal{L}_2$}\\
 \cmidrule(lr){2-7}  \cmidrule(lr){8-13}
 & H & D & H & D & H & D& H & D & H & D & H & D\\
\midrule
 & \multicolumn{6}{c|}{$nRMSE(\%)$} & \multicolumn{6}{c}{\textit{forecast skill}}\\
PERS$_{bu}$  & 34.62 & 20.23 & 43.15 & 24.57 & 59.75 & 30.65 & 0 & 0 & 0 & 0 & 0 & 0\\
3TIER$_{bu}$ & \textbf{26.03} & \textbf{12.48} & 33.95 & \textbf{16.75} & 53.46 & 25.19 & \textbf{0.248} & \textbf{0.383} & 0.213 & \textbf{0.318} & 0.105 & 0.178\\
base$^+$         & 27.85 & 18.17 & 34.24 & 20.94 & 53.46 & 25.82 & 0.196 & 0.102 & 0.206 & 0.148 & 0.105 & 0.158\\
oct($ols$)       & 30.69 & 17.91& 39.17 & 21.74 & 51.54 & 26.73& 0.113 & 0.115 & 0.092 & 0.115 & 0.137 & 0.128\\
oct$(ols_{cs},struc_{te})$ & 28.74 & 17.33 & 35.48 & 20.82 & 49.32 & 26.14& 0.170 & 0.143 & 0.178 & 0.152 & 0.175 & 0.147\\
oct$(struc_{cs},ols_{te})$ & 28.26 & 16.96 & 35.19 & 20.61 & 48.20 & 25.46& 0.184 & 0.162 & 0.184 & 0.161 & 0.193 & 0.169\\
oct$(struc)$     & 26.71 & 16.24 & 33.11 & 19.64 & 46.74 & \textbf{24.73} & 0.228 & 0.197 & 0.233 & 0.201 & 0.218 & \textbf{0.193}\\
te$(ols_2)$+cs$(ols)$$^*$  & 27.80 & 17.96 & 34.36 & 21.80 & 48.52 & 26.71& 0.197 & 0.112 & 0.204 & 0.113 & 0.188 & 0.129\\
te$(struc_2)$+cs$(ols)$$^*$  & 27.80 & 17.95 & 34.34 & 21.79 & 47.77 & 26.36& 0.197 & 0.113 & 0.204 & 0.113 & 0.201 & 0.140\\
te$(ols_2)$+cs$(struc)$$^*$ &  27.02 & 17.24 & 33.40 & 20.62 & 47.76 & 25.98& 0.219 & 0.148 & 0.226 & 0.161 & 0.201 & 0.152\\
te$(struc_2)$+ cs$(struc)$$^*$ &  26.17 & 16.68 & \textbf{32.44} & 19.96 & \textbf{46.25} & 25.03& 0.244 & 0.176 & \textbf{0.248} & 0.188 & \textbf{0.226} & 0.183\\
\bottomrule
\multicolumn{13}{l}{\scriptsize $^+$ The approach produces spatial and temporal incoherent forecasts.} \\
\multicolumn{13}{l}{\scriptsize $^*$ The approach produces temporal incoherent forecasts.}
\end{tabular}

		\end{footnotesize}
\end{table}


\begin{table}[H]
	\centering
	\caption{Forecast accuracies (nRMSE \%) and skills over the PERS$_{bu}$ benchmark of base forecasts and sequential reconciliation approaches for the series at $\mathcal{L}_1$ level. Free reconciliation procedures considered by \cite{Yagli2019}, Tables 2, 3, p. 395.
		Hourly (H) and Daily (D) forecasts, forecast horizon: operating day.
		Bold entries identify the best performing approach.}
	\label{table_Yagli_original_comp_L2}
	\begin{footnotesize}
		\setlength{\tabcolsep}{4pt}
		\begin{tabular}[t]{l|cc|cc|cc|cc|cc}
\toprule
 &  \multicolumn{2}{c|}{TZ$_1$} & \multicolumn{2}{c|}{TZ$_2$} & \multicolumn{2}{c|}{TZ$_3$}  &  \multicolumn{2}{c|}{TZ$_4$} & \multicolumn{2}{c}{TZ$_5$} \\
 \cmidrule(lr){2-7}  \cmidrule(lr){8-11}
 & H & D & H & D & H & D& H & D & H & D \\
\midrule
 & \multicolumn{10}{c}{$nRMSE(\%)$}\\
PERS$_{bu}$  			& 28.72 & 16.12 & 40.27 & 22.40 & 46.48 & 26.96 & 52.82 & 29.75 & 47.44 & 27.62 \\ 
3TIER$_{bu}$ 			& 22.81 & \textbf{10.43} & 33.34 & \textbf{16.72} & 34.01 & \textbf{17.06} & 46.40 & \textbf{22.80} & \textbf{33.16} & \textbf{16.75} \\ 
base$^+$         		& 22.41 & 13.55 & 32.05 & 18.70 & 35.14 & 21.95 & 44.94 & 26.89 & 36.67 & 23.58 \\ 
oct($ols$)       		& 29.27 & 14.89 & 34.58 & 19.25 & 39.55 & 22.91 & 47.30 & 26.50 & 45.16 & 25.15 \\ 
oct$(ols_{cs},struc_{te})$ 	& 24.26 & 13.93 & 32.32 & 18.47 & 36.69 & 21.78 & 43.89 & 25.30 & 38.81 & 23.59 \\
oct$(struc_{cs},ols_{te})$ 	& 23.44 & 13.50 & 32.89 & 18.83 & 37.42 & 22.33 & 44.76 & 25.97 & 38.90 & 23.49 \\ 
oct$(struc)$     		& 22.00 & 12.81 & 30.92 & 17.82 & 34.79 & 20.94 & 42.02 & 24.50 & 35.83 & 22.12 \\
te$(ols_2)$+cs$(ols)$$^*$  	& 22.95 & 15.46 & 32.06 & 19.23 & 35.12 & 22.38 & 44.93 & 26.91 & 36.73 & 25.02 \\ 
te$(struc_2)$+cs$(ols)$$^*$  	& 22.93 & 15.46 & 32.05 & 19.22 & 35.11 & 22.37 & 44.92 & 26.90 & 36.70 & 25.00 \\ 
te$(ols_2)$+cs$(struc)$$^*$ 	& 22.11 & 13.36 & 31.40 & 18.70 & 34.70 & 21.93 & 42.97 & 26.06 & 35.82 & 23.06 \\ 
te$(struc_2)$+ cs$(struc)$$^*$ 	& \textbf{21.58} & 13.00 & \textbf{30.48} & 18.09 & \textbf{33.58} & 21.18 & \textbf{41.86} & 25.21 & 34.69 & 22.32 \\ 
\midrule
 & \multicolumn{10}{c}{\textit{forecast skill}}\\
PERS$_{bu}$  			& 0 & 0 & 0 & 0 & 0 & 0 & 0 & 0 & 0 & 0 \\ 
3TIER$_{bu}$ 			& 0.206 & \textbf{0.353} & 0.172 & \textbf{0.254} & 0.268 & \textbf{0.367} & 0.121 & \textbf{0.234} & \textbf{0.301} & \textbf{0.394} \\ 
base$^+$         		& 0.220 & 0.159 & 0.204 & 0.165 & 0.244 & 0.186 & 0.149 & 0.096 & 0.227 & 0.146 \\ 
oct($ols$)       		& -0.019 & 0.076 & 0.141 & 0.141 & 0.149 & 0.150 & 0.104 & 0.109 & 0.048 & 0.089 \\ 
oct$(ols_{cs},struc_{te})$ 	& 0.155 & 0.136 & 0.197 & 0.176 & 0.211 & 0.192 & 0.169 & 0.150 & 0.182 & 0.146 \\ 
oct$(struc_{cs},ols_{te})$ 	& 0.184 & 0.163 & 0.183 & 0.159 & 0.195 & 0.172 & 0.153 & 0.127 & 0.180 & 0.149 \\ 
oct$(struc)$     		& 0.234 & 0.205 & 0.232 & 0.205 & 0.251 & 0.223 & 0.204 & 0.177 & 0.245 & 0.199 \\ 
te$(ols_2)$+cs$(ols)$$^*$  	& 0.201 & 0.041 & 0.204 & 0.141 & 0.244 & 0.170 & 0.149 & 0.095 & 0.226 & 0.094 \\ 
te$(struc_2)$+cs$(ols)$$^*$  	& 0.202 & 0.041 & 0.204 & 0.142 & 0.245 & 0.170 & 0.150 & 0.096 & 0.227 & 0.095 \\ 
te$(ols_2)$+cs$(struc)$$^*$ 	& 0.230 & 0.171 & 0.220 & 0.165 & 0.253 & 0.187 & 0.187 & 0.124 & 0.245 & 0.165 \\ 
te$(struc_2)$+ cs$(struc)$$^*$ 	& \textbf{0.249} & 0.194 & \textbf{0.243} & 0.192 & \textbf{0.277} & 0.215 & \textbf{0.208} & 0.153 & 0.269 & 0.192 \\ 
\bottomrule
\multicolumn{11}{l}{\scriptsize $^+$ The approach produces spatial and temporal incoherent forecasts.} \\
\multicolumn{11}{l}{\scriptsize $^*$ The approach produces temporal incoherent forecasts.}
\end{tabular}
	\end{footnotesize}
\end{table}

\noindent We observe that:
\begin{itemize}[nosep]
\item 3TIER$_{bu}$  gives the best forecasting performance for the total series ($\mathcal{L}_0$) at any temporal granularity, the second best being te$(struc_2)$+cs$(struc)$.
\item The approach te$(struc_2)$+cs$(struc)$ ranks first for 4 out of the 5 hourly upper series at level $\mathcal{L}_1$, whereas at daily level 3TIER$_{bu}$ `wins' again. The performance of oct$(struc)$ results very similar to that of te$(struc_2)$+cs$(struc)$.
\item te$(struc_2)$+cs$(struc)$ and oct$(struc)$ show the best performance for the 318 bottom time series at any temporal granularity, with a slight prevalence of oct$(struc)$ (for $ k \ge 3$), plus oct$(struc)$ forecasts are cross-temporally coherent.
\item Unlike te$(struc_2)$+ cs$(struc)$, 3TIER$_{bu}$ forecasts are always non-negative, and coherent both in space and time with the forecasts at any granularity.
\item For these reasons, 3TIER$_{bu}$ should be considered as a challenging competitor in the evaluation of the newly proposed procedures (section \ref{extanalysis}).
\end{itemize}

To give a complete picture of the evaluation results for hourly and daily forecasts, in figure \ref{fig:mcb_yang} the Multiple Comparison with the Best (MCB) Nemenyi tests are shown (\citealp{Koning2005}, \citealp{Kourentzes2019}, \citealp{Makridakis2022}). This allows to establish if the forecasting performances of the considered techniques are significantly different.
At daily level, oct$(struc)$ ranks first, and is significantly better than the other forecasting approaches, with  te$(struc_2)$+cs$(struc)$ at the second place. This result is reversed at hourly level: te$(struc_2)$+cs$(struc)$ ranks first and is significantly better than all the other approaches, with oct$(struc)$ at the second place. However, it should be recalled that the forecasts produced by te$(struc_2)$+cs$(struc)$ are not temporally coherent, which means that the sum of the hourly forecasts does not match with the corresponding daily forecasts (see figure \ref{fig:discrepancy_Yagli_original_new}).

\begin{figure}[H]
	\centering
	\includegraphics[width=0.9\linewidth]{./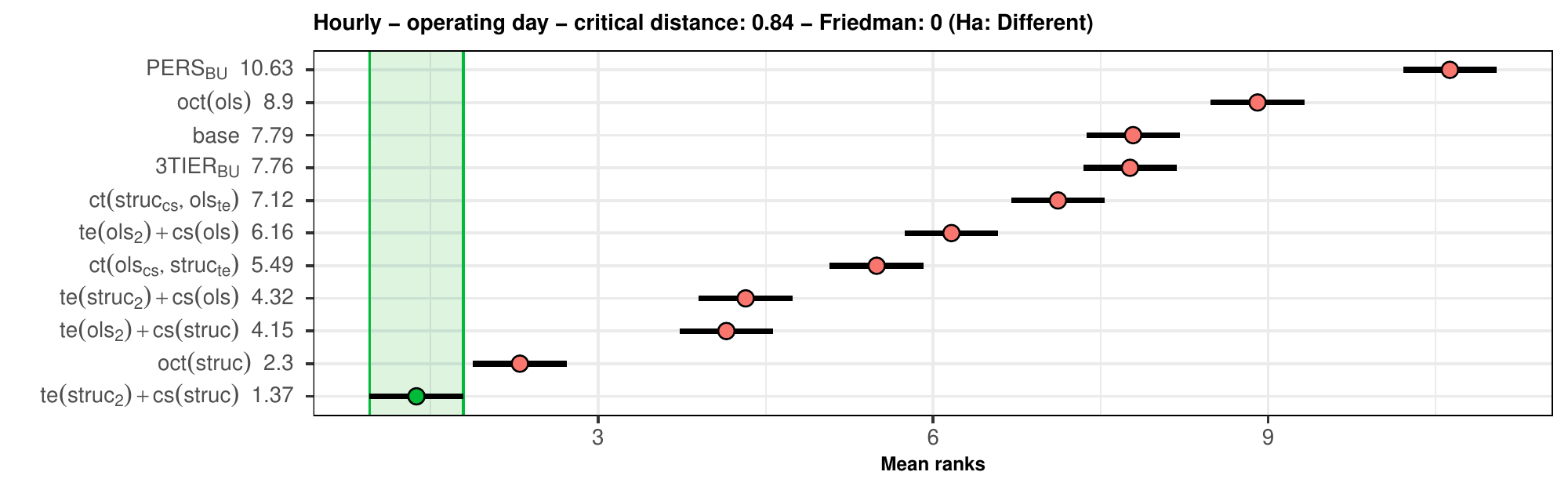}
	\includegraphics[width=0.9\linewidth]{./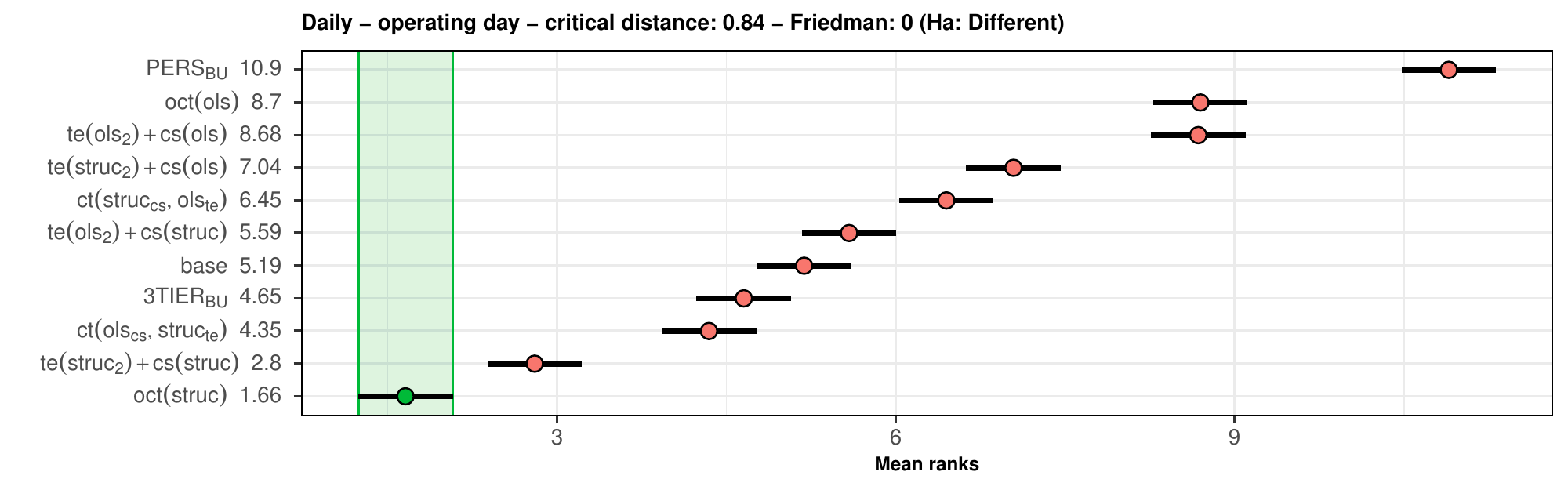}
	\caption{MCB Nemenyi test results: average ranks and 95\% confidence intervals. The free reconciliation approaches considered by \cite{Yagli2019} are sorted vertically according to the nRMSE\% mean rank. Hourly (top panel), and Daily (bottom panel) forecasts for $\mathcal{L}_0, \mathcal{L}_1, \mathcal{L}_2$ levels (324 series). Forecast horizon: operating day.
	The mean rank of each approach is displayed to the right of their names. If the intervals of two forecast reconciliation approaches do not overlap, this indicates a statistically different performance. Thus, approaches that do not overlap with the green interval are considered significantly worse than the best, and vice-versa.}
	\label{fig:mcb_yang}
\end{figure}

Limiting ourselves to consider fully coherent reconciled forecasts, the scatter plots of the 324 couples of nRMSE(\%) for oct$(struc)$ vs., respectively, PERS$_{bu}$ and 3TIER$_{bu}$ (figure \ref{fig:comparison_ct-struc_free_vs_PERS_3TIER}), show that the most performing regression-based cross-temporal reconciliation approach improves uniformly on the benchmark, and in the majority of cases on 3TIER$_{bu}$, particularly at hourly level ($k=1$), where the 89\% of the variables register an improvement in nRMSE. However, it is worth noting that at daily level, this value decreases to 49\%, which means that the NWP forecasts may still play a role at lower time granularity.

\begin{figure}[H]
	\centering
	\includegraphics[width=0.9\linewidth]{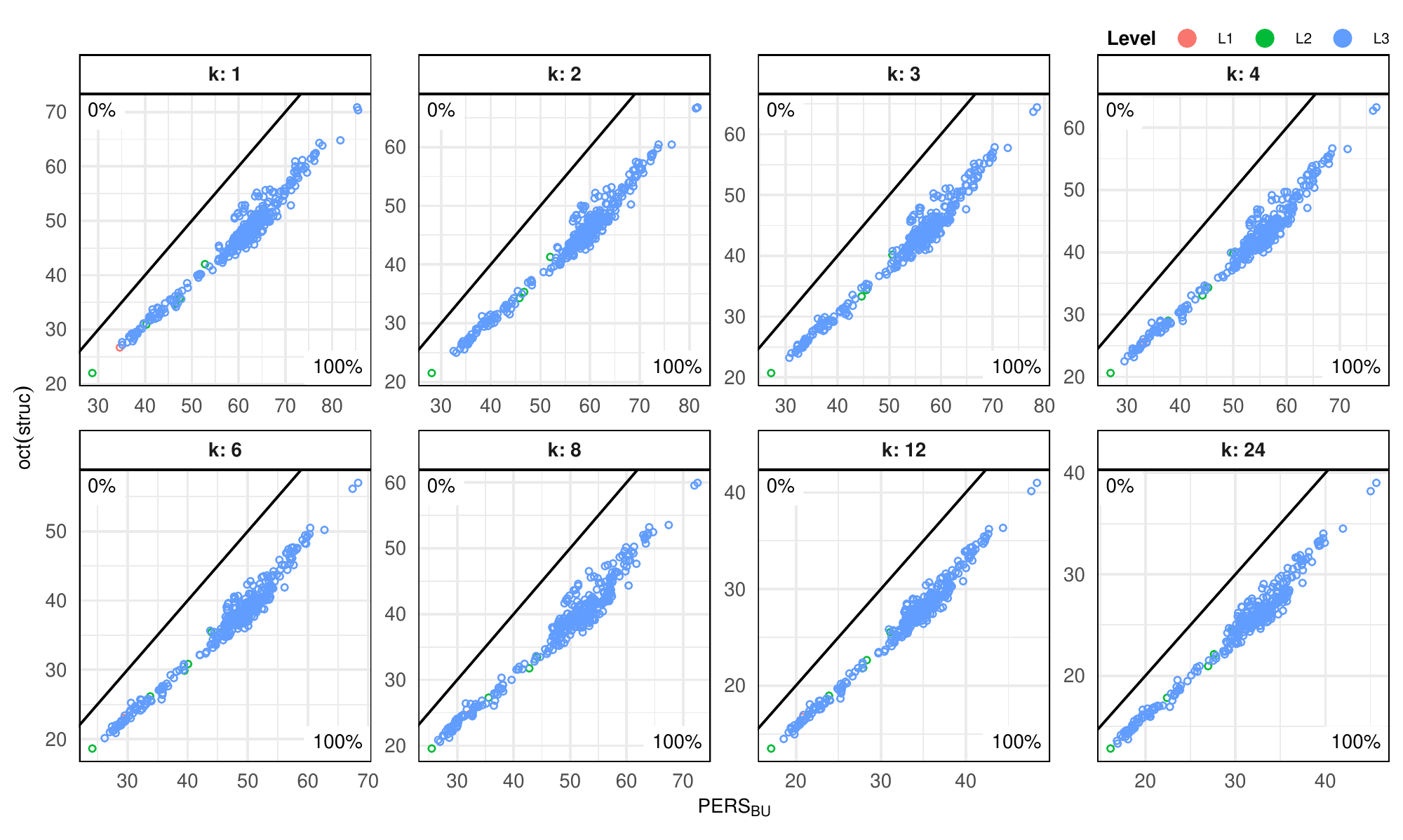}
	\includegraphics[width=0.9\linewidth]{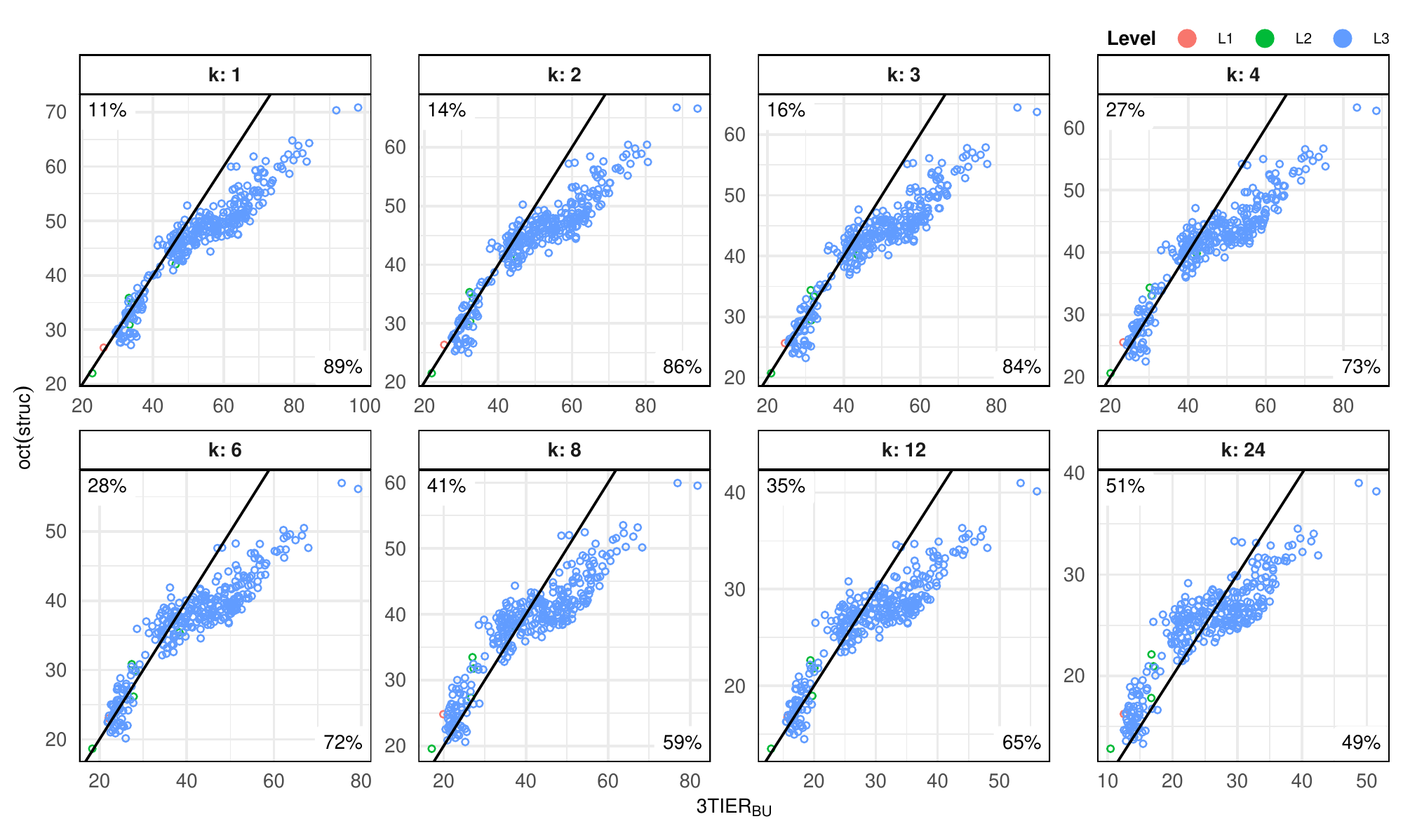}
	\caption{Comparison of nRMSE(\%) between PERS$_{bu}$ and ct$(struc)$ (top panel), and between 3TIER$_{bu}$ and ct$(struc)$ (bottom panel). The black line represents the bisector, where the nRMSE's for both approaches are equal. On the top-left (bottom-right) corner of each graph, the percentage of points above (below) the bisector is reported.}
	\label{fig:comparison_ct-struc_free_vs_PERS_3TIER}
\end{figure}

\section{Extended analysis: non-negative cross-temporal reconciliation}
\label{extanalysis}

In this section, we explore the performance of forecast reconciliation approaches able to produce
non-negative PV forecasts, both temporally and spatially coherent. For this reason, among the approaches proposed by \cite{Yagli2019}, we consider the reference benchmark PERS$_{bu}$, the NWP base forecasts 3TIER$_{bu}$, and oct$(struc)$. 
These approaches are then compared with the following 7 cross-temporal forecasting procedures:
\begin{itemize}
	\item KA$(wlsv_{te},wls_{cs})$: the heuristic approach by \cite{Kourentzes2019}, using te$(wlsv)$ in the first step, and cs$(wls)$ in the second, respectively;
	\item ct$(wlsv_{te},bu_{cs})$, ct$(struc_{te},bu_{cs})$, ct$(wls_{cs},bu_{te})$, and ct$(struc_{cs},bu_{te})$: partly bottom-up cross temporal reconciliation (see section \ref{sec:ctbu}) of 
	the high frequency bottom time series' reconciled forecasts  according to, respectively, te$(wlsv)$, te$(struc)$, cs$(wls)$, and cs$(struc)$;
	\item oct$(wlsv)$ and oct$(bdshr)$: optimal (in least squares sense) cross-temporal forecast reconciliation approaches using the in-sample forecast errors (\citealp{DiFonzoGiro2021}).
\end{itemize}


\subsection{Non negative forecast reconciliation: sntz}
Each approach, when used in its `free' version, i.e., without considering non-negative constraints in the linearly constrained quadratic program (\ref{glsobj}), is not guaranteed to always produce non-negative reconciled forecasts (see section \ref{nn_issues}). This fact may be an issue for the analyst, since in many practical situations negative forecasts could have no meaning, thus undermining the quality of the results found and the conclusions thereof.
In what follows, we consider a simple heuristic strategy to avoid negative reconciled forecasts, without using any sophisticated, and time consuming, numerical optimization procedure\footnote{Recent contributions on this topic in the hierarchical forecasting field are \cite{Wickramasuriya2020} and \cite{FoReco2022}.}.
More precisely, possible negative values of the free reconciled high-frequency bottom time series forecasts are set to zero. Denote $\widetilde{\Bvet}^{[1]}_0$ the matrix containing the non-negative reconciled forecasts produced by a `free' approach,  and the `zeroed' ones. The complete vector of non-negative cross-temporal reconciled forecasts is computed as the cross-temporal bottom-up aggregation of $\widetilde{\bvet}^{[1]}_0 = \text{vec}\left[\left(\widetilde{\Bvet}^{[1]}_0\right)'\right]$, that is, according to expression (\ref{ct-bu_formula_pap}):
$$
\widetilde{\yvet}_0 = \Fvet\widetilde{\bvet}^{[1]}_0 .
$$

\noindent We call set-negative-to-zero (sntz) this simple, and quick device to obtain non negative reconciled forecasts. While it certainly increases the forecasting accuracy of the high-frequency bottom time series forecasts wrt the `free' counterparts, this does not hold true in general for the upper level series forecasts. In addition, even if the originally reconciled forecasts are obtained according to an unbiased approach, the sntz-reconciled forecasts are no more unbiased, like the non-negative forecasts obtained through numerical optimization procedures (\citealp{Wickramasuriya2020}). However, in practical situations the differences between the results produced by the sntz heuristic, and those obtained through a numerical optimization procedure could be negligible. For example, for the dataset and the forecasting experiment in hand, we have found that the forecasting performance of the non-negative reconciliation approaches through the non-linear optimization procedure \texttt{osqp} (\citealp{Stellato2020}) implemented in \texttt{FoReco} (\citealp{FoReco2022}), is basically the same as the one obtained with the sntz heuristic.
Table \ref{table_octstrucnn} shows the nRMSE(\%) of the reconciled forecasts through the oct($struc$) approach in both free and non-negative (sntz and osqp) variants. It is worth noting that the simple sntz heuristic always gives the lowest nRMSE(\%), independently of the temporal granularity of the forecasts\footnote{Similar results have been found for all the considered reconciliation approaches. The details are available at request from the authors.}.

\begin{table}[H]
	\centering
	\caption{Forecast accuracy (nRMSE\%) of free and non-negative reconciled forecasts using the oct($struc$) approach. All temporal aggregation orders are considered, from hourly ($k=1$) to daily ($k=24$). Forecast horizon: operating day.
	Bold entries identify the best approach.}
	\label{table_octstrucnn}
	\begin{footnotesize}
	\setlength{\tabcolsep}{4pt}
\begin{tabular}[t]{ccccccccc}
	\toprule
non-negative	      & \multicolumn{8}{c}{$k$: temporal aggregation order}\\
reconciliation & 1 & 2 & 3 & 4 & 6 & 8 & 12 & 24\\
	\midrule
\multicolumn{9}{c}{$\mathcal{L}_0$ (1 series)}\\
free &  26.71 & 26.35 & 25.67 & 25.57 & 23.03 & 24.80 & 16.74 & 16.24\\
sntz &  \textbf{26.64} & \textbf{26.27} & \textbf{25.56} & \textbf{25.48} & \textbf{22.86} & \textbf{16.25} & \textbf{24.71} & \textbf{15.73}\\
osqp &  26.74 & 26.36 & 25.67 & 25.55 & 23.01 & 24.77 & 16.36 & 15.85\\
	\midrule
\multicolumn{9}{c}{$\mathcal{L}_1$ (5 series)}\\
free &  33.11 & 32.54 & 31.60 & 31.37 & 28.19 & 30.06 & 20.49 & 19.64\\
sntz &  \textbf{32.99} & \textbf{32.41} & \textbf{31.45} & \textbf{31.23} & \textbf{27.96} & \textbf{29.91} & \textbf{19.88} & \textbf{19.00}\\
osqp &  33.11 & 32.52 & 31.58 & 31.32 & 28.15 & 29.99 & 20.01 & 19.14\\
	\midrule
\multicolumn{9}{c}{$\mathcal{L}_2$ (318 series)}\\
free & 46.74 & 44.02 & 42.05 & 41.07 & 36.67 & 38.16 & 26.56 & 24.73\\
sntz &  \textbf{46.51} & \textbf{43.80} & \textbf{41.80} & \textbf{40.83} & \textbf{36.34} & \textbf{37.90} & \textbf{25.85} & \textbf{23.97}\\
osqp & 46.63 & 43.93 & 41.94 & 40.93 & 36.53 & 37.99 & 25.97 & 24.10\\
	\bottomrule
\end{tabular}

\end{footnotesize}
\end{table}

\noindent This result is visually confirmed by the graphs in figure \ref{fig:octstrucnn}, showing 
the scatter plots of the 324 couples of nRMSE(\%) for oct$(struc)_{osqp}$ \textit{vs}. oct$(struc)_{sntz}$. For this dataset, we found that oct$(struc)_{sntz}$  improves on oct$(struc)_{osqp}$ in not less than the 94\% of the 324 series for any temporal granularity.

\begin{figure}[htb]
	\centering
	\includegraphics[width=1.00\linewidth]{./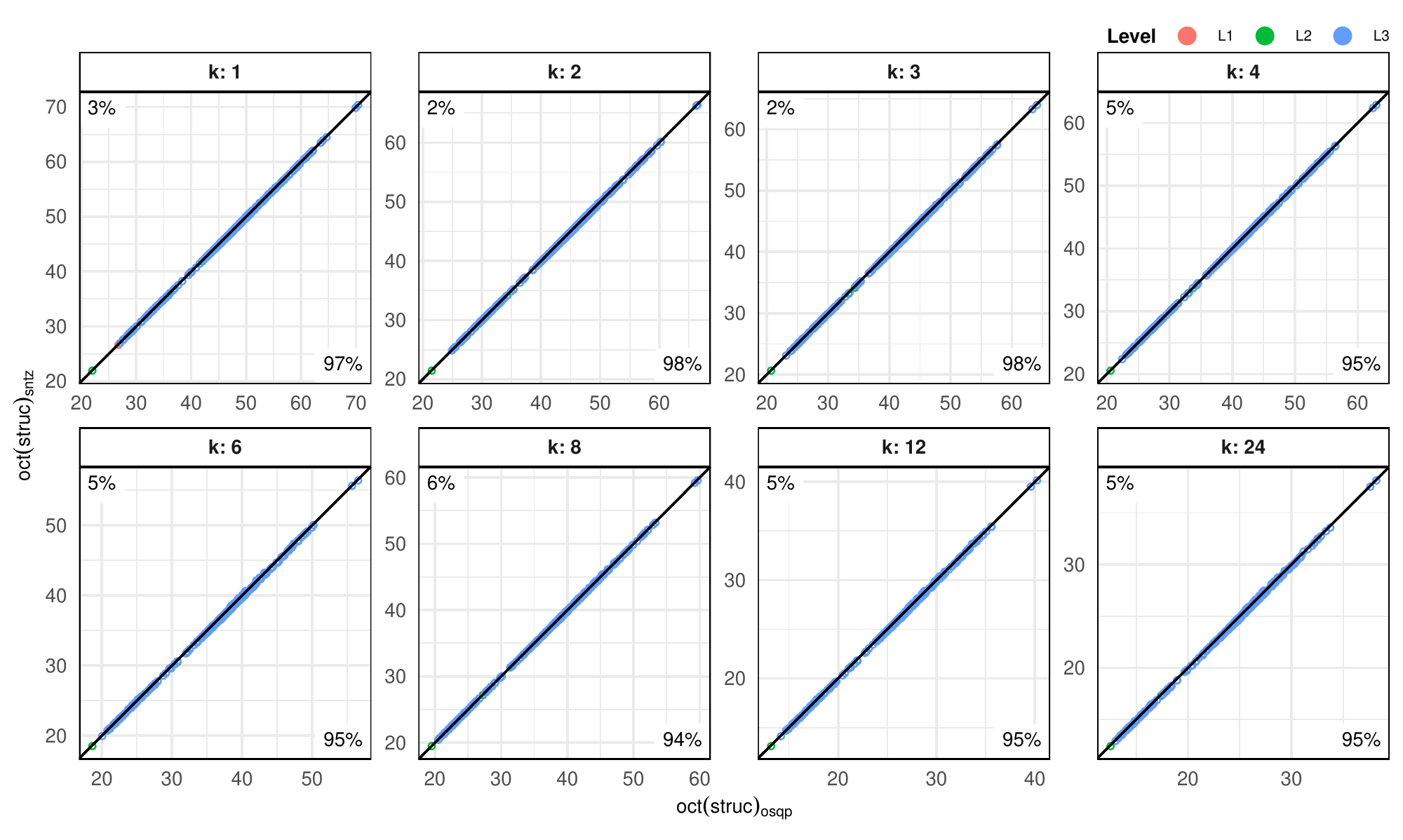}
	\caption{Comparison of nRMSE(\%) between sntz and osqp non-negative forecast reconciliation using the oct($struc$) approach. The black line represents the bisector, where the nRMSE's for oct($struc$)$_{osqp}$ and oct($struc$)$_{sntz}$ are equal. On the top-left (bottom-right) corner of each graph, the percentage of points above (below) the bisector is reported.}
	\label{fig:octstrucnn}
\end{figure}

\subsection{Forecast accuracy of the selected approaches}

Table \ref{table_final} shows the nRMSE(\%)  of the considered non-negative cross-temporal forecast reconciliation approaches, using the sntz heuristic, and the corresponding forecast skills over the benchmark forecasts PERS$_{bu}$. Overall, the accuracy improvements of the newly proposed approaches over the persistence model are in the range 16.5\%–46.5\%, whereas $3TIER_{bu}$ improvements are in the range 10.5\%-38.3\%.
Partly bottom-up approaches, with cross-sectional reconciliation at the first step (i.e., ct$(wls_{cs}, bu_{te})$ and
ct$(struc_{cs}, bu_{te})$) show a good performance for the six series at levels $\mathcal{L}_0$ and $\mathcal{L}_1$. For the 318 most disaggregated series at the bottom level, ct$(wlsv_{te}, bu_{cs})$ and oct($wlsv$) rank first and second, respectively, and their accuracy indices are very close each other.

\begin{table}[H]
\centering
	\caption{Forecast accuracy of selected non-negative, cross-temporal reconciliation approaches and base forecasts in terms of nRMSE(\%). Hourly and daily forecasts, forecast horizon: operating day.
	Bold entries and italic entries identify the best and the second best performing approaches, respectively.}
	\label{table_final}
	\begin{footnotesize}
		\setlength{\tabcolsep}{4pt}
	    \begin{tabular}[t]{l|cc|cc|cc|cc|cc|cc}
\toprule
 &  \multicolumn{2}{c|}{$\mathcal{L}_0$} & \multicolumn{2}{c|}{$\mathcal{L}_1$} & \multicolumn{2}{c|}{$\mathcal{L}_2$}  &  \multicolumn{2}{c|}{$\mathcal{L}_0$} & \multicolumn{2}{c|}{$\mathcal{L}_1$} & \multicolumn{2}{c}{$\mathcal{L}_2$}\\
 \cmidrule(lr){2-7}  \cmidrule(lr){8-13}
 & H & D & H & D & H & D& H & D & H & D & H & D\\
\midrule
 & \multicolumn{6}{c|}{$nRMSE(\%)$} & \multicolumn{6}{c}{\textit{forecast skill}}\\
PERS$_{bu}$ & 34.62 & 20.23 & 43.15 & 24.57 & 59.75 & 30.65 & 0.000 & 0.000 & 0.000 & 0.000 & 0.000 & 0.000\\
3TIER$_{bu}$ & 26.03 & 12.48 & 33.95 & 16.75 & 53.46 & 25.19 & 0.248 & 0.383 & 0.213 & 0.318 & 0.105 & 0.178\\
oct$(struc)$ & 26.64 & 15.73 & 32.99 & 19.00 & 46.51 & 23.97 & 0.231 & 0.222 & 0.235 & 0.227 & 0.222 & 0.218\\
\hline
KA$(wlsv_{te},wls_{cs})$ & 23.59 & 13.55 & 30.23 & 16.91 & 44.48 & 22.21 & 0.319 & 0.330 & 0.299 & 0.312 & 0.256 & 0.275\\
ct$(struc_{cs},bu_{te})$ & \textit{21.99} & \textit{12.38} & \textbf{28.80} & \textit{15.82} & 49.76 & 24.33 & \textit{0.365} & \textit{0.388} & \textbf{0.333} & 0.356 & 0.167 & 0.206\\
ct$(wls_{cs},bu_{te})$ & \textbf{21.23} & \textbf{10.82} & \textit{28.97} & \textbf{15.01} & 49.88 & 23.66 & \textbf{0.387} & \textbf{0.465} & \textit{0.329} & \textbf{0.389} & 0.165 & 0.228\\
ct$(struc_{te},bu_{cs})$ & 24.81 & 14.43 & 31.41 & 17.80 & 45.30 & 22.93 & 0.283 & 0.287 & 0.272 & 0.276 & 0.242 & 0.252\\
ct$(wlsv_{te},bu_{cs})$ & 23.50 & 13.42 & 30.13 & 16.78 & \textbf{44.42} & \textbf{22.12} & 0.321 & 0.337 & 0.302 & 0.317 & \textbf{0.257} & \textbf{0.278}\\
oct$(wlsv)$ & 23.63 & 13.64 & 30.21 & 17.01 & \textit{44.44} & \textit{22.27} & 0.317 & 0.326 & 0.300 & 0.308 & \textit{0.256} & \textit{0.273}\\
oct$(bdshr)$ & 24.13 & 13.92 & 30.45 & 17.12 & 45.24 & 22.54 & 0.303 & 0.312 & 0.294 & 0.303 & 0.243 & 0.265\\
\bottomrule
\end{tabular}
	\end{footnotesize}
\end{table}

\noindent From table \ref{table_L1L2L3_k1-24_final_l2}, we see that even at distinct $\mathcal{L}_1$ series (Transmission Zones), the newly proposed approaches perform better than all the approaches considered by \cite{Yagli2019}, as reported in table \ref{table_Yagli_original_comp_L2}.

\begin{table}[H]
	\centering
	\caption{Forecast accuracy of selected non-negative, cross-temporal reconciliation approaches and base forecasts in terms of nRMSE(\%) for the series at $\mathcal{L}_1$ level. Hourly and daily forecasts, forecast horizon: operating day.
		Bold entries and italic entries identify the best and the second best performing approaches, respectively.}
	\label{table_L1L2L3_k1-24_final_l2}
	\begin{footnotesize}
		\setlength{\tabcolsep}{4pt}
		\begin{tabular}[t]{l|cc|cc|cc|cc|cc}
\toprule
 &  \multicolumn{2}{c|}{TZ$_1$} & \multicolumn{2}{c|}{TZ$_2$} & \multicolumn{2}{c|}{TZ$_3$}  &  \multicolumn{2}{c|}{TZ$_4$} & \multicolumn{2}{c}{TZ$_5$} \\
 \cmidrule(lr){2-7}  \cmidrule(lr){8-11}
 & H & D & H & D & H & D& H & D & H & D \\
\midrule
 & \multicolumn{10}{c}{$nRMSE(\%)$}\\
PERS$_{bu}$ 				& 28.72 & 16.12 & 40.27 & 22.40 & 46.48 & 26.96 & 52.82 & 29.75 & 47.44 & 27.62 \\ 
3TIER$_{bu}$ 				& 22.81 & \textit{10.43} & 33.34 & 16.72 & 34.01 & 17.06 & 46.40 & 22.80 & 33.16 & \textit{16.75} \\ 
oct$(struc)$ 				& 21.92 & 12.41 & 30.84 & 17.30 & 34.67 & 20.21 & 41.84 & 23.84 & 35.66 & 21.23 \\
\hline
KA$(wlsv_{te},wls_{cs})$ 	& 20.23 & 11.10 & 28.57 & 15.43 & 30.98 & 17.45 & 39.72 & 22.09 & 31.64 & 18.51 \\ 
ct$(struc_{cs},bu_{te})$ 	& \textit{20.15} & 10.92 & \textbf{27.01} & \textit{14.37} & \textbf{28.74} & \textit{15.74} & \textbf{38.12} & \textit{20.89} & \textit{29.99} & 17.17 \\ 
ct$(wls_{cs},bu_{te})$ 		& \textbf{19.33} & \textbf{9.72} & \textit{28.11} & \textbf{14.18} & \textit{29.55} & \textbf{15.10} & \textit{39.20} & \textbf{20.17} & \textbf{28.68} & \textbf{15.88} \\ 
ct$(struc_{te},bu_{cs})$ 	& 21.04 & 11.63 & \textit{29.55} & 16.24 & 32.57 & 18.73 & 40.51 & 22.67 & 33.37 & 19.75 \\ 
ct$(wlsv_{te},bu_{cs})$ 		& 20.20 & 10.98 & 28.53 & 15.35 & 30.89 & 17.37 & 39.69 & 22.01 & 31.32 & 18.19 \\ 
oct$(wlsv)$ 				& 20.18 & 11.18 & 28.53 & 15.50 & 30.97 & 17.53 & 39.56 & 22.16 & 31.84 & 18.69 \\ 
oct$(bdshr)$ 				& 21.00 & 11.74 & 29.25 & 16.02 & 31.63 & 17.85 & 39.73 & 22.27 & 30.65 & 17.74 \\
\midrule
 & \multicolumn{10}{c}{\textit{forecast skill}}\\
PERS$_{bu}$ 				& 0 & 0 & 0 & 0 & 0 & 0 & 0 & 0 & 0 & 0 \\ 
3TIER$_{bu}$ 				& 0.206 & \textit{0.353} & 0.172 & 0.254 & 0.268 & 0.367 & 0.121 & 0.234 & 0.301 & \textit{0.394} \\ 
oct$(struc)$ 				& 0.237 & 0.230 & 0.234 & 0.228 & 0.254 & 0.250 & 0.208 & 0.199 & 0.248 & 0.231 \\ 
\hline
KA$(wlsv_{te},wls_{cs})$ 	& 0.296 & 0.312 & 0.291 & 0.311 & 0.334 & 0.353 & 0.248 & 0.258 & 0.333 & 0.330 \\ 
ct$(struc_{cs},bu_{te})$ 	& \textit{0.299} & 0.323 & \textbf{0.329} & \textit{0.359} & \textbf{0.382} & \textit{0.416} & \textbf{0.278} & \textit{0.298} & \textit{0.368} & 0.378 \\ 
ct$(wls_{cs},bu_{te})$ 		& \textbf{0.327} & \textbf{0.397} & \textit{0.302} & \textbf{0.367} & \textit{0.364} & \textbf{0.440} & \textit{0.258} & \textbf{0.322} & \textbf{0.396} & \textbf{0.425} \\
ct$(struc_{te},bu_{cs})$ 	& 0.268 & 0.279 & 0.266 & 0.275 & 0.299 & 0.305 & 0.233 & 0.238 & 0.297 & 0.285 \\ 
ct$(wlsv_{te},bu_{cs})$ 		& 0.297 & 0.319 & 0.292 & 0.315 & 0.335 & 0.356 & 0.249 & 0.260 & 0.340 & 0.341 \\ 
oct$(wlsv)$ 				& 0.297 & 0.307 & 0.292 & 0.308 & 0.334 & 0.350 & 0.251 & 0.255 & 0.329 & 0.323 \\  
oct$(bdshr)$ 				& 0.269 & 0.272 & 0.274 & 0.285 & 0.320 & 0.338 & 0.248 & 0.251 & 0.354 & 0.358 \\ 
\bottomrule
\end{tabular}
	\end{footnotesize}
\end{table}

\noindent In addition, almost all the new approaches significantly outperform $PERS_{bu}$, $3TIER_{bu}$, and oct$(struc)$ for both hourly and daily forecasts (figure \ref{fig:bigvertical2}).

\begin{figure}[H]
	\centering
	\includegraphics[width=1.00\linewidth]{./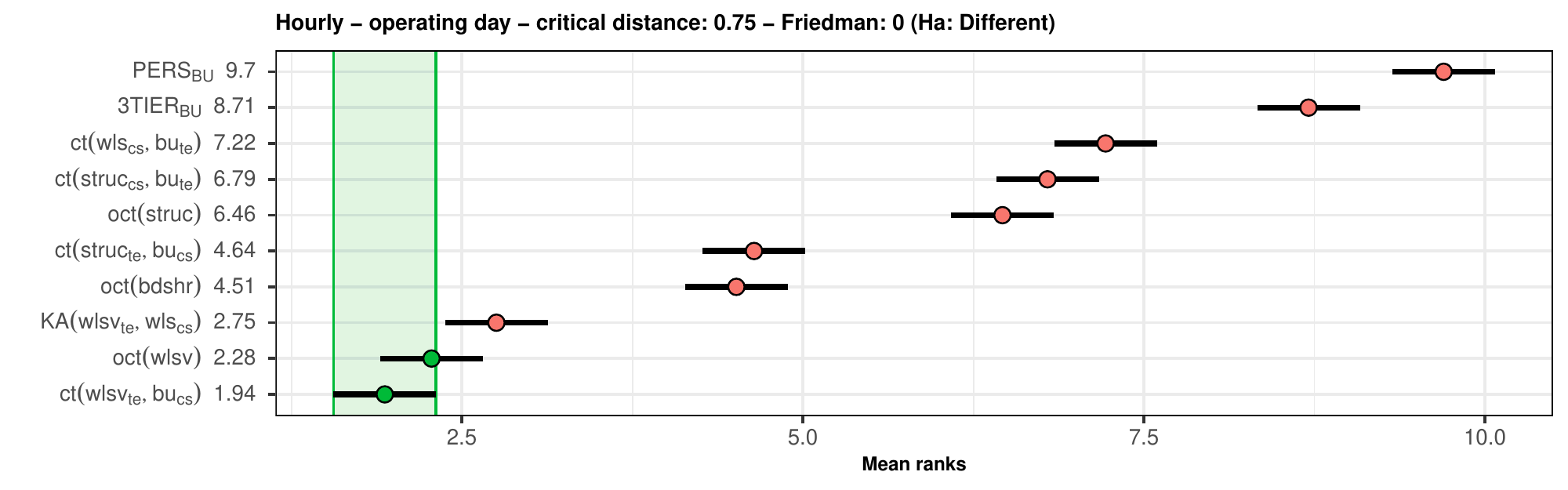}
	\includegraphics[width=1.00\linewidth]{./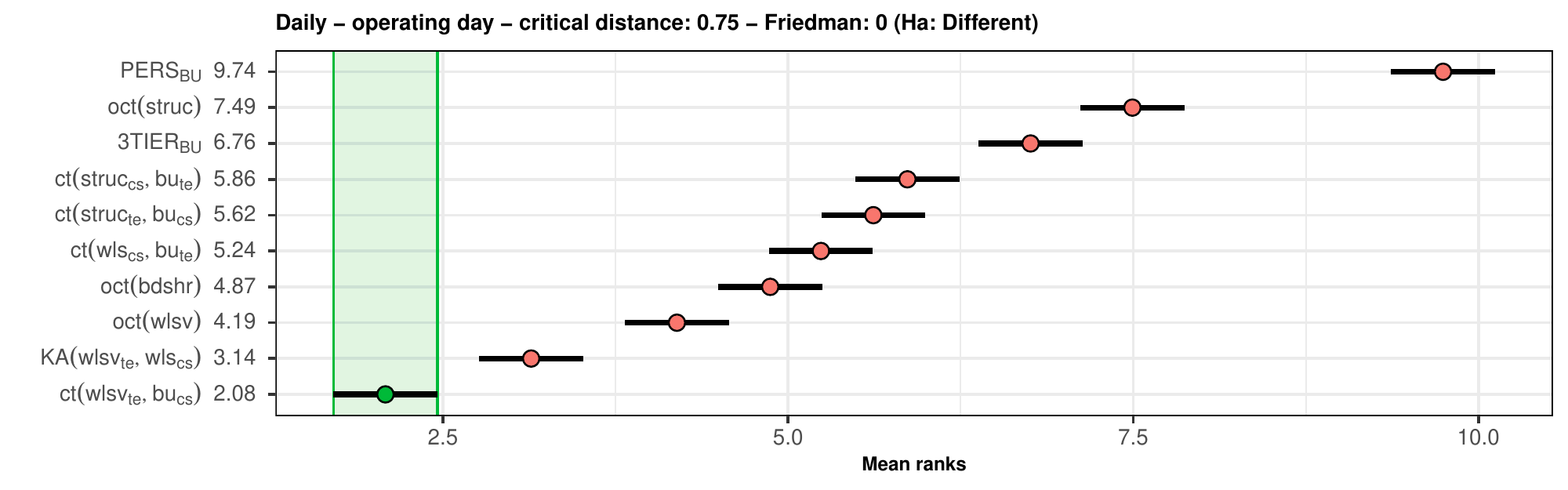}
	\caption{MCB-Nemenyi test on selected non-negative cross-temporal reconciliation approaches with operating day forecast horizon. $\mathcal{L}_0, \mathcal{L}_1, \mathcal{L}_2$ levels (324 series). Top panel: hourly forecasts; Bottom panel: daily forecasts.
		The mean rank of each approach is displayed to the right of their names. If the intervals of two forecast reconciliation approaches do not overlap, this indicates a statistically different performance. Thus, approaches that do not overlap with the green interval are considered significantly worse than the best, and vice-versa.}
	\label{fig:bigvertical2}
\end{figure}

\noindent Finally, focusing on the two best performing approaches of the forecasting experiment (ct$(wlsv_{te},bu_{cs})$ and oct($wlsv)$), and looking at the nRMSEs of the individual series, it is worth noting that:
\begin{itemize}
\item ct$(wlsv_{te},bu_{cs})$ always produces more accurate forecasts than oct$(struc)$, for any series and granularity (figure \ref{fig:comparison_ct-struc0-ctbu-t-wlsv0}),

\item the accuracy increases of ct$(wlsv_{te},bu_{cs})$ over the NWP approach 3TIER$_{bu}$ are still clear (figure \ref{fig:comp2}), even though for daily forecasts 3TIER$_{bu}$ performs better in about 1 case out 4;

\item the accuracy of ct$(wlsv_{te},bu_{cs})$ and oct($wlsv)$ are practically indistinguishable (figure \ref{fig:comparison_ct-wlsv0-ctbu-t-wlsv0});
\end{itemize}

\begin{figure}[H]
	\centering
	\includegraphics[width=0.9\linewidth]{./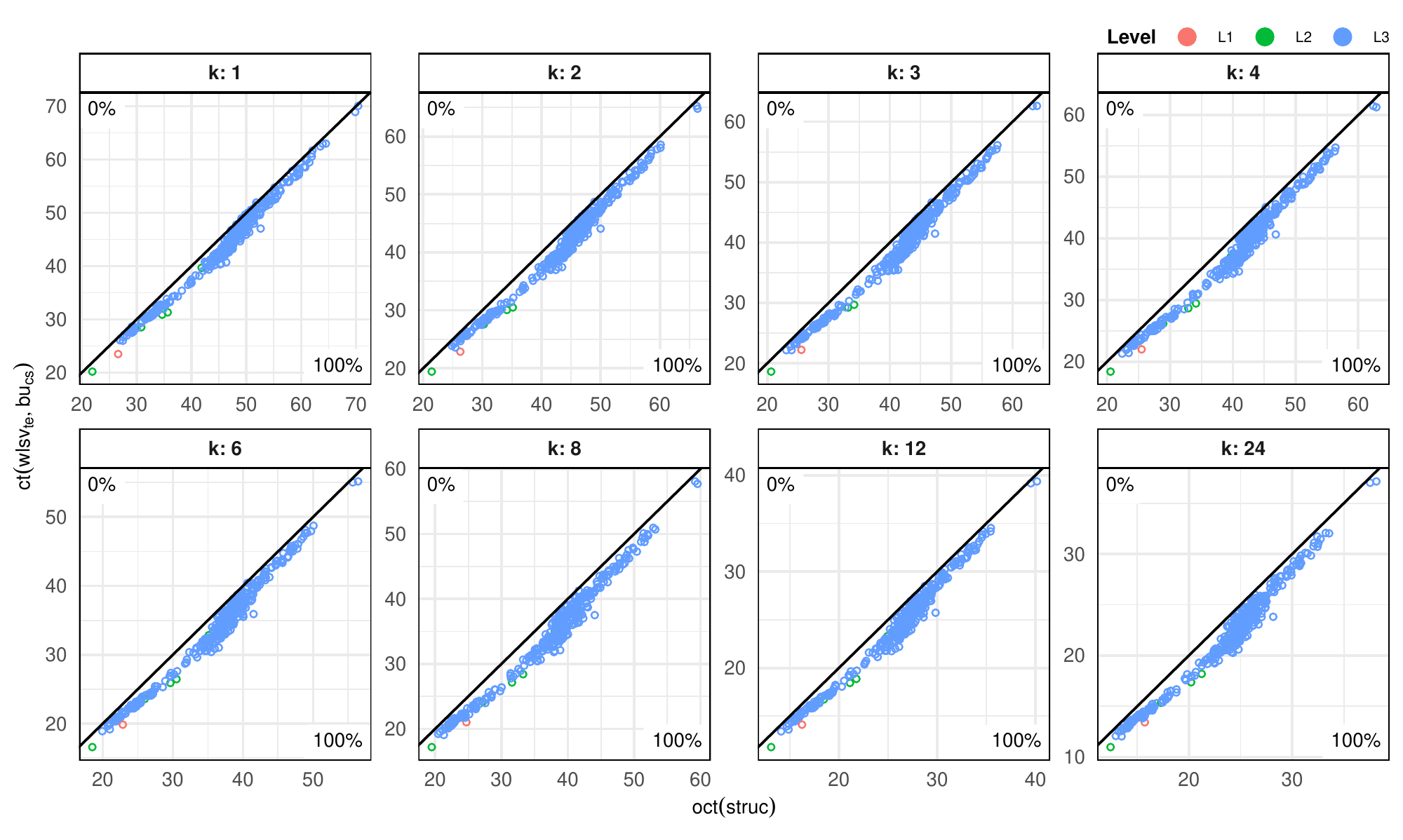}
	\caption{Comparison of nRMSE(\%) between non-negative reconciliation approaches: oct$(struc)$ and ct$(wlsv_{te},bu_{cs})$, forecast horizon: operating day. The black line represents the bisector, where the nRMSE's for both approaches are equal. On the top-left (bottom-right) corner of each graph, the percentage of points above (below) the bisector is reported.}
	\label{fig:comparison_ct-struc0-ctbu-t-wlsv0}
\end{figure}

\begin{figure}[H]
	\centering
	\includegraphics[width=0.9\linewidth]{./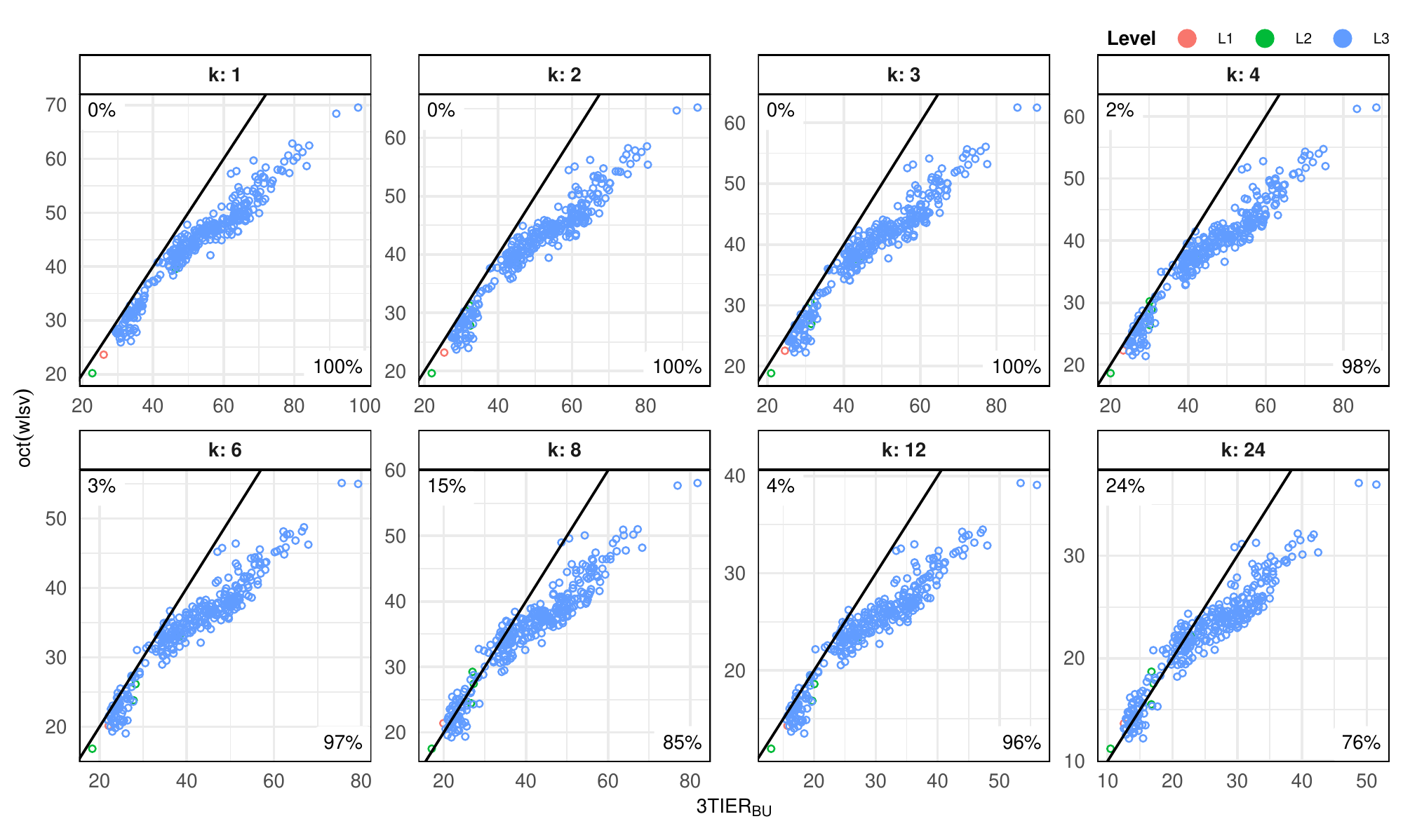}
	\caption{Comparison of nRMSE(\%) between non-negative reconciliation approaches: $3TIER_{bu}$ and ct$(wlsv_{te},bu_{cs})$. Forecast horizon: operating day. The black line represents the bisector, where the nRMSE's for both approaches are equal. On the top-left (bottom-right) corner of each graph, the percentage of points above (below) the bisector is reported.}
	\label{fig:comp2}
\end{figure}

\begin{figure}[H]
	\centering
	\includegraphics[width=0.9\linewidth]{./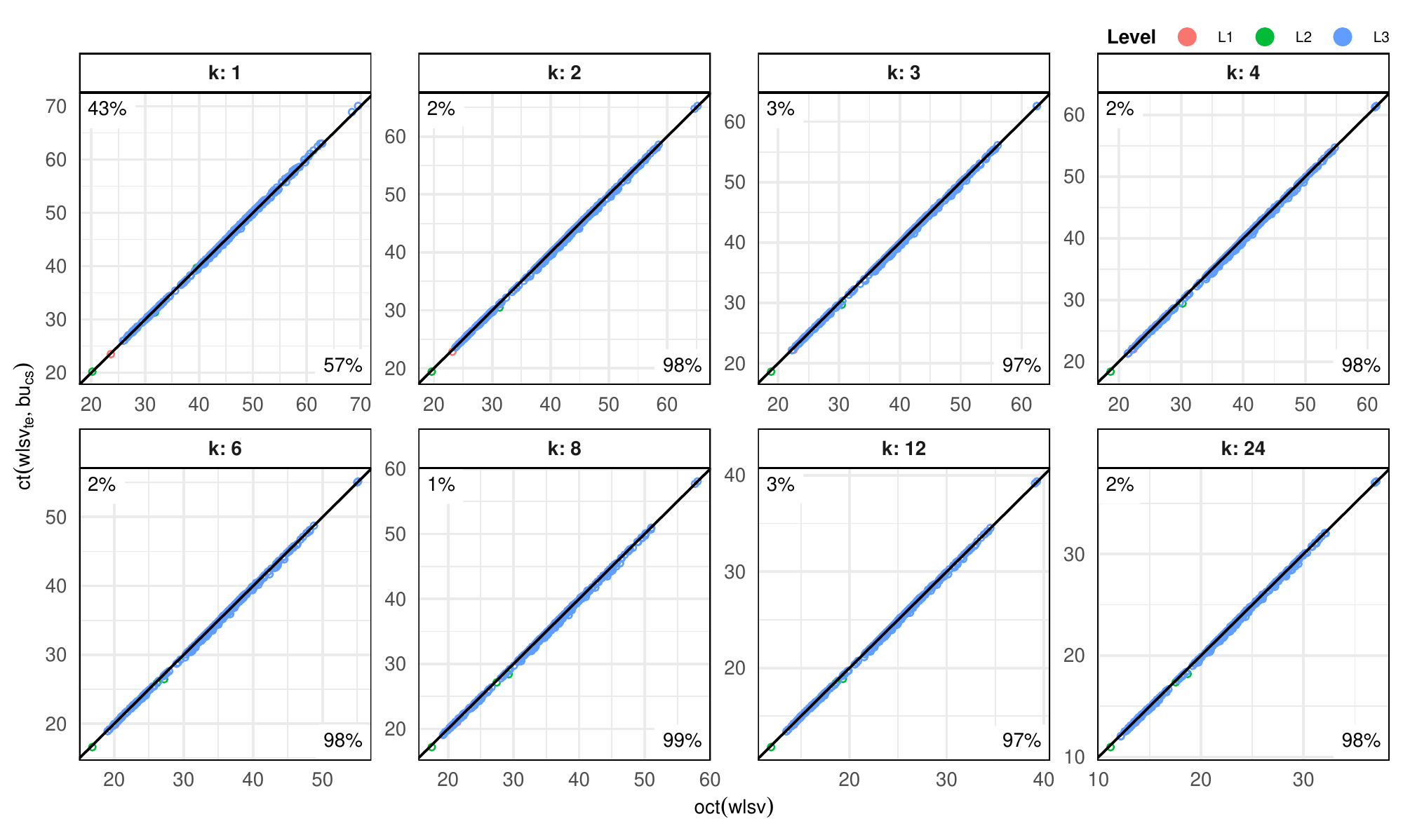}
	\caption{Comparison of nRMSE(\%) between non-negative reconciliation approaches: oct$(wlsv)$ and ct$(wlsv_{te},bu_{cs})$, forecast horizon: operating day. The black line represents the bisector, where the nRMSE's for both approaches are equal. On the top-left (bottom-right) corner of each graph, the percentage of points above (below) the bisector is reported.}
	\label{fig:comparison_ct-wlsv0-ctbu-t-wlsv0}
\end{figure}

We may thus conclude that, for the PV324 dataset considered in this work, a thorough exploitation of  cross-temporal hierarchies significantly improves the forecasting accuracy as compared to the approaches considered in \cite{Yagli2019}. In particular, the use of the in-sample base forecast errors, even in the simple diagonal versions of te($wlsv$) (first step of ct($wlsv_{te},bu_{cs}$)), and oct($wlsv$), increases the forecasting accuracy at different time granularities.

\section{Conclusion}
\label{sec:conclusion}

Renewable energy is providing increasingly more energy to the grid all over the world. But grid operators must carefully manage the balance between the generation and consumption of energy to make the best use of abundant renewable energy.
For an ISO, this provides greater grid stability, higher revenue and better use of what sun is available at any one time.
Better short-term solar energy forecasts mean lower-emissions, cheaper energy and a more stable electricity grid.
Solar forecasting is thus a key tool to achieve these results.
In this paper, cross-temporal point forecast reconciliation has been applied to generate non-negative, fully coherent (both in space and time) forecasts of PV generated power. Both methodological and practical issues have been tackled, in order to develop effective and easy-to-handle cross-temporal forecasting approaches.
Besides assuring both spatial and temporal coherence, and non-negativity of the reconciled forecasts, the results show that for the considered dataset, cross-temporal forecast reconciliation significantly improves on 
the sequential procedures proposed by \cite{Yagli2019}, at any cross-sectional level of the hierarchy and for any temporal aggregation order.
However, these findings should be considered neither conclusive, nor valid in general. Other forecasting experiments, through simulations and using other datasets, as well as in solar forecasting and in other application fields, should be performed to empirically robustify the results shown so far.
In addition, other research queries could raise in the field of cross-temporal PV forecast reconciliation, like
using reduced temporal hierarchies, as well as forecasting through appropriate Machine Learning end-to-end approaches (\citealp{Stratigakos2022}), and probabilistic instead of deterministic (point) forecasting (\citealp{Panamtash2018}, \citealp{Jeon2019}, \citealp{Panagiotelis2022}, \citealp{Yang2020probnonpar}, \citealp{Yang2020probhomosc}, \citealp{BenTaieb2021}).
All these topics are left for future research.



\appendix
\setcounter{table}{0}
\setcounter{figure}{0}

\section{Cross-temporal bottom-up forecast reconciliation}
\label{sec:ctbu-appendix}
A visual representation of cross-sectional, temporal, and cross-temporal bottom-up reconciliation is shown in figure \ref{fig:big}, where the `bottom time series' base forecasts are highlighted on a purple background, and the `upper time series' bottom-up reconciled forecasts are highlighted on a pink background. The arrows show the direction(s) along with bottom-up reconciliation is performed. The deponent abbreviations `$cs$', `$te$', `$ct$', are alternatively used to denote reconciled forecasts according to cross-sectional, temporal, and cross-temporal bottom-up, respectively.

\begin{figure}[H]
     \centering
     \begin{subfigure}[b]{0.49\textwidth}
	\centering
	\caption{$\widetilde{\Yvet}_{cs(bu)}$: cross-sectional bottom-up}
	\begin{tikzpicture}[>=latex, line width=1pt,
		Matrix/.style={
            matrix of nodes,
            font=\large,
            align=center,
            text width = 1.15cm, 
            text height = 0.5cm, 
            column sep=4pt,
            row sep=7pt,
            nodes in empty cells,
            left delimiter={[},
            right delimiter={]}
		}]
		\matrix[Matrix] (Mcs){ 
			$\widetilde{\Avet}_{cs(bu)}^{[m]}$ & \dots & $\widetilde{\Avet}_{cs(bu)}^{[k_2]}$ & $\widetilde{\Avet}_{cs(bu)}^{[1]}$ \\
			$\widehat{\Bvet}^{[m]}$ & \dots & $\widehat{\Bvet}^{[k_2]}$ & $\widehat{\Bvet}^{[1]}$ \\
		};
		\draw[<-] (Mcs.north east)++(0.4,0) coordinate (temp) -- (temp |- Mcs.south) node [midway,label={[label distance=0.1cm,rotate=-90, xshift = 1.5mm, font=\footnotesize]Cross-sectional}]{};
		\draw[<-, opacity = 0] (Mcs.south west)++(0,-0.15) coordinate (temp) -- (temp -| Mcs.east) node [midway,label={[label distance=0cm,xshift = 1.5mm, font=\footnotesize]below:Temporal}]{};
		\node[opacity=0.2,
		rounded corners,
		inner sep=0pt, fill = blue, fit=(Mcs-2-1)(Mcs-2-4)](Bcs){};
		\node[opacity=0.2,
		rounded corners,
		inner sep=0pt, fill = red, fit=(Mcs-1-1)(Mcs-1-4)](Acs){};
	\end{tikzpicture}
	\label{fig:csbu}
	\end{subfigure}
     \hfill
     \begin{subfigure}[b]{0.49\textwidth}
	         \centering
	         \caption{$\widetilde{\Yvet}_{te\;bu}$: temporal bottom-up}
	         \begin{tikzpicture}[>=latex, line width=1pt,
        Matrix/.style={
            matrix of nodes,
            font=\large,
            align=center,
            text width = 1.15cm, 
            text height = 0.5cm, 
            column sep=4pt,
            row sep=7pt,
            nodes in empty cells,
            left delimiter={[},
            right delimiter={]}
        }]
    \matrix[Matrix] (Mt){ 
        $\widetilde{\Avet}_{te(bu)}^{[m]}$ & \dots & $\widetilde{\Avet}_{te(bu)}^{[k_2]}$ & $\widehat{\Avet}^{[1]}$ \\
        $\widetilde{\Bvet}_{te(bu)}^{[m]}$ & \dots & $\widetilde{\Bvet}_{te(bu)}^{[k_2]}$ & $\widehat{\Bvet}^{[1]}_{\phantom{t}}$ \\
    };
    \draw[<-] (Mt.south west)++(0,-0.15) coordinate (temp) 
        -- (temp -| Mt.east) node [midway,label={[label distance=0cm,xshift = 1.5mm, font=\footnotesize]below:Temporal}]{};
        \node[opacity=0.2,
            rounded corners,
            inner sep=0pt, fill = blue, fit=(Mt-1-4)(Mt-2-4)](Bt){};
        \node[opacity=0.2,
            rounded corners,
            inner sep=0pt, fill = red, fit=(Mt-1-1)(Mt-2-3)](At){}; 
    \end{tikzpicture}
	         \label{fig:tbu}
	     \end{subfigure}
     \begin{subfigure}[b]{0.49\textwidth}
	         \centering
	          \caption{$\widetilde{\Yvet}_{ct\;bu}$: cross-temporal bottom-up}
	          \begin{tikzpicture}[>=latex, line width=1pt,
        Matrix/.style={
            matrix of nodes,
            font=\large,
            align=center,
            text width = 1.15cm, 
            text height = 0.5cm, 
            column sep=4pt,
            row sep=7pt,
            nodes in empty cells,
            left delimiter={[},
            right delimiter={]}
        }]
    \matrix[Matrix] (M){ 
        $\widetilde{\Avet}_{ct(bu)}^{[m]}$ & \dots & $\widetilde{\Avet}_{ct(bu)}^{[k_2]}$ & $\widetilde{\Avet}_{ct(bu)}^{[1]}$ \\
        $\widetilde{\Bvet}_{ct(bu)}^{[m]}$ & \dots & $\widetilde{\Bvet}_{ct(bu)}^{[k_2]}$ & $\widehat{\Bvet}^{[1]}_{\phantom{.}}$ \\
    };
    \draw[<-] (M.north east)++(0.4,0) coordinate (temp) -- (temp |- M.south) node [midway,label={[label distance=0.1cm,rotate=-90, xshift = 1.5mm, font=\footnotesize]Cross-sectional}]{};
    \draw[<-] (M.south west)++(0,-0.15) coordinate (temp) 
        -- (temp -| M.east) node [midway,label={[label distance=0cm,xshift = 1.5mm, font=\footnotesize]below:Temporal}]{};
        \node[opacity=0.2,
            rounded corners,
            inner sep=0pt, fill = blue, fit=(M-2-4)(M-2-4)](Bct){};
        \path[rounded corners,inner sep=2pt, fill = red, opacity = 0.2]
    (M-2-3.south) -| (M-2-3.north east) -- (M-1-4.south west)  -| (M-1-4.north east) -| (M-2-1.west) |- (M-2-3.south);  
    \end{tikzpicture}
	     \end{subfigure}
        \caption{A `visual' representation of cross-sectional ($\widetilde{\Yvet}_{cs(bu)}$), temporal ($\widetilde{\Yvet}_{te(bu)}$), and cross-temporal ($\widetilde{\Yvet}_{ct(bu)}$) bottom-up reconciled forecasts.}
        \label{fig:big}
\end{figure}

\subsection*{Cross-sectional bottom-up reconciliation of $k^*+m$ hierarchical time series' base forecasts}
The cross-sectional bottom-up ($cs(bu)$) reconciliation of hierarchical time series' base forecasts at different temporal granularities (temporal aggregation levels) $k \in \mathcal{K}$, may be represented as follows:
\[
\widetilde{\yvet}_{\tau,cs(bu)}^{[k]} = \Svet\widehat{\bvet}_{\tau}^{[k]}, \quad \tau = 1, \ldots, \frac{m}{k}, \quad k \in \mathcal{K} \quad \rightarrow \quad \widetilde{\Yvet}_{cs(bu)}^{[k]} = \Svet \widehat{\Bvet}^{[k]}, \quad k \in \mathcal{K}.
\]
In compact notation we have:
\[
\widetilde{\Yvet}_{cs(bu)} = \Svet \widehat{\Bvet} \quad \rightarrow  \quad
\widetilde{\Yvet}_{cs(bu)}' = \widehat{\Bvet}'\Svet' \quad \rightarrow  \quad
\widetilde{\yvet}_{cs(bu)} = \left(\Svet \otimes \Ivet_{k^* + m}\right)\widehat{\bvet} ,
\]
where $\widetilde{\yvet}_{cs(bu)} = \text{vec}\left(\widetilde{\Yvet}_{cs(bu)}'\right)$, and $\widehat{\bvet} = \text{vec}\left(\widehat{\Bvet}'\right)$.

The $cs(bu)$ reconciled forecasts are obviously cross-sectionally coherent, but in general they are not temporally coherent:
\[
\begin{array}{rcl}
\Uvet' \widetilde{\Yvet}_{cs(bu)} = \Zerovet_{[n_a \times (k^*+m)]} & \leftrightarrow & \left(\Uvet' \otimes \Ivet_{k^* + m}\right) \widetilde{\yvet}_{cs(bu)} = \Zerovet_{[n_a (k^*+m) \times 1]} \\
\Zvet' \widetilde{\Yvet}_{cs(bu)}' \ne \Zerovet_{[k^* \times n]} & \leftrightarrow & \left(\Ivet_n \otimes \Zvet'\right) \widetilde{\yvet}_{cs(bu)} \ne \Zerovet_{[k^* n \times 1]}
\end{array} .
\]

\subsection*{Temporal bottom-up reconciliation of $n$ individual time series' base forecasts}
The temporal bottom-up ($te(bu)$) reconciliation of $n$ individual time series' base forecasts at different temporal granularities (temporal aggregation levels), may be represented as follows:
\[
\begin{array}{c}
\widetilde{\avet}_{i,te(bu)} = \Rvet\widehat{\avet}_i^{[1]}, \quad i=1, \ldots, n_a \\
\widetilde{\bvet}_{j,te(bu)} = \Rvet\widehat{\bvet}_j^{[1]}, \quad j=1, \ldots, n_b \\
\end{array} \quad 
\rightarrow \quad 
\begin{array}{c}
\widetilde{\Avet}_{te(bu)} = \Rvet\widehat{\Avet}^{[1]\prime}\\
\widetilde{\Bvet}_{te(bu)} = \Rvet\widehat{\Bvet}^{[1]\prime} \\
\end{array}
\]
and, finally,
\[
\widetilde{\Yvet}_{te(bu)}' = \Rvet \widehat{\Yvet}^{[1]\prime}
\quad \rightarrow\quad 
\widetilde{\yvet}_{te(bu)} = \left(\Ivet_n \otimes \Rvet\right)\widehat{\yvet}^{[1]} .
\]
\noindent The $te(bu)$ reconciled forecasts are obviously temporally coherent, but in general they are not cross-sectionally coherent:
\[
\begin{array}{rcl}
	\Zvet' \widetilde{\Yvet}_{te(bu)}' = \Zerovet_{[k^* \times n]} & \leftrightarrow & \left(\Ivet_n \otimes \Zvet'\right) \widetilde{\yvet}_{te(bu)} = \Zerovet_{[k^* n \times 1]} \\
	\Uvet' \widetilde{\Yvet}_{te(bu)} \ne \Zerovet_{[n_a \times (k^*+m)]} & \leftrightarrow & \left(\Uvet' \otimes \Ivet_{k^* + m}\right) \widetilde{\yvet}_{te(bu)} \ne \Zerovet_{[n_a (k^*+m) \times 1]}
\end{array} .
\]

\subsection*{Cross-temporal bottom-up reconciliation of $n$ individual time series' base forecasts for different temporal granularities}
The cross-temporal bottom-up ($ct(bu)$) reconciliation of $k^*+m$ hierarchical time series' base forecasts at different temporal granularities, 
may be represented as follows:
\begin{equation}
	\label{ct-bu_formula}
\text{vec}\left(\widetilde{\Yvet}_{ct(bu)}'\right) = \left(\Svet \otimes \Rvet\right) \text{vec}\left(\widehat{\Bvet}^{[1]\prime}\right)
\quad \leftrightarrow  \quad
\widetilde{\yvet}_{cs(bu)} \left(\Svet \otimes \Rvet\right) \widehat{\bvet}^{[1]} ,
\end{equation}
where $\widehat{\bvet}^{[1]} = \text{vec}\left(\widehat{\Bvet}^{[1]\prime}\right)$.
Let us consider now the coherency checks for the $ct(bu)$ reconciled forecasts:
\begin{itemize}
\item cross-sectional coherency
\[
\left(\Uvet' \otimes \Ivet_{k^* + m}\right) \left(\Svet \otimes \Rvet\right) \widehat{\bvet}^{[1]} = \left(\Uvet'\Svet \otimes \Rvet\right) \widehat{\bvet}^{[1]} = \Zerovet_{[n_a(k^*+m) \times 1]},
\]
since
$\Uvet'\Svet = \left[\Ivet_{n_a} \; -\Cvet\right] \left[\begin{array}{c} \Cvet \\ \Ivet_{n_b}\end{array} \right] = \Cvet - \Cvet = \Zerovet_{[n_a \times n_b]}$;
\item temporal coherency
\[
\left(\Ivet_{n} \otimes \Zvet' \right) \left(\Svet \otimes \Rvet\right) \widehat{\bvet}^{[1]} =
\left(\Svet \otimes \Zvet'\Rvet\right) \widehat{\bvet}^{[1]} = \Zerovet_{[k^* n \times 1]},
\]
since
$\Zvet'\Rvet = \left[\Ivet_{k^*} \; -\Kvet\right] \left[\begin{array}{c} \Kvet \\ \Ivet_{m}\end{array} \right] = \Kvet - \Kvet = \Zerovet_{[k^* \times m]}$.
\end{itemize}

\noindent In the end, cross temporal bottom-up reconciliation can be tought of as a two-step sequential cross temporal reconciliation approach, where either cross-sectional reconciliation of the high-frequency bottom time series base forecasts (the `very' bottom series of a cross-temporal hierarchy), is followed by temporal reconciliation, or \textit{vice-versa}.
For, looking at expression (\ref{ct-bu_formula}), and by exploiting the properties of the vec operator (\citealp{Harville2008}, p. 345), it is:
\[
\left(\Svet \otimes \Rvet\right) \text{vec}\left(\widehat{\Bvet}^{[1]\prime}\right) =
\text{vec}\left[\Rvet \left(\widehat{\Bvet}^{[1]\prime}\right)\Svet'\right] ,
\]
which may be interpretated in two different, equivalent ways:
\begin{itemize}
	\item looking at the expression $\Rvet \left(\widehat{\Bvet}^{[1]\prime}\right)\Svet'$ from the right, one first computes
	the cross-sectional bottom-up reconciled forecasts of all the $n$ high-frequency time series, $\widetilde{\Yvet}_{cs(bu)}^{[1]} = \Svet \widehat{\Bvet}^{[1]}$. These forecasts, that are not temporally coherent with the base forecasts of the same series at lower temporal granularity (i.e., $\widehat{\Yvet}^{[k]}$, $k \ne 1$), are then reconciled \textit{via} temporal bottom-up,
	$\widetilde{\Yvet}_{ct(bu)}' = \Rvet \widetilde{\Yvet}_{cs(bu)}^{[1]\prime}$:
\[
\widehat{\Bvet}^{[1]} \quad \xrightarrow{\quad cs(bu) \quad} \quad \widetilde{\Yvet}_{cs(bu)}^{[1]} \quad \xrightarrow{\quad te(bu) \quad} \quad \widetilde{\Yvet}_{ct(bu)} ;
\]
	\item moving in turn from the left, one first computes the temporally reconciled forecasts of the $n_b$ bottom time series of the hierarchy,
	$\widetilde{\Bvet}_{te(bu)}' = \Rvet \left(\widehat{\Bvet}^{[1]\prime}\right)$.
	These forecasts, that are not cross-sectionally coherent with the base forecasts of the $n_a$ upper time series of the hierarchy, are then reconciled \textit{via} cross-sectional bottom-up,
	$\widetilde{\Yvet}_{ct(bu)} = \Svet \left(\widetilde{\Bvet}_{te(bu)}\right)$:
	\[
	\widehat{\Bvet}^{[1]} \quad \xrightarrow{\quad te(bu) \quad} \quad \widetilde{\Bvet}_{te(bu)} \quad \xrightarrow{\quad cs(bu) \quad}\quad \widetilde{\Yvet}_{ct(bu)} .
	\]
\end{itemize}
The two equivalent interpretations of the cross-temporal bottom-up approach are represented in Figure \ref{fig:ctbu_rep}, where $\widetilde{\Bvet}_{ct(bu)}^{[1]} \equiv \widehat{\Bvet}^{[1]}$,
$\widetilde{\Avet}_{ct(bu)}^{[1]} \equiv \widetilde{\Avet}_{cs(bu)}^{[1]}$, and
$\widetilde{\Bvet}_{ct(bu)}^{[k]} \equiv \widetilde{\Bvet}_{te(bu)}^{[k]}$, $k>1$.

\begin{figure}[H]
     \centering
     \begin{subfigure}[b]{1\textwidth}
	         \centering
	          \begin{tikzpicture}[>=latex, line width=1pt,
        Matrix/.style={
            matrix of nodes,
            font=\normalsize,
            align=center,
            text width = 1cm, 
            text height = 0.5cm, 
            column sep=4pt,
            row sep=7pt,
            nodes in empty cells,
            left delimiter={[},
            right delimiter={]}
        }]
    \matrix[Matrix] (M){ 
        |[opacity = 0.35]|$\widehat{\Avet}^{[m]}$ & |[opacity = 0.35]|\dots & |[opacity = 0.35]|$\widehat{\Avet}^{[k_2]}$ & $\widetilde{\Avet}_{cs(bu)}^{[1]}$ \\
        |[opacity = 0.35]|$\widehat{\Bvet}^{[m]}$ & |[opacity = 0.35]|\dots & |[opacity = 0.35]|$\widehat{\Bvet}^{[k_2]}$ & $\widehat{\Bvet}^{[1]}_{\phantom{.}}$ \\
    };
    \matrix[Matrix,  right = of M, xshift = 1.5cm] (M2){ 
        $\widetilde{\Avet}_{ct(bu)}^{[m]}$ & \dots & $\widetilde{\Avet}_{ct(bu)}^{[k_2]}$ & $\widetilde{\Avet}_{cs(bu)}^{[1]}$ \\
        $\widetilde{\Bvet}_{ct(bu)}^{[m]}$ & \dots & $\widetilde{\Bvet}_{ct(bu)}^{[k_2]}$ & $\widehat{\Bvet}^{[1]}_{\phantom{.}}$ \\
    };
    \draw[<-] (M.north east)++(0.4,0) coordinate (temp) -- (temp |- M.south) node [midway,label={[label distance=0.1cm,rotate=-90, xshift = 1.5mm, font=\footnotesize]Cross-sectional}]{};
        
    \draw[<-, opacity = 0.3] (M2.north east)++(0.4,0) coordinate (temp) -- (temp |- M2.south) node [midway,label={[label distance=0.1cm,rotate=-90, xshift = 1.5mm, font=\footnotesize]Cross-sectional}]{};
    \draw[<-] (M2.south west)++(0,-0.15) coordinate (temp) 
        -- (temp -| M2.east) node [midway,label={[label distance=0cm,xshift = 1.5mm, font=\footnotesize]below:Temporal}]{};
        
        \node[opacity=0.2,
            rounded corners,
            inner sep=0pt, fill = blue, fit=(M-2-4)(M-2-4)](Bct){};
        \node[opacity=0.2,
            rounded corners,
            inner sep=0pt, fill = green, fit=(M-1-4)(M-1-4)](Acs){};
        
        \node[opacity=0.2,
            rounded corners,
            inner sep=0pt, fill = blue, fit=(M2-2-4)(M2-2-4)](Bct){};
        \node[opacity=0.2,
            rounded corners,
            inner sep=0pt, fill = green, fit=(M2-1-4)(M2-1-4)](Acs){};
        \node[opacity=0.2,
            rounded corners,
            inner sep=0pt, fill = red, fit=(M2-1-1)(M2-2-3)](A){};
    \path (M) -- node[xshift = 0.5cm] {$\longrightarrow$}
              (M2);
    \end{tikzpicture}
	         \caption{$cs(bu)$ on $\widehat{\Yvet}^{[1]}$ $+$  $te(bu)$ on $\widetilde{\Yvet}_{cs(bu)}$.}
	         \label{fig:ctbu_rep1}
	     \end{subfigure}\vspace{0.5cm}
	 \begin{subfigure}[b]{1\textwidth}
	         \centering
	          \begin{tikzpicture}[>=latex, line width=1pt,
        Matrix/.style={
            matrix of nodes,
            font=\normalsize,
            align=center,
            text width = 1cm, 
            text height = 0.5cm, 
            column sep=4pt,
            row sep=7pt,
            nodes in empty cells,
            left delimiter={[},
            right delimiter={]}
        }]
    \matrix[Matrix] (M){ 
        |[opacity = 0.35]|$\widehat{\Avet}^{[m]}$ & |[opacity = 0.35]|\dots & |[opacity = 0.35]|$\widehat{\Avet}^{[k_2]}$ &  |[opacity = 0.35]|$\widehat{\Avet}^{[1]}$ \\
        $\widetilde{\Bvet}_{te(bu)}^{[m]}$ & \dots & $\widetilde{\Bvet}_{te(bu)}^{[k_2]}$ & $\widehat{\Bvet}^{[1]}_{\phantom{.}}$ \\
    };
    \matrix[Matrix,  right = of M, xshift = 1cm] (M2){ 
        $\widetilde{\Avet}_{ct(bu)}^{[m]}$ & \dots & $\widetilde{\Avet}_{ct(bu)}^{[k_2]}$ & $\widetilde{\Avet}_{ct(bu)}^{[1]}$ \\
        $\widetilde{\Bvet}_{te(bu)}^{[m]}$ & \dots & $\widetilde{\Bvet}_{te(bu)}^{[k_2]}$ & $\widehat{\Bvet}^{[1]}_{\phantom{.}}$ \\
    };
    \draw[<-] (M.south west)++(0,-0.15) coordinate (temp) 
        -- (temp -| M.east) node [midway,label={[label distance=0cm,xshift = 1.5mm, font=\footnotesize]below:Temporal}]{};
        
    \draw[<-] (M2.north east)++(0.4,0) coordinate (temp) -- (temp |- M2.south) node [midway,label={[label distance=0.1cm,rotate=-90, xshift = 1.5mm, font=\footnotesize]Cross-sectional}]{};
    \draw[<-, opacity = 0.3] (M2.south west)++(0,-0.15) coordinate (temp) 
        -- (temp -| M2.east) node [midway,label={[label distance=0cm,xshift = 1.5mm, font=\footnotesize]below:Temporal}]{};
        
        \node[opacity=0.2,
            rounded corners,
            inner sep=0pt, fill = blue, fit=(M-2-4)(M-2-4)](Bct){};
        \node[opacity=0.2,
            rounded corners,
            inner sep=0pt, fill = green, fit=(M-2-1)(M-2-3)](Acs){};
        
        \node[opacity=0.2,
            rounded corners,
            inner sep=0pt, fill = blue, fit=(M2-2-4)(M2-2-4)](Bct){};
        \node[opacity=0.2,
            rounded corners,
            inner sep=0pt, fill = green, fit=(M2-2-1)(M2-2-3)](Acs){};
        \node[opacity=0.2,
            rounded corners,
            inner sep=0pt, fill = red, fit=(M2-1-1)(M2-1-4)](A){};
    \path (M) -- node[xshift = 0cm] {$\longrightarrow$}
              (M2);
    \end{tikzpicture}
	         \caption{$te(bu)$ on $\widehat{\Bvet}^{[1]}$ $+$ $cs(bu)$ on $\widetilde{\Bvet}_{te(bu)}$.}
	         \label{fig:ctbu_rep2}
	     \end{subfigure}
        \caption{A `visual' interpretation of $\widetilde{\Yvet}_{ct(bu)}$ in two different, equivalent ways.}
        \label{fig:ctbu_rep}
\end{figure}
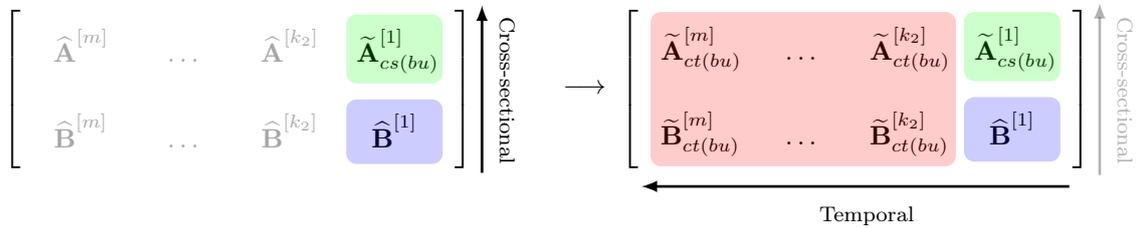
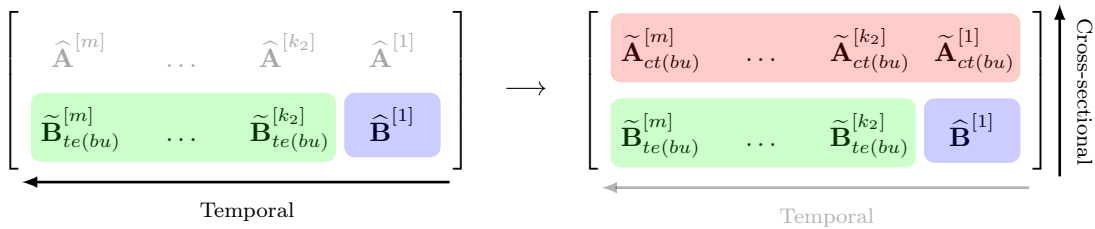

\section{Proof of the theorem}

	\begin{enumerate}
		\item Let $\Uvet'_{n_a \times n}$  and $\Zvet'_{m \times (k^\ast + m)}$ be the cross-sectional and temporal, respectively, zero constraints kernel matrix. Let
		$$
		\Mvet_{cs}^{[k]} = \Mvet_{cs} = \left[\Ivet_n - \Wvet\Uvet\left(\Uvet'\Wvet\Uvet\right)^{-1}\Uvet'\right], \quad k \in \mathcal{K}
		$$
		and
		$$
		\Mvet_{te,i} = \Mvet_{te} = \left[\Ivet_{k^\ast+m} - \Omegavet\Zvet\left(\Zvet'\Omegavet\Zvet\right)^{-1}\Zvet'\right], \quad i=1,\ldots,n ,
		$$
		be the cross-sectional and temporal projection matrices, respectively, such that:
		$$
		\begin{array}{rcll}
			\widetilde{\Yvet}_{cs}^{[k]} & = & \Mvet_{cs}\widehat{\Yvet}^{[k]}, & k \in \mathcal{K}, \\
			\widetilde{\yvet}_{te,i}      & = & \Mvet_{te}\widehat{\yvet}_i,      & i=1,\ldots,n,
		\end{array}
		$$
		where
$\widehat{\Yvet}^{[k]}$ and $\widetilde{\Yvet}_{cs}^{[k]}$ are the $\left( n \times \frac{m}{k}\right)$ matrices with, respectively, the base and the cross-sectional reconciled forecasts of the $n$ series at time granularity $k$, and
			 $\widehat{\yvet}_i$ and $\widetilde{\yvet}_{t,i}$ are the $\left[ (k^*+m) \times 1\right]$ vectors with, respectively, the base and the temporal reconciled forecasts at all time granularities for the $i$-th series.
		In compact matrix form we get:
		$$
		\begin{array}{rcl}
			\widetilde{\Yvet}_{cs} & = & \left[ \widetilde{\Yvet}_{cs}^{[m]} \; \widetilde{\Yvet}_{cs}^{[k_{p-1}]} \ldots \widetilde{\Yvet}_{cs}^{[k_2]} \; \widetilde{\Yvet}_{cs}^{[1]}
			\right] = \Mvet_{cs} \left[ \widehat{\Yvet}^{[m]} \; \widehat{\Yvet}^{[k_{p-1}]} \ldots \widehat{\Yvet}^{[k_2]} \; \widehat{\Yvet}^{[1]}
			\right] = \Mvet_{cs}\widehat{\Yvet}, \\
			\widetilde{\Yvet}_{te} & = & \left\{\Mvet_{te} \left[ \widehat{\yvet}_1 \ldots \widehat{\yvet}_i \ldots \widehat{\yvet}_n \right]\right\}' = \left\{\Mvet_{te}\widehat{\Yvet}'\right\}' .
		\end{array}
		$$
		
		Then, $\widetilde{\Yvet}_{cst}$ is obtained by first operating a cross-sectional reconciliation (step 1, $\widetilde{\Yvet}_{cs}$), and then reconciling via temporal hierarchies the forecasts obtained in the previous step (step 2, $\widetilde{\Yvet}_{cst}$). In matrix terms:
		$$
		\begin{array}{rcl}
			\text{step 1 } &
			\widetilde{\Yvet}_{cs} &= \Mvet_{cs}\widehat{\Yvet} \\
			\text{step 2 } &
			\widetilde{\Yvet}_{cst} &= \left\{\Mvet_{te} \widetilde{\Yvet}_{cs}'\right\}' = \widetilde{\Yvet}_{cs}\Mvet_{te}' = \Mvet_{cs}\widehat{\Yvet}\Mvet_{te}'
		\end{array}
		$$
		where $\widehat{\Yvet}$ is the ($n\times (k^\ast+m)$) base forecasts matrix.
		Viceversa, $\widetilde{\Yvet}_{tcs}$ is obtained by first operating a temporal reconciliation (step 1, $\widetilde{\Yvet}_{te}$), and finally a cross-sectional reconciliation is applied (step 2):
		$$
		\begin{array}{rcl}
			\text{step 1 } & \widetilde{\Yvet}_{te} & = \left\{\Mvet_{te} \widehat{\Yvet}'\right\}' =\widehat{\Yvet}\Mvet_{te}'\\
			\text{step 2 } & \widetilde{\Yvet}_{tcs} & = \Mvet_{cs} \widetilde{\Yvet}_{te} =\Mvet_{cs} \widehat{\Yvet}\Mvet_{te}' = \widetilde{\Yvet}_{cst} .
		\end{array}
		$$
The forecasts $\widetilde{\Yvet}_{tcs}$ (and $\widetilde{\Yvet}_{cst}$) (a single two-step iteration) are cross-temporally reconciled because:
		$$
		\begin{aligned}
			\Uvet'\widetilde{\Yvet}_{tcs} &= \Uvet'\Mvet_{cs} \widehat{\Yvet}\Mvet_{te}'= \left[\Uvet' - \Uvet'\Wvet\Uvet\left(\Uvet'\Wvet\Uvet\right)^{-1}\Uvet'\right]  \widehat{\Yvet}\Mvet_{te}'=\left[\Uvet' - \Uvet'\right]  \widehat{\Yvet}\Mvet_{te}' = \Zerovet_{n_a \times (k^\ast+ m)}
		\end{aligned}
		$$
		and
		$$
		\begin{aligned}
			\Zvet'\widetilde{\Yvet}_{tcs}' &= \Zvet'\Mvet_{te} \widehat{\Yvet}'\Mvet_{cs}' = \left[\Zvet' - \Zvet'\Omegavet\Zvet\left(\Zvet'\Omegavet\Zvet\right)^{-1}\Zvet'\right] \Mvet_{cs}'=\left[\Zvet' - \Zvet'\right] \widehat{\Yvet}'\Mvet_{cs}'= \Zerovet_{m \times n}.
		\end{aligned}
		$$
		
		\item Using matrix vectorization and structural representation for the cross-temporal reconciliation problem we have: 
		$$
		\begin{aligned}
			\widetilde{\yvet}_{oct} = \Fvet \left(\Fvet' \Omegavet_{ct}^{-1}\Fvet\right)^{-1}\Fvet'\widehat{\yvet}
		\end{aligned} ,
		$$
		where $\widetilde{\yvet}_{oct} = \mathrm{vec}\left(\widetilde{\Yvet}_{oct}'\right)$, $\widehat{\yvet} = \mathrm{vec}\left(\widehat{\Yvet}_{oct}'\right)$, 
		and $\Fvet = \Svet \otimes \Rvet$ is the [$n(k^\ast +m)\times mn_b$] cross-temporal summing matrix. Let $\Omegavet_{ct} = \Wvet \otimes \Omegavet$ be the cross-temporal error covariance matrix, by exploiting the properties of the Kronecker product (\citealp{Henderson1981}; \citealp{Harville2008}), we obtain:
		$$
		\begin{aligned}
			\widetilde{\yvet}_{oct} &= \Fvet \left(\Fvet' \Omegavet_{ct}^{-1}\Fvet\right)^{-1}\Fvet'\Omegavet_{ct}^{-1}\widehat{\yvet} \\
			&= (\Svet \otimes \Rvet)\left[(\Svet \otimes \Rvet)' (\Wvet \otimes \Omegavet)^{-1} (\Svet \otimes \Rvet)\right]^{-1}(\Svet \otimes \Rvet)'(\Wvet \otimes \Omegavet)^{-1}\widehat{\yvet}\\
			& = (\Svet \otimes \Rvet)\left[(\Svet' \otimes \Rvet') (\Wvet^{-1}  \otimes \Omegavet^{-1} )(\Svet \otimes \Rvet)\right]^{-1}(\Svet' \otimes \Rvet')(\Wvet^{-1} \otimes \Omegavet^{-1})\widehat{\yvet}\\
			& = (\Svet \otimes \Rvet)\left(\Svet'\Wvet^{-1}\Svet \otimes \Rvet'\Omegavet^{-1}\Rvet\right)^{-1}(\Svet'\Wvet^{-1} \otimes \Rvet'\Omegavet^{-1})\widehat{\yvet}\\
			& = \left[\Svet\left(\Svet'\Wvet^{-1}\Svet\right)^{-1}\Svet'\Wvet^{-1} \otimes \Rvet\left(\Rvet'\Omegavet^{-1}\Rvet\right)^{-1}\Rvet'\Omegavet^{-1}\right]\widehat{\yvet}.
		\end{aligned}
		$$
		Exploiting the structural notation also in the \textit{temporal then cross-sectional} iterative procedure and the vectorization $\widetilde{\yvet}_{tcs} = \mathrm{vec}\left(\widehat{\Yvet}_{tcs}'\right)$, we obtain:
		$$
		\begin{aligned}
			\widetilde{\yvet}_{tcs} &= \Pvet \left[\Ivet_{k^\ast + m} \otimes \Svet\left(\Svet'\Wvet^{-1}\Svet\right)^{-1}\Svet'\Wvet^{-1}\right] \Pvet' \left[\Ivet_{n} \otimes \Rvet \left(\Rvet'\Omegavet^{-1}\Rvet\right)^{-1}\Rvet'\Omegavet^{-1}\right]\widehat{\yvet}\\
			&= \left[\Svet\left(\Svet'\Wvet^{-1}\Svet\right)^{-1}\Svet'\Wvet^{-1} \otimes \Ivet_{k^\ast + m}\right] \left[\Ivet_{n} \otimes \Rvet \left(\Rvet'\Omegavet^{-1}\Rvet\right)^{-1}\Rvet'\Omegavet^{-1}\right]\widehat{\yvet} \\
			&=\left[\Svet\left(\Svet'\Wvet^{-1}\Svet\right)^{-1}\Svet'\Wvet^{-1} \otimes \Rvet \left(\Rvet'\Omegavet^{-1}\Rvet\right)^{-1}\Rvet'\Omegavet^{-1}\right]\widehat{\yvet} = \widetilde{\yvet}_{oct}.
		\end{aligned}
		$$
		Finally, $\widetilde{\Yvet}_{tcs} = \widetilde{\Yvet}_{oct}$ and, following from point 1, $\widetilde{\Yvet}_{tcs} = \widetilde{\Yvet}_{cst} = \widetilde{\Yvet}_{oct}$.
	\end{enumerate}
\hfill $\Box$

\subsection*{The heuristic KA with constant cross-sectional and temporal error covariance matrices}
In the \cite{Kourentzes2019} procedure, the temporally reconciled predictions at step 1, $\widehat{\Yvet}\Mvet_t'$, are then cross-sectionally reconciled via premultiplication by the matrix $\overline{\Mvet} = \frac{1}{|\mathcal{K}|} \sum_{k \in \mathcal{K}} \Mvet^{[k]}_{cs}$ such that
$$
\widetilde{\Yvet}_{KA} = \overline{\Mvet}\widehat{\Yvet}\Mvet_t' = \frac{1}{|\mathcal{K}|} \sum_{k \in \mathcal{K}} \Mvet^{[k]}_{cs} \widehat{\Yvet}\Mvet_t' = \Mvet_{cs}\widehat{\Yvet}\Mvet_t' = \widetilde{\Yvet}_{tcs},
$$
where $\frac{1}{|\mathcal{K}|} \sum_{k \in \mathcal{K}} \Mvet^{[k]}_{cs} = \Mvet_{cs}$ because $\Mvet^{[k]}_{cs} = \Mvet_{cs}$ by assumption ($\Wvet^{[k]} = \Wvet$ for $k \in \mathcal{K}$).

\newpage

\section{Auxiliary tables}
\begin{center}
\begin{footnotesize}
\begin{longtable}{rllcccccc}
\caption{Number of series with at least one negative reconciled forecast.
		All temporal aggregation orders, forecast horizon: operating day.}
	\label{table_nn_Yagli_original_appendix} \\
\toprule
k & method & comb & type & n & min & max & fmin & fmax\\
\midrule
\endfirsthead

\multicolumn{9}{c}%
{{\bfseries \tablename\ \thetable{} -- continued from previous page}} \\
\toprule
k & method & comb & type & n & min & max & fmin & fmax\\
\midrule
\endhead

\bottomrule \multicolumn{9}{r}{{Continued on next page}} \\ 
\endfoot

\bottomrule
\endlastfoot

1 & base & 350 & 4 & 6 & -15.617 & 0.000\\*
1 & oct($ols$) & 350 & 109 & 324 & -69.165 & 0.000\\*
1 & ite($struc_{cs}, ols_{te}$) & 350 & 137 & 324 & -20.739 & 0.000\\*
1 & ite($ols_{cs}, struc_{te}$) & 350 & 91 & 324 & -17.961 & 0.000\\*
1 & oct($struc$) & 350 & 89 & 324 & -14.292 & 0.000\\*
1 & te($ols_2$)+cs($ols$) & 350 & 42 & 324 & -12.993 & 0.000\\*
1 & te($struc_2$)+cs($ols$) & 350 & 36 & 324 & -12.994 & 0.000\\*
1 & te($ols_2$)+cs($struc$) & 350 & 144 & 324 & -7.526 & 0.000\\*
1 & te($struc_2$)+cs($struc$) & 350 & 124 & 324 & -7.642 & 0.000\\*
\midrule
2 & base & 350 & 171 & 320 & -32.639 & 0\\*
2 & oct($ols$) & 350 & 60 & 324 & -32.147 & 0\\*
2 & ite($struc_{cs}, ols_{te}$) & 350 & 56 & 324 & -31.867 & 0\\*
2 & ite($ols_{cs}, struc_{te}$) & 350 & 26 & 324 & -18.728 & 0\\*
2 & oct($struc$) & 350 & 33 & 324 & -22.719 & 0\\*
2 & te($ols_2$)+cs($ols$) & 350 & 29 & 324 & -27.023 & 0\\*
2 & te($struc_2$)+cs($ols$) & 350 & 24 & 324 & -27.025 & 0\\*
2 & te($ols_2$)+cs($struc$) & 350 & 82 & 321 & -20.451 & 0\\*
2 & te($struc_2$)+cs($struc$) & 350 & 58 & 322 & -17.812 & 0\\*
\midrule
3 & base & 350 & 60 & 323 & -45.120 & 0\\*
3 & oct($ols$) & 350 & 6 & 324 & -31.640 & 0\\*
3 & ite($struc_{cs}, ols_{te}$) & 350 & 9 & 324 & -35.059 & 0\\*
3 & ite($ols_{cs}, struc_{te}$) & 350 & 20 & 324 & -28.085 & 0\\*
3 & oct($struc$) & 350 & 27 & 323 & -32.377 & 0\\*
3 & te($ols_2$)+cs($ols$) & 350 & 3 & 324 & -43.853 & 0\\*
3 & te($struc_2$)+cs($ols$) & 350 & 1 & 324 & -43.855 & 0\\*
3 & te($ols_2$)+cs($struc$) & 350 & 40 & 324 & -42.255 & 0\\*
3 & te($struc_2$)+cs($struc$) & 350 & 18 & 324 & -42.445 & 0\\*
\midrule
4 & base & 350 & 14 & 321 & -88.776 & 0\\*
4 & oct($ols$) & 350 & 11 & 322 & -42.291 & 0\\*
4 & ite($ols_{cs}, struc_{te}$) & 350 & 15 & 322 & -35.129 & 0\\*
4 & ite($struc_{cs}, ols_{te}$) & 350 & 25 & 318 & -44.256 & 0\\*
4 & oct($struc$) & 350 & 14 & 324 & -40.562 & 0\\*
4 & te($ols_2$)+cs($ols$) & 350 & 2 & 324 & -85.947 & 0\\*
4 & te($struc_2$)+cs($ols$) & 350 & 1 & 324 & -85.956 & 0\\*
4 & te($ols_2$)+cs($struc$) & 350 & 2 & 324 & -63.026 & 0\\*
4 & te($struc_2$)+cs($struc$) & 350 & 1 & 324 & -63.927 & 0\\*
\midrule
6 & base & 350 & 19 & 321 & -146.702 & 0\\*
6 & oct($ols$) & 350 & 3 & 324 & -61.415 & 0\\*
6 & ite($struc_{cs}, ols_{te}$) & 350 & 4 & 323 & -67.327 & 0\\*
6 & ite($ols_{cs}, struc_{te}$) & 350 & 17 & 323 & -53.857 & 0\\*
6 & oct($struc$) & 350 & 16 & 322 & -63.281 & 0\\*
6 & te($ols_2$)+cs($ols$) & 347 & 0 & 324 & -133.572 & 0\\*
6 & te($struc_2$)+cs($ols$) & 347 & 0 & 324 & -133.571 & 0\\*
6 & te($ols_2$)+cs($struc$) & 347 & 0 & 324 & -89.955 & 0\\*
6 & te($struc_2$)+cs($struc$) & 349 & 0 & 324 & -88.924 & 0\\*
\midrule
8 & base & 212 & 0 & 238 & -866.818 & 0.000\\*
8 & oct($ols$) & 300 & 0 & 305 & -78.762 & 0.000\\*
8 & oct($struc$) & 276 & 0 & 324 & -51.806 & 0.000\\*
8 & ite($struc_{cs}, ols_{te}$) & 296 & 0 & 272 & -53.688 & 0.000\\*
8 & ite($ols_{cs}, struc_{te}$) & 274 & 0 & 319 & -42.734 & 0.000\\*
8 & te($ols_2$)+cs($ols$) & 307 & 0 & 320 & -680.766 & 0.000\\*
8 & te($struc_2$)+cs($ols$) & 285 & 0 & 324 & -680.488 & 0.000\\*
8 & te($ols_2$)+cs($struc$) & 300 & 0 & 241 & -56.011 & 0.000\\*
8 & te($struc_2$)+cs($struc$) & 284 & 0 & 268 & -52.341 & 0.000\\*
\midrule
12 & base & 46 & 0 & 57 & -59.405 & -0.001\\*
12 & oct($ols$) & 14 & 0 & 78 & -15.328 & -0.001\\*
12 & ite($struc_{cs}, ols_{te}$) & 13 & 0 & 45 & -7.949 & -0.004\\*
12 & ite($ols_{cs}, struc_{te}$) & 8 & 0 & 31 & -6.672 & -0.001\\*
12 & oct($struc$) & 6 & 0 & 18 & -0.537 & -0.009\\*
12 & te($ols_2$)+cs($ols$) & 35 & 0 & 95 & -32.259 & 0.000\\*
12 & te($struc_2$)+cs($ols$) & 26 & 0 & 100 & -32.191 & 0.000\\*
12 & te($ols_2$)+cs($struc$) & 19 & 0 & 75 & -9.799 & 0.000\\*
12 & te($struc_2$)+cs($struc$) & 12 & 0 & 69 & -6.544 & -0.002\\*
\midrule
24 & base & 11 & 0 & 35 & -51.205 & -0.006\\*
24 & oct($ols$) & 11 & 0 & 60 & -29.615 & -0.001\\*
24 & ite($struc_{cs}, ols_{te}$) & 11 & 0 & 33 & -10.363 & 0.000\\*
24 & ite($ols_{cs}, struc_{te}$) & 6 & 0 & 28 & -8.915 & -0.008\\*
24 & oct($struc$) & 4 & 0 & 11 & -1.020 & -0.026\\*
24 & te($ols_2$)+cs($ols$) & 10 & 0 & 72 & -15.037 & -0.018\\*
24 & te($struc_2$)+cs($ols$) & 10 & 0 & 71 & -14.791 & -0.005\\*
24 & te($ols_2$)+cs($struc$) & 10 & 0 & 43 & -12.969 & -0.005\\*
24 & te($struc_2$)+cs($struc$) & 5 & 0 & 31 & -7.042 & -0.002\\*

\end{longtable}
\end{footnotesize}
\end{center}


\begin{center}
\begin{footnotesize}
\begin{longtable}{lllccccccccc}
	\caption{Number of replications with at least one negative reconciled forecast.
		All temporal aggregation orders, forecast horizon: operating day.}
	\label{table_nnrep_Yagli_original_appendix} \\

\toprule
$k$ & method & comb & type & \#L1 & \#L2 & \#L3 & \#tot & \%L1 & \%L2 & \%L3 & \%tot\\
\midrule
\endfirsthead

\multicolumn{12}{c}%
{{\bfseries \tablename\ \thetable{} -- continued from previous page}} \\
\toprule
$k$ & method & comb & type & \#L1 & \#L2 & \#L3 & \#tot & \%L1 & \%L2 & \%L3 & \%tot\\
\midrule
\endhead

\bottomrule \multicolumn{12}{r}{{Continued on next page}} \\ 
\endfoot

\bottomrule
\endlastfoot

1 & base & 1 & 5 & 0 & 6 & 100.00\% & 100.00\% & 0.00\% & 1.85\%\\*
1 & oct($ols$) & 1 & 5 & 318 & 324 & 100.00\% & 100.00\% & 100.00\% & 100.00\%\\*
1 & ite($struc_{cs}, ols_{te}$ & 1 & 5 & 318 & 324 & 100.00\% & 100.00\% & 100.00\% & 100.00\%\\*
1 & ite($ols_{cs}, struc_{te}$) & 1 & 5 & 318 & 324 & 100.00\% & 100.00\% & 100.00\% & 100.00\%\\*
1 & oct($struc$) & 1 & 5 & 318 & 324 & 100.00\% & 100.00\% & 100.00\% & 100.00\%\\*
1 & te($ols_2$)+cs($ols$) & 1 & 5 & 318 & 324 & 100.00\% & 100.00\% & 100.00\% & 100.00\%\\*
1 & te($struc_2$)+cs($ols$) & 1 & 5 & 318 & 324 & 100.00\% & 100.00\% & 100.00\% & 100.00\%\\*
1 & te($ols_2$)+cs($struc$) & 1 & 5 & 318 & 324 & 100.00\% & 100.00\% & 100.00\% & 100.00\%\\*
1 & te($struc_2$)+cs($struc$) & 1 & 5 & 318 & 324 & 100.00\% & 100.00\% & 100.00\% & 100.00\%\\*
\midrule
2 & base & 1 & 5 & 318 & 324 & 100.00\% & 100.00\% & 100.00\% & 100.00\%\\*
2 & oct($ols$) & 1 & 5 & 318 & 324 & 100.00\% & 100.00\% & 100.00\% & 100.00\%\\*
2 & ite($struc_{cs}, ols_{te}$ & 1 & 5 & 318 & 324 & 100.00\% & 100.00\% & 100.00\% & 100.00\%\\*
2 & ite($ols_{cs}, struc_{te}$) & 1 & 5 & 318 & 324 & 100.00\% & 100.00\% & 100.00\% & 100.00\%\\*
2 & oct($struc$) & 1 & 5 & 318 & 324 & 100.00\% & 100.00\% & 100.00\% & 100.00\%\\*
2 & te($ols_2$)+cs($ols$) & 1 & 5 & 318 & 324 & 100.00\% & 100.00\% & 100.00\% & 100.00\%\\*
2 & te($ols_2$)+cs($struc$) & 1 & 5 & 318 & 324 & 100.00\% & 100.00\% & 100.00\% & 100.00\%\\*
2 & te($struc_2$)+cs($ols$) & 1 & 5 & 318 & 324 & 100.00\% & 100.00\% & 100.00\% & 100.00\%\\*
2 & te($struc_2$)+cs($struc$) & 1 & 5 & 318 & 324 & 100.00\% & 100.00\% & 100.00\% & 100.00\%\\*
\midrule
3 & base & 1 & 5 & 318 & 324 & 100.00\% & 100.00\% & 100.00\% & 100.00\%\\*
3 & oct($ols$) & 1 & 5 & 318 & 324 & 100.00\% & 100.00\% & 100.00\% & 100.00\%\\*
3 & ite($struc_{cs}, ols_{te}$ & 1 & 5 & 318 & 324 & 100.00\% & 100.00\% & 100.00\% & 100.00\%\\*
3 & ite($ols_{cs}, struc_{te}$) & 1 & 5 & 318 & 324 & 100.00\% & 100.00\% & 100.00\% & 100.00\%\\*
3 & oct($struc$) & 1 & 5 & 318 & 324 & 100.00\% & 100.00\% & 100.00\% & 100.00\%\\*
3 & te($ols_2$)+cs($ols$) & 1 & 5 & 318 & 324 & 100.00\% & 100.00\% & 100.00\% & 100.00\%\\*
3 & te($struc_2$)+cs($ols$) & 1 & 5 & 318 & 324 & 100.00\% & 100.00\% & 100.00\% & 100.00\%\\*
3 & te($ols_2$)+cs($struc$) & 1 & 5 & 318 & 324 & 100.00\% & 100.00\% & 100.00\% & 100.00\%\\*
3 & te($struc_2$)+cs($struc$) & 1 & 5 & 318 & 324 & 100.00\% & 100.00\% & 100.00\% & 100.00\%\\*
\midrule
4 & base & 1 & 5 & 318 & 324 & 100.00\% & 100.00\% & 100.00\% & 100.00\%\\*
4 & oct($ols$) & 1 & 5 & 318 & 324 & 100.00\% & 100.00\% & 100.00\% & 100.00\%\\*
4 & ite($struc_{cs}, ols_{te}$ & 1 & 5 & 318 & 324 & 100.00\% & 100.00\% & 100.00\% & 100.00\%\\*
4 & ite($ols_{cs}, struc_{te}$) & 1 & 5 & 318 & 324 & 100.00\% & 100.00\% & 100.00\% & 100.00\%\\*
4 & oct($struc$) & 1 & 5 & 318 & 324 & 100.00\% & 100.00\% & 100.00\% & 100.00\%\\*
4 & te($ols_2$)+cs($ols$) & 1 & 5 & 318 & 324 & 100.00\% & 100.00\% & 100.00\% & 100.00\%\\*
4 & te($struc_2$)+cs($ols$) & 1 & 5 & 318 & 324 & 100.00\% & 100.00\% & 100.00\% & 100.00\%\\*
4 & te($ols_2$)+cs($struc$) & 1 & 5 & 318 & 324 & 100.00\% & 100.00\% & 100.00\% & 100.00\%\\*
4 & te($struc_2$)+cs($struc$) & 1 & 5 & 318 & 324 & 100.00\% & 100.00\% & 100.00\% & 100.00\%\\*
\midrule
6 & base & 1 & 5 & 318 & 324 & 100.00\% & 100.00\% & 100.00\% & 100.00\%\\*
6 & oct($ols$) & 1 & 5 & 318 & 324 & 100.00\% & 100.00\% & 100.00\% & 100.00\%\\*
6 & ite($struc_{cs}, ols_{te}$ & 1 & 5 & 318 & 324 & 100.00\% & 100.00\% & 100.00\% & 100.00\%\\*
6 & ite($ols_{cs}, struc_{te}$) & 1 & 5 & 318 & 324 & 100.00\% & 100.00\% & 100.00\% & 100.00\%\\*
6 & oct($struc$) & 1 & 5 & 318 & 324 & 100.00\% & 100.00\% & 100.00\% & 100.00\%\\*
6 & te($ols_2$)+cs($ols$) & 1 & 5 & 318 & 324 & 100.00\% & 100.00\% & 100.00\% & 100.00\%\\*
6 & te($struc_2$)+cs($ols$) & 1 & 5 & 318 & 324 & 100.00\% & 100.00\% & 100.00\% & 100.00\%\\*
6 & te($ols_2$)+cs($struc$) & 1 & 5 & 318 & 324 & 100.00\% & 100.00\% & 100.00\% & 100.00\%\\*
6 & te($struc_2$)+cs($struc$) & 1 & 5 & 318 & 324 & 100.00\% & 100.00\% & 100.00\% & 100.00\%\\*
\midrule
8 & base & 1 & 3 & 318 & 322 & 100.00\% & 60.00\% & 100.00\% & 99.38\%\\*
8 & oct($ols$) & 1 & 5 & 318 & 324 & 100.00\% & 100.00\% & 100.00\% & 100.00\%\\*
8 & ite($struc_{cs}, ols_{te}$ & 1 & 5 & 318 & 324 & 100.00\% & 100.00\% & 100.00\% & 100.00\%\\*
8 & ite($ols_{cs}, struc_{te}$) & 1 & 5 & 318 & 324 & 100.00\% & 100.00\% & 100.00\% & 100.00\%\\*
8 & oct($struc$) & 1 & 5 & 318 & 324 & 100.00\% & 100.00\% & 100.00\% & 100.00\%\\*
8 & te($ols_2$)+cs($ols$) & 1 & 5 & 318 & 324 & 100.00\% & 100.00\% & 100.00\% & 100.00\%\\*
8 & te($struc_2$)+cs($ols$) & 1 & 5 & 318 & 324 & 100.00\% & 100.00\% & 100.00\% & 100.00\%\\*
8 & te($ols_2$)+cs($struc$) & 1 & 5 & 318 & 324 & 100.00\% & 100.00\% & 100.00\% & 100.00\%\\*
8 & te($struc_2$)+cs($struc$) & 1 & 5 & 318 & 324 & 100.00\% & 100.00\% & 100.00\% & 100.00\%\\*
\midrule
12 & base & 0 & 2 & 135 & 137 & 0.00\% & 40.00\% & 42.45\% & 42.28\%\\*
12 & oct($ols$) & 0 & 1 & 162 & 163 & 0.00\% & 20.00\% & 50.94\% & 50.31\%\\*
12 & ite($struc_{cs}, ols_{te}$ & 0 & 1 & 99 & 100 & 0.00\% & 20.00\% & 31.13\% & 30.86\%\\*
12 & ite($ols_{cs}, struc_{te}$) & 0 & 1 & 67 & 68 & 0.00\% & 20.00\% & 21.07\% & 20.99\%\\*
12 & oct($struc$) & 0 & 0 & 28 & 28 & 0.00\% & 0.00\% & 8.81\% & 8.64\%\\*
12 & te($ols_2$)+cs($ols$) & 0 & 2 & 172 & 174 & 0.00\% & 40.00\% & 54.09\% & 53.70\%\\*
12 & te($struc_2$)+cs($ols$) & 0 & 2 & 156 & 158 & 0.00\% & 40.00\% & 49.06\% & 48.77\%\\*
12 & te($ols_2$)+cs($struc$) & 0 & 1 & 126 & 127 & 0.00\% & 20.00\% & 39.62\% & 39.20\%\\*
12 & te($struc_2$)+cs($struc$) & 0 & 1 & 105 & 106 & 0.00\% & 20.00\% & 33.02\% & 32.72\%\\*
\midrule
24 & base & 0 & 2 & 58 & 60 & 0.00\% & 40.00\% & 18.24\% & 18.52\%\\*
24 & oct($ols$) & 0 & 1 & 134 & 135 & 0.00\% & 20.00\% & 42.14\% & 41.67\%\\*
24 & ite($struc_{cs}, ols_{te}$ & 0 & 1 & 72 & 73 & 0.00\% & 20.00\% & 22.64\% & 22.53\%\\*
24 & ite($ols_{cs}, struc_{te}$) & 0 & 1 & 47 & 48 & 0.00\% & 20.00\% & 14.78\% & 14.81\%\\*
24 & oct($struc$) & 0 & 0 & 17 & 17 & 0.00\% & 0.00\% & 5.35\% & 5.25\%\\*
24 & te($ols_2$)+cs($ols$) & 0 & 1 & 121 & 122 & 0.00\% & 20.00\% & 38.05\% & 37.65\%\\*
24 & te($struc_2$)+cs($ols$) & 0 & 1 & 102 & 103 & 0.00\% & 20.00\% & 32.08\% & 31.79\%\\*
24 & te($ols_2$)+cs($struc$) & 0 & 1 & 80 & 81 & 0.00\% & 20.00\% & 25.16\% & 25.00\%\\*
24 & te($struc_2$)+cs($struc$) & 0 & 1 & 41 & 42 & 0.00\% & 20.00\% & 12.89\% & 12.96\%\\*

\end{longtable}	
\end{footnotesize}
\end{center}

\begin{figure}[H]
	\centering
	\includegraphics[width=0.9\linewidth]{img/comparison_PERS_vs_ct-struc_asp_0.pdf}
	\includegraphics[width=0.9\linewidth]{img/comparison_3TIER_vs_ct-struc_asp_0.pdf}
	\caption{Comparison of nRMSE(\%) between PERS$_{BU}$ and not negative oct-struc (top panel), and between 3TIER$_{BU}$ and not negative oct-struc (bottom panel). The black line represents the bisector, where the nRMSE's for both approaches are equal. On the top-left (bottom-right) corner of each graph, the percentage of points above (below) the bisector is reported.}
	\label{fig:comparison_ct-struc_vs_PERS_3TIER}
\end{figure}

%
%
%
%

%


\bibliography{mybibfile}

\section*{Acknowldgements}
\vspace{-.3cm}

\noindent \textit{Parts of this research were carried in the frame of the PRIN2017 project “HiDEA: Advanced Econometrics for High-frequency Data”, 2017RSMPZZ.}

\clearpage
\setcounter{figure}{0}
\setcounter{table}{0}
\setcounter{equation}{0}
\renewcommand\theequation{OA.\arabic{equation}}  
\renewcommand\thefigure{OA.\arabic{figure}}  
\renewcommand\thetable{OA.\arabic{table}}

{\huge On-line appendix: supplementary tables and graphs}
\addcontentsline{toc}{section}{On-line appendix: supplementary tables and graphs}
\label{AuxTab}


\begin{center}
\begin{footnotesize}
\begin{longtable}{rlccccc}
\caption{Number of series with at least one negative reconciled forecast.
		All temporal aggregation orders, forecast horizon: operating day.}
	\label{table_nn_Yagli_original_appendix} \\
\toprule
k & comb & n & min & max & fmin & fmax\\
\midrule
\endfirsthead

\multicolumn{7}{c}%
{{\bfseries \tablename\ \thetable{} -- continued from previous page}} \\
\toprule
k & comb & n & min & max & fmin & fmax\\
\midrule
\endhead

\bottomrule \multicolumn{7}{r}{{Continued on next page}} \\ 
\endfoot

\bottomrule
\endlastfoot

\end{longtable}
\end{footnotesize}
\end{center}


\begin{center}
\begin{footnotesize}
\begin{longtable}{llcccccccc}
	\caption{Number of replications with at least one negative reconciled forecast.
		All temporal aggregation orders, forecast horizon: operating day.}
	\label{table_nnrep_Yagli_original_appendix} \\

\toprule
$k$ & comb & \#$\mathcal{L}_1$ & \#$\mathcal{L}_2$ & \#$\mathcal{L}_3$ & \#tot & \#$\mathcal{L}_1$ & \#$\mathcal{L}_2$ & \#$\mathcal{L}_3$ & \%tot\\
\midrule
\endfirsthead

\multicolumn{10}{c}%
{{\bfseries \tablename\ \thetable{} -- continued from previous page}} \\
\toprule
$k$ & comb & \#$\mathcal{L}_1$ & \#$\mathcal{L}_2$ & \#$\mathcal{L}_3$ & \#tot & \#$\mathcal{L}_1$ & \#$\mathcal{L}_2$ & \#$\mathcal{L}_3$ & \%tot\\
\midrule
\endhead

\bottomrule \multicolumn{10}{r}{{Continued on next page}} \\ 
\endfoot

\bottomrule
\endlastfoot

\end{longtable}	
\end{footnotesize}
\end{center}

\begin{figure}[H]
	\centering
	\includegraphics[width=0.9\linewidth]{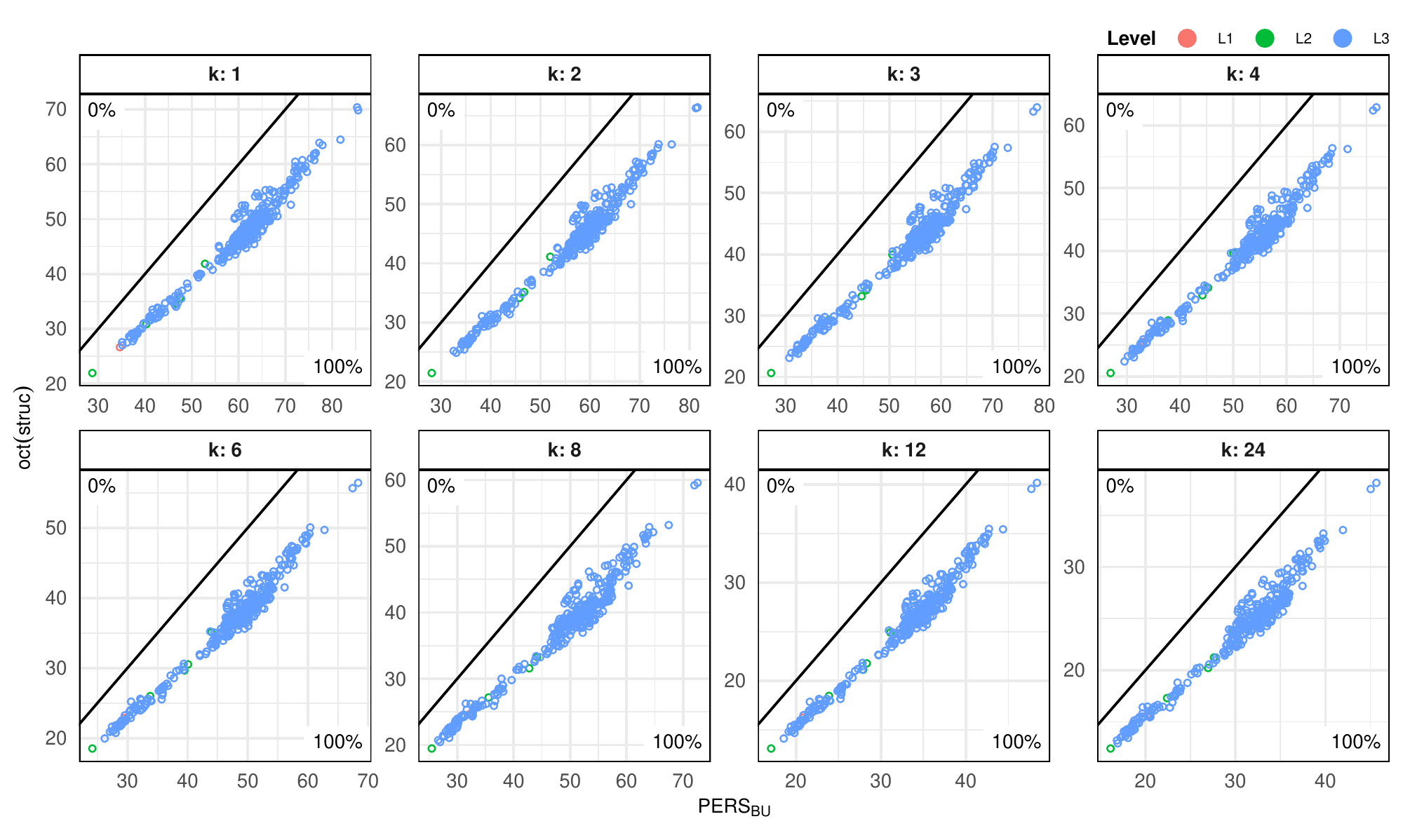}
	\includegraphics[width=0.9\linewidth]{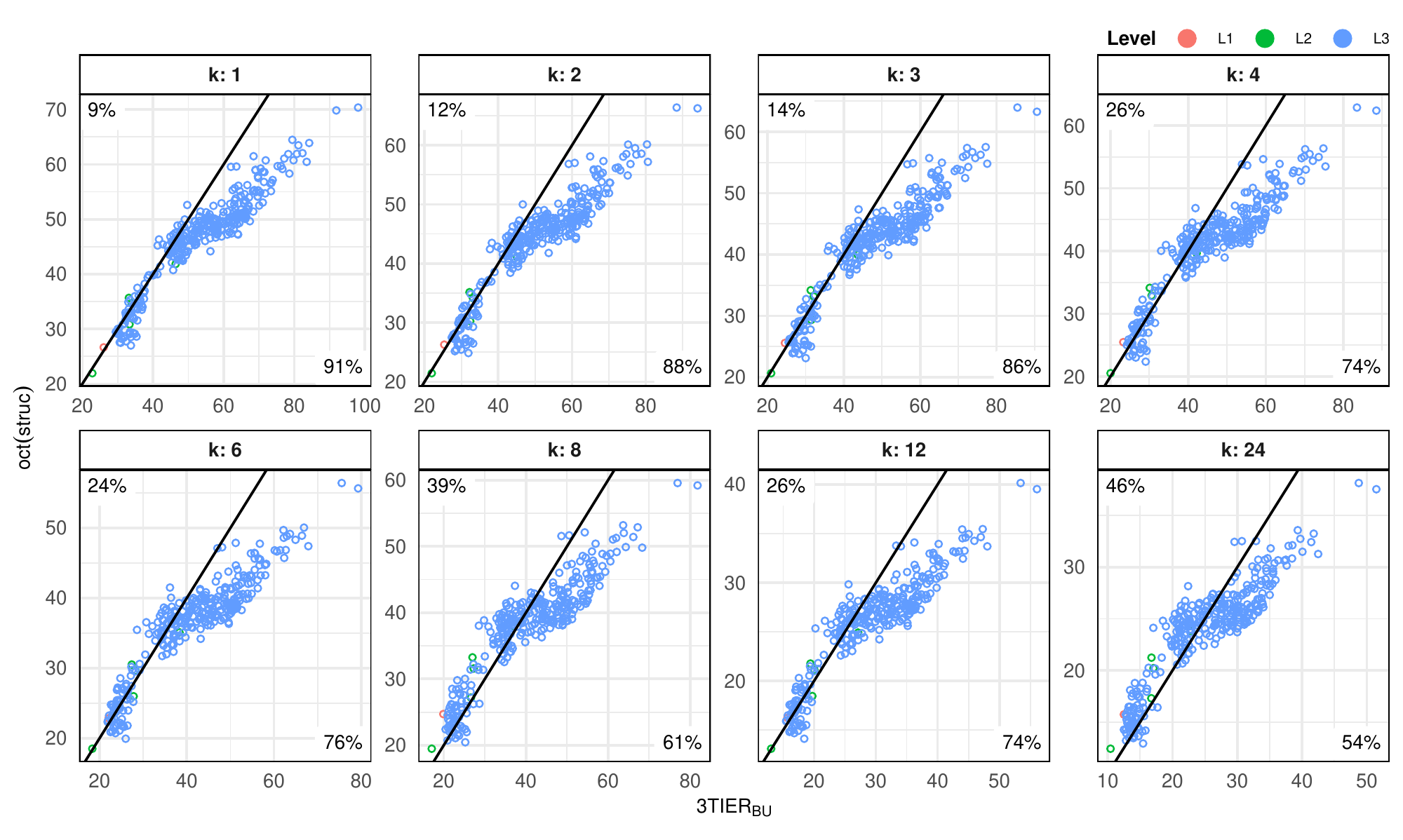}
	\caption{Comparison of nRMSE(\%) between PERS$_{BU}$ and not negative oct-struc (top panel), and between 3TIER$_{BU}$ and not negative oct-struc (bottom panel). The black line represents the bisector, where the nRMSE's for both approaches are equal. On the top-left (bottom-right) corner of each graph, the percentage of points above (below) the bisector is reported.}
	\label{fig:comparison_ct-struc_vs_PERS_3TIER}
\end{figure}

%
%
%
%

\newpage

A general overview of the accuracy of the considered procedures at any tempural granularity is given by table \ref{table_Yagli_original_appendix}, while a visual confirmation of the performance of ct-struc is shown by the MCB graphs in figure 5 in the main paper. 
The ct-struc daily reconciled forecasts show the best performance, significantly different from all the other procedures, while at hourly level it ranks second, the best being t-struc$_2$ + cs-struc. However, t-struc$_2$ + cs-struc produces temporally incoherent forecasts, and a punctual assessment of the performance of these two procedures in terms of nRMSE (figure \ref{fig:comparison_ct-struc_vs_tcs-struc2-struc}) confirms that their forecasting accuracy is about the same.

\begin{table}[H]
	\caption{Forecast accuracies of sequential reconciliation methods and base forecasts in terms of nRMSE(\%). Free reconciliation approaches considered by Yagli et al. (2019), Tables 2, 3, p. 395.
		All temporal aggregation orders, forecast horizon: operating day.}
	\label{table_Yagli_original_appendix}
	\begin{center}
		\begin{footnotesize}
		\begin{tabular}[t]{llcccccccc}
\toprule
$\mathcal{L}$ & comb & 1 & 2 & 3 & 4 & 6 & 8 & 12 & 24\\
\midrule
1 & PERS$_{bu}$ & 34.62 & 34.14 & 33.35 & 33.11 & 29.45 & 32.23 & 20.83 & 20.23\\
1 & 3TIER$_{bu}$ & \textbf{26.03} & \textbf{25.34} & \textbf{24.56} & \textbf{23.31} & \textbf{22.09} & \textbf{19.89} & \textbf{15.70} & \textbf{12.48}\\
1 & base & 27.85 & 27.55 & 28.31 & 29.69 & 28.01 & 35.32 & 21.04 & 18.17\\
1 &oct($ols$)& 30.69 & 29.96 & 28.29 & 28.86 & 25.26 & 28.12 & 18.39 & 17.91\\
1 &ite($struc_{cs}, ols_{te}$) & 28.74 & 28.25 & 27.14 & 27.38 & 24.34 & 26.64 & 17.82 & 17.33\\
1 & ite($ols_{cs}, struc_{te}$) & 28.26 & 27.84 & 27.09 & 27.00 & 24.17 & 26.24 & 17.46 & 16.96\\
1 & oct($struc$) & 26.71 & 26.35 & 25.67 & 25.57 & 23.03 & 24.80 & 16.74 & 16.24\\
1 & te($ols_2$)+cs($ols$) & 27.80 & 27.41 & 27.91 & 29.38 & 27.39 & 34.11 & 20.80 & 17.96\\
1 &  te($struc_2$)+cs($ols$) & 27.80 & 27.41 & 27.90 & 29.37 & 27.39 & 34.10 & 20.80 & 17.95\\
1 & te($ols_2$)+cs($struc$) & 27.02 & 26.73 & 26.73 & 27.75 & 25.13 & 28.93 & 19.11 & 17.24\\
1 & te($struc_2$)+cs($struc$) & 26.17 & 25.86 & 25.83 & 26.80 & 24.37 & 27.95 & 18.58 & 16.68\\
\midrule
2 & PERS$_{bu}$ & 43.15 & 42.42 & 41.27 & 40.81 & 36.29 & 39.19 & 25.66 & 24.57\\
2 & 3TIER$_{bu}$ & 33.95 & 32.94 & 31.90 & \textbf{30.62} & \textbf{28.02} & \textbf{26.99} & \textbf{19.87} & \textbf{16.75}\\
2 & base & 34.24 & 34.04 & 33.79 & 35.01 & 33.11 & 40.51 & 25.16 & 20.94\\
2 &oct($ols$)& 39.17 & 37.97 & 35.24 & 36.12 & 31.27 & 34.80 & 22.58 & 21.74\\
2 &ite($struc_{cs}, ols_{te}$) & 35.48 & 34.74 & 33.24 & 33.44 & 29.65 & 32.16 & 21.66 & 20.82\\
2 & ite($ols_{cs}, struc_{te}$) & 35.19 & 34.57 & 33.48 & 33.28 & 29.74 & 31.95 & 21.47 & 20.61\\
2 & oct($struc$) & 33.11 & 32.54 & \textbf{31.60} & 31.37 & 28.19 & 30.06 & 20.49 & 19.64\\
2 & te($ols_2$)+cs($ols$) & 34.36 & 34.52 & 35.40 & 36.29 & 35.64 & 45.42 & 26.24 & 21.80\\
2 &  te($struc_2$)+cs($ols$) & 34.34 & 34.50 & 35.38 & 36.26 & 35.62 & 45.40 & 26.23 & 21.79\\
2 & te($ols_2$)+cs($struc$) & 33.40 & 33.07 & 32.82 & 33.64 & 30.70 & 35.26 & 23.20 & 20.62\\
2 & te($struc_2$)+cs($struc$) & \textbf{32.44} & \textbf{32.07} & 31.80 & 32.54 & 29.81 & 34.14 & 22.59 & 19.96\\
\midrule
3 & PERS$_{bu}$ & 59.75 & 56.81 & 54.23 & 52.87 & 46.82 & 49.07 & 33.12 & 30.65\\
3 & 3TIER$_{bu}$ & 53.46 & 50.57 & 48.33 & 46.19 & 41.36 & 40.72 & 29.28 & 25.19\\
3 & base & 53.46 & 47.22 & 44.87 & 44.30 & 39.68 & 42.66 & 30.92 & 25.82\\
3 & oct($ols$)& 51.54 & 48.48 & 45.06 & 45.17 & 39.41 & 42.39 & 28.50 & 26.73\\
3 &ite($struc_{cs}, ols_{te}$) & 49.32 & 46.47 & 43.80 & 43.45 & 38.38 & 40.64 & 27.92 & 26.14\\
3 & ite($ols_{cs}, struc_{te}$) & 48.20 & 45.50 & 43.45 & 42.51 & 37.83 & 39.64 & 27.28 & 25.46\\
3 & oct($struc$) & 46.74 & 44.02 & \textbf{42.05} & \textbf{41.07} & \textbf{36.67} & \textbf{38.16} & \textbf{26.56} & \textbf{24.73}\\
3 & te($ols_2$)+cs($ols$) & 48.52 & 46.20 & 45.02 & 45.58 & 42.49 & 50.21 & 31.50 & 26.71\\
3 &  te($struc_2$)+cs($ols$) & 47.77 & 45.52 & 44.64 & 44.96 & 42.07 & 49.67 & 31.19 & 26.36\\
3 & te($ols_2$)+cs($struc$) & 47.76 & 45.19 & 43.49 & 43.72 & 39.27 & 43.19 & 29.26 & 25.98\\
3 & te($struc_2$)+cs($struc$) & \textbf{46.25} & \textbf{43.69} & 42.24 & 42.12 & 38.05 & 41.53 & 28.38 & 25.03\\
\bottomrule
\end{tabular}
		\end{footnotesize}
	\end{center}
\end{table}

\begin{figure}[H]
	\centering
	\includegraphics[width=0.9\linewidth]{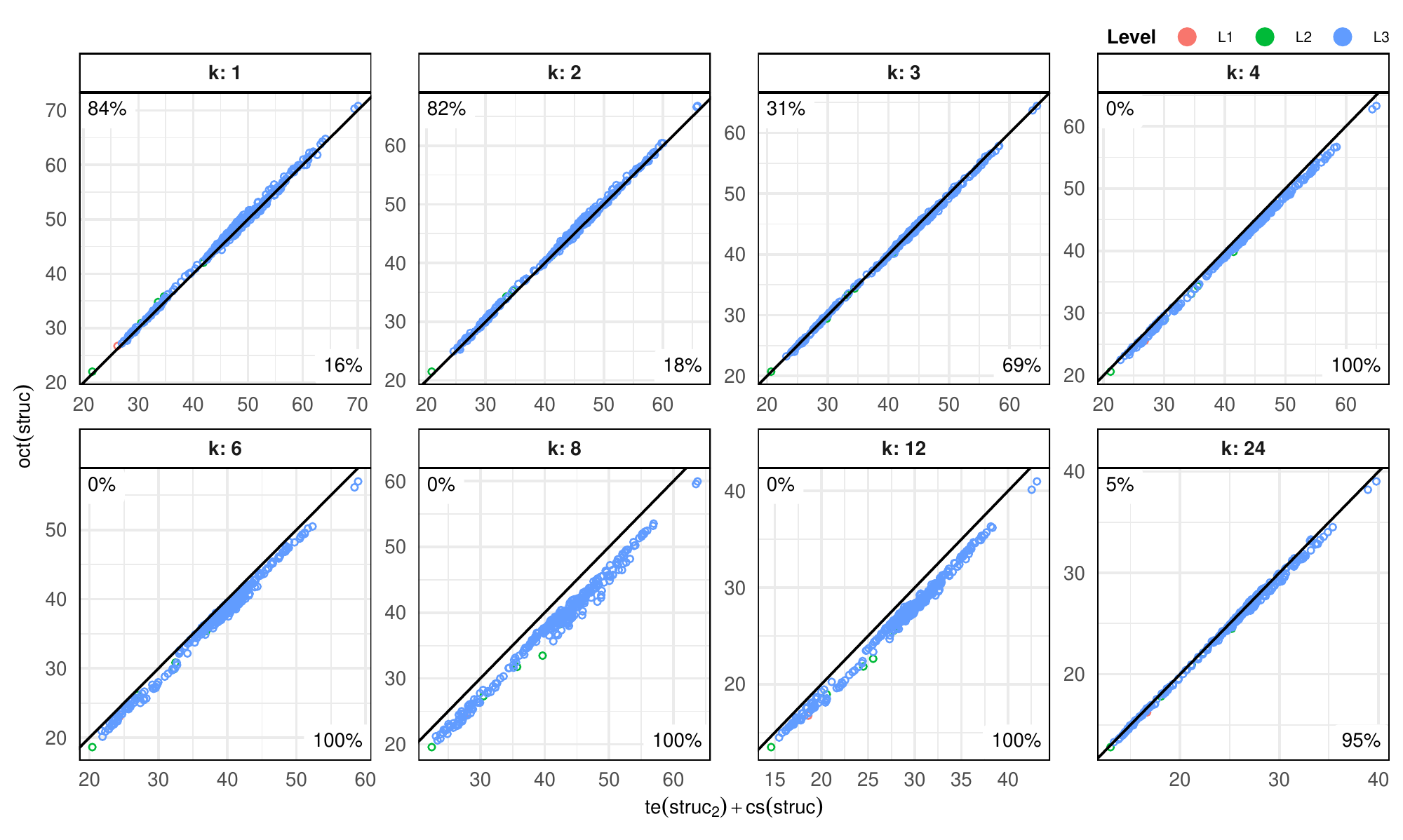}
	\includegraphics[width=0.9\linewidth]{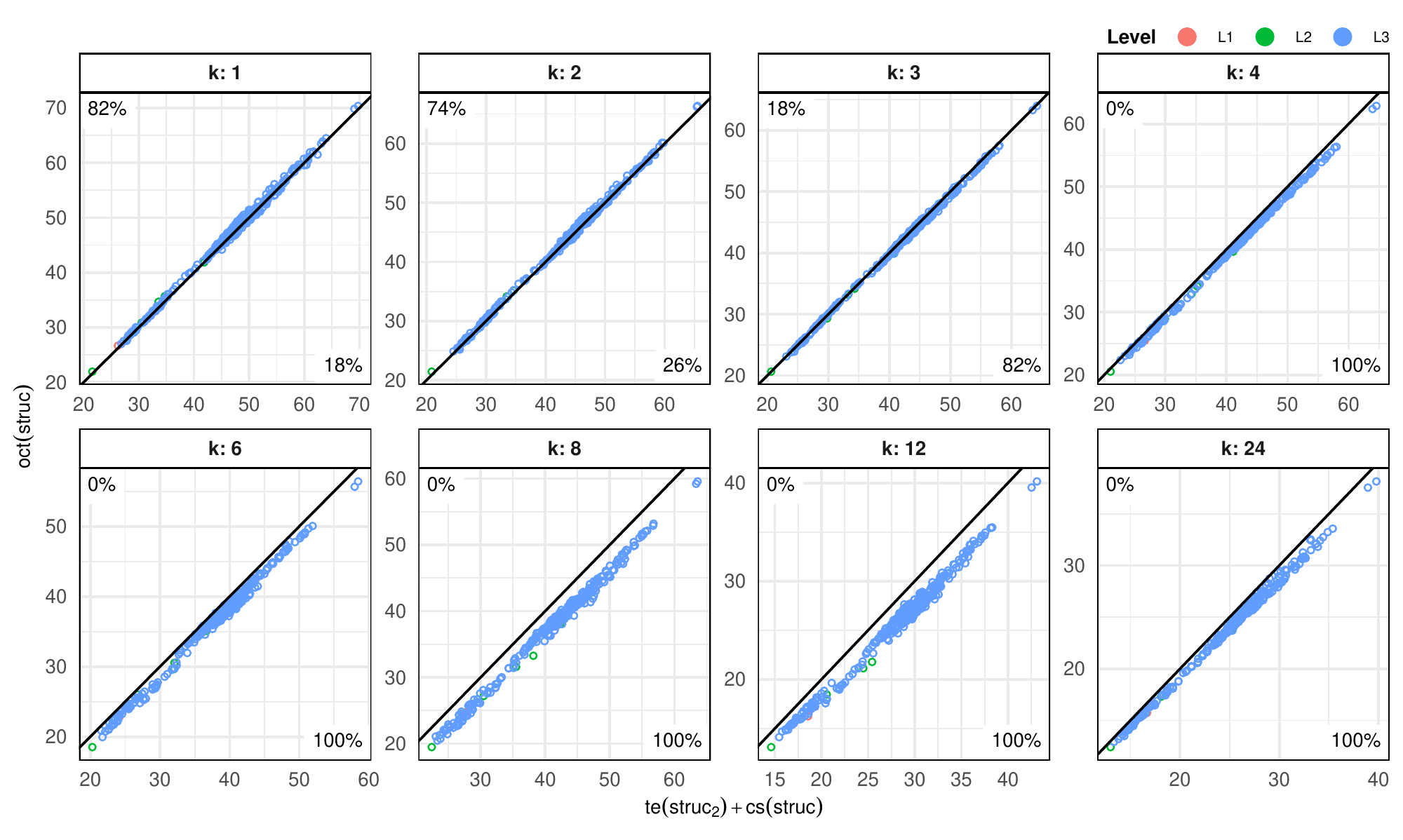}
	\caption{Comparison of nRMSE(\%) between t-struc$_2$ + cs-struc and ct-struc (top panel), and the not negative version (bottom pannel). The black line represents the bisector, where the nRMSE's for both approaches are equal.}
	\label{fig:comparison_ct-struc_vs_tcs-struc2-struc}
\end{figure}

\begin{figure}[H]
	\centering
	\includegraphics[width=1.00\linewidth]{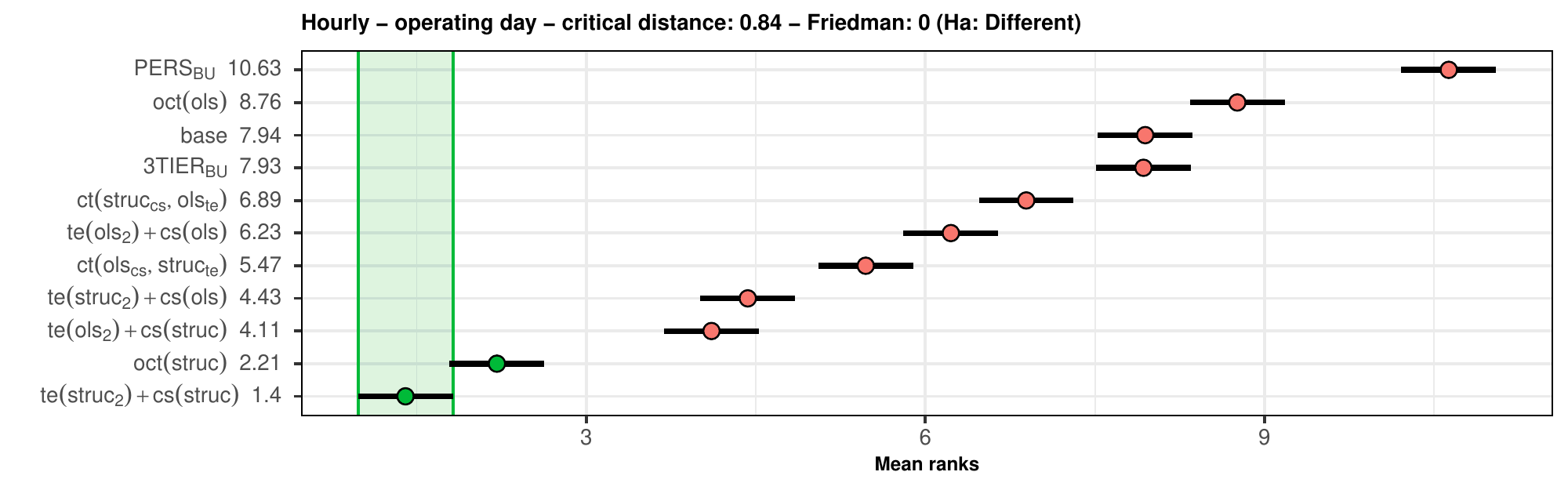}
	\includegraphics[width=1.00\linewidth]{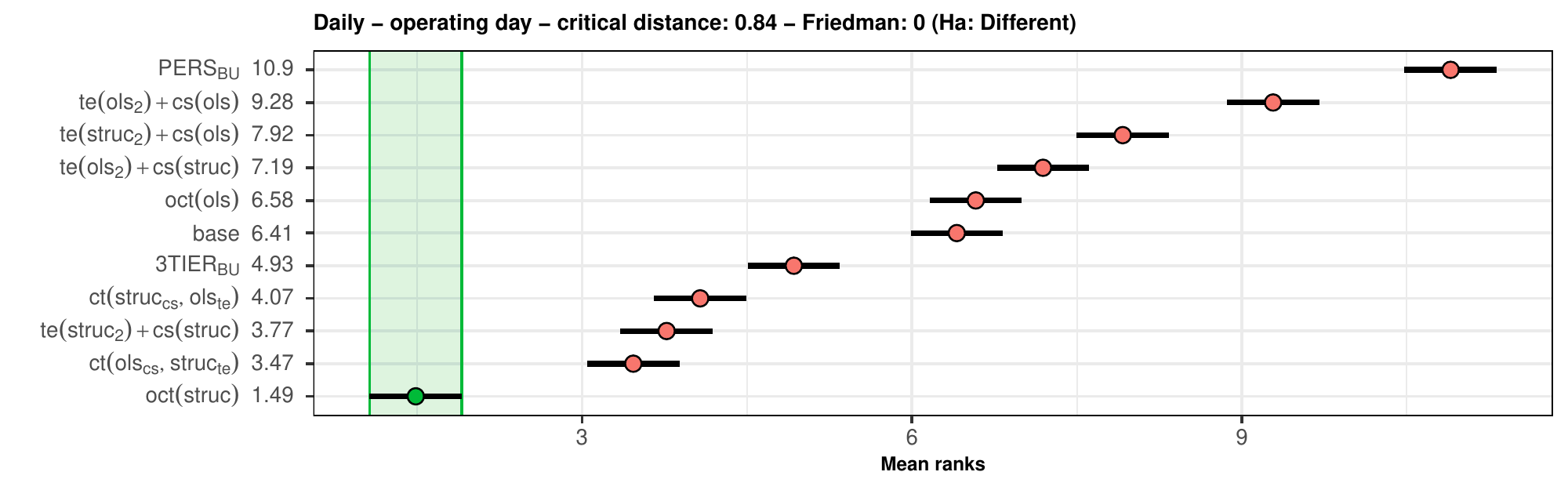}
	\caption{MCB-Nemenyi test on Yang cross-temporal reconciliation approaches with operating day forecast horizon. $\mathcal{L}_0, \mathcal{L}_1, \mathcal{L}_2$ levels (324 series). Top panel: hourly forecasts; Bottom panel: daily forecasts.
			The mean rank of each approach is displayed to the right of their names. If the intervals of two forecast reconciliation approaches do not overlap, this indicates a statistically different performance. Thus, approaches that do not overlap with the green interval are considered significantly worse than the best, and vice-versa.}
	\label{fig:mcb_yang_0}
\end{figure}

\end{document}